\documentclass [10pt]{article}

\usepackage[utf8]{inputenc}
\usepackage[T1]{fontenc} % conseillé pour les pdfs je crois...
\usepackage[french,english]{babel}

\usepackage{amssymb}
\usepackage{amsfonts}
\usepackage{graphicx}
\usepackage{amsmath,empheq}
\usepackage{empheq}

\usepackage{xcolor}
\usepackage{here}
%pour numeroter les lignes
\usepackage{lineno}

\setcounter{MaxMatrixCols}{10}
%TCIDATA{OutputFilter=LATEX.DLL}
%TCIDATA{Version=5.50.0.2890}
%TCIDATA{<META NAME="SaveForMode" CONTENT="1">}
%TCIDATA{BibliographyScheme=Manual}
%TCIDATA{Created=Wed Aug 18 09:16:04 2004}
%TCIDATA{LastRevised=Wednesday, March 11, 2009 14:57:25}
%TCIDATA{<META NAME="GraphicsSave" CONTENT="32">}
%TCIDATA{<META NAME="DocumentShell" CONTENT="General\Blank Document">}
%TCIDATA{Language=American English}
%TCIDATA{CSTFile=LaTeX article (bright).cst}

\newtheorem{theorem}{Theorem}[section]

\newtheorem{claim}[theorem]{Claim}

\newtheorem{lemma}[theorem]{Lemma}

\newenvironment{proof}[1][Proof]{\noindent\textit{#1.} }{\hfill \rule{0.5em}{0.5em}}

\newcommand{\R}{\mathbb{R}}

\newcommand{\I}{\mathbb{I}}

\renewcommand{\eqref}[1]{{\rm(\ref{#1})}}
\renewcommand{\d}{{\rm d}}

%pour les commentaires

%\PassOptionsToPackage{final}{changes}%to show only final version
\usepackage{changes}%show all track changes

\definechangesauthor[name={Samuel Alizon}, color=red]{SA}
\definechangesauthor[name={Mircea Sofonea}, color=blue]{MS}
\definechangesauthor[name={RamsesDD}, color=orange]{RDD}

\title{\textsc %Analysis of a generic model of parasite strains interactions through resource consumption, mutational processes and plasmids transfer\\ OR \\
Within-host bacterial growth dynamics with both mutation and horizontal gene transfer
}
\author{Ramsès Djidjou-Demasse$^{a,*}$, Samuel Alizon$^{a}$, Mircea T. Sofonea$^{a}$\\
	{\small $^{a}$ MIVEGEC, IRD, CNRS, Univ. Montpellier, Montpellier, France
	\footnote{Author for correspondence: ramses.djidjoudemasse@ird.fr}}
}

\begin{document}
%\linenumbers
\maketitle
\begin{abstract}
The evolution and emergence of antibiotic resistance is a major public health concern. The understanding of the within-host microbial dynamics combining  mutational processes, horizontal gene transfer and resource consumption, is one of the keys to solve this problem. We analyze a generic model to rigorously describe interactions dynamics of four bacterial strains: one fully sensitive to the drug, one with mutational resistance only, one with plasmidic resistance only and one with both resistances. By defining thresholds numbers (i.e.~each strain’s effective reproduction and each strain’s transition thresholds numbers), we first express conditions for the existence of non trivial stationary states. We find that these thresholds mainly depend on bacteria quantitative traits such as nutrient consumption ability, growth conversion factor, death rate, mutation (forward or reverse) and segregational loss of plasmid probabilities (for plasmid-bearing strains).  Next, with respect to the order in the set of  strain’s effective reproduction thresholds numbers, we show that the qualitative dynamics of the model range from the extinction of all strains, coexistence of sensitive and mutational resistance strains to the coexistence of all strains at equilibrium. Finally, we go through some applications of our general analysis depending on whether bacteria strains interact without or with drug action (either cytostatic or cytotoxic). 
\end{abstract}

\noindent
{\bf Keywords}: Antibiotic resistance;  Mathematical modelling; Non-linear dynamical system
	
\section{Introduction}

Antibiotic, or antibacterial, resistance (ABR) is a major public health concern worlwide. In the USA, for instance, it has been estimated that, each year, two million people suffer infections from antibiotic resistant bacteria, 35,000 of which lead to death \cite{cdc2019}. As any trait under natural selection, antibiotic resistance requires that some individual bacteria differ from the sensitive ('wild type') genotype. In short, these individuals can emerge due to mutations, which here encompass any genetic change in the bacterial chromosome that lead to resistance, but also through horizontal gene transfer (HGT). Indeed, many bacterial species can transmit plasmids through conjugation \cite{Thomas2005}, which are self-replicating DNA often encoding drug resistance genes \cite{Philippon2002,Yates2006, Martinez2008,Moellering2010,Bush2011,Carattoli2012}. The origin of antibiotic resistance, through mutation or HGT is important from both epidemiological and evolutionary standpoint. Combination of modern techniques such as whole genome sequencing and phylogenetic reconstruction have for example revealed the evolutionary history of a {\it Staphylococcus aureus} lineage  that spread in the UK during the 1990s. It turns out that this lineage had acquired resistance both to meticillin and ciprofloxacin -- which belong to two distinct antibiotic classes recently introduced into clinical practice -- respectively through HGT and point mutation \cite{Baker2018}.

Several models have been developed to study the evolution of resistance by one or the other mechanism \cite{Blanquart2019} but few consider the two processes. There are exceptions and, for instance, Tazzyman and Bonhoeffer studied the difference of chromosomal and plasmid mutation in an emergence context, where stochasticity is strong \cite{Tazzyman2014}. Svara and Rankin \cite{Svara2011} develop such a setting to study the selective pressures that favour plasmid-carried antibiotic resistance genes. They use mathematical models capturing plasmid dynamics in response to different antibiotic treatment regimes. Here, we adopt a more formal approach than Svara and Rankin \cite{Svara2011} to investigate the properties of the dynamical system describing interactions between bacteria strains. More precisely, we describe qualitatively within-host interactions dynamics, while taking into account the main quantitative traits of the bacteria life cycle introduced below.  

In this work, we use a system of ordinary differential equations to model the interaction dynamics between four bacterial strains that are fully sensitive to the drug ($N_s$), with mutational (or `genomic') resistance only ($N_m$), with plasmidic resistance only ($N_p$) and with both form of resistances ($N_{m.p}$). Here, $N_j$'s refer to population size and  we  denote as $\mathcal{J}=\left\{s,m,p,m.p\right\}$ the set of bacteria strains. Each $N_j$-strain is then characterized by a strain-specific nutrient consumption ability, growth conversion factor, death rate and mutation (forward or reverse) probability. Additionally, $N_p$ and $N_{m.p}$-strains can transmit plasmids  through HGT and an offspring from plasmid-bearing strain can be plasmid‐free with some probability, a biological process known as segregation.

This paper is organized as follows.  In Section \ref{sec-GenModel} we describe the original model developed in this work. Section \ref{sec-threshold-dynamic} is devoted to some general remarks on the model properties and the threshold asymptotic dynamics. Section \ref{sec-Equi-Gen-Model} investigates the existence of nontrivial stationary states of the model as well as their stability. In particular, we show that the dynamical behavior of the system is not trivial and can range from the extinction of all strains to the coexistence of two or all of them. Finally, in Section \ref{sec-discuss}, we discuss some scenarios that can be captured by the model, as well as the biological implications of model assumptions and limitations.

% \sa{[SA: In summary, I think the Introduction does the job well. Perhaps you could merge the first three paragraphs to have one paragraph on drug resistance and mutation vs. HGT. Or not. What seems more important is to move the discussion on Svara and Rankin right after the one about Tazzyman and Bonhoeffer, because these are the two exceptions that consider both mechanisms. Then you have a detailed paragraph about your model, its main structure and its main properties. And finally the overall structure. regarding the contents, I would perhaps replace parasites by bacteria and define the word strain or use genotype (and still define it anyway).]}

\section{The model description}	\label{sec-GenModel}
\begin{figure}[!htp]
	\centering
	\includegraphics[width=.75\textwidth]{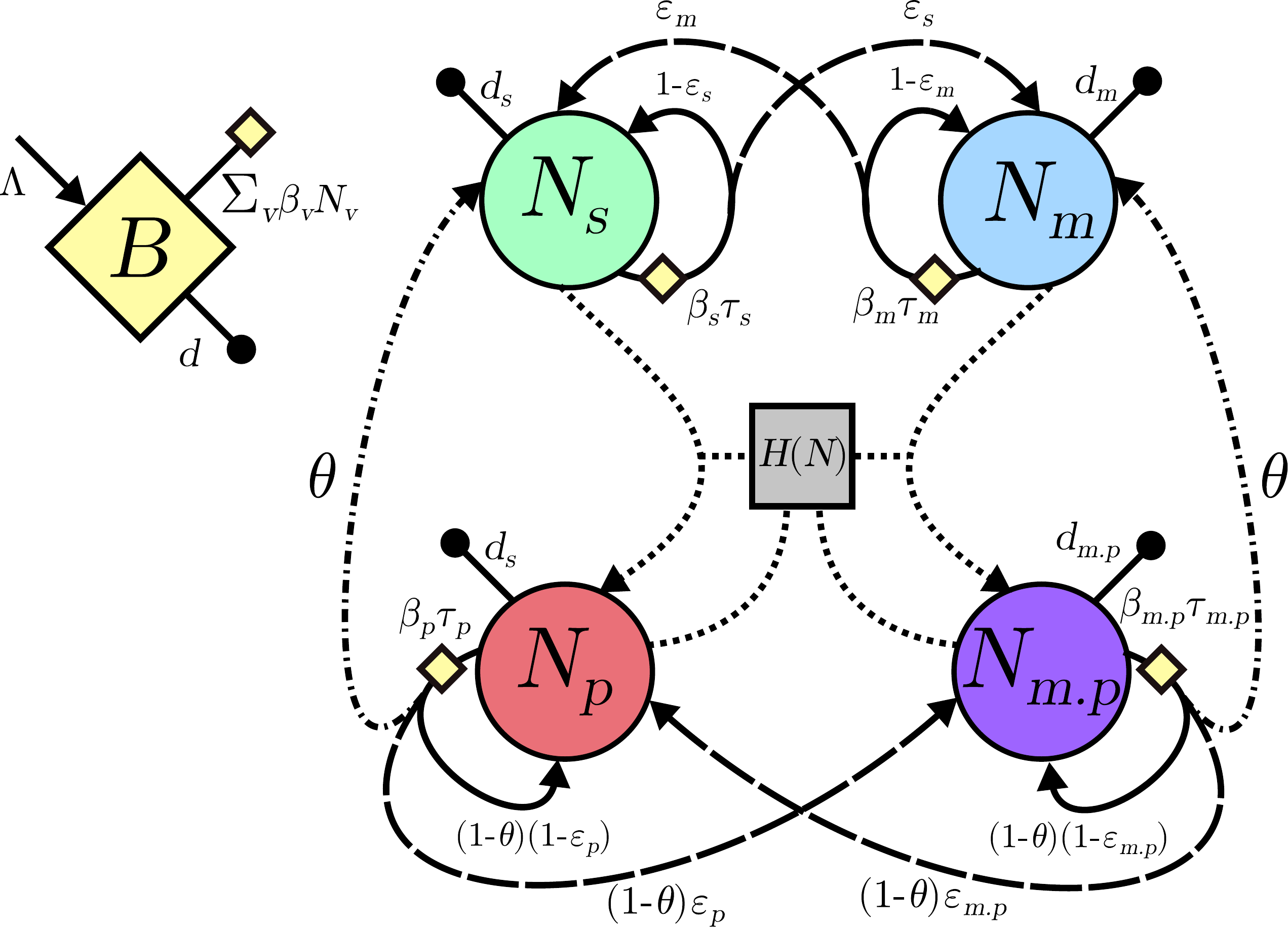}
	\caption{Flow diagram illustrating the interactions dynamics followed by the four bacterial strain densities: drug-sensitive $N_s$, mutation-acquired resistant $N_m$, horizontally-acquired resistant $N_p$ and doubly resistant $N_{m.p}$, growing on a limiting nutrient concentration $B$. The nutrient is renewed by a constant inflow $\Lambda$ and depleted independently from the focal bacteria at a rate $d$. Bacterial strain $v\in\left\{s;m;p;m,p\right\}$ consumes the nutrient at rate $\beta_v$, which is converted into growth rate through coefficient $\tau_v$. Strains $N_s$ and $N_p$ mutate at rates $\varepsilon_s$ and $\varepsilon_p$ respectively while reverse mutation occur to strains $N_m$ and $N_{m.p}$ at rates $\varepsilon_m$ and $\varepsilon_{m.p}$ respectively (dashed lines). Segregational loss of plasmid occurs during cell growth at rate $\theta$ for strains bearing a plasmid $N_p$ and $N_{m.p}$ (dotted-dashed lines).  Horizontal gene transfer occur from the existing plasmid bearing population at rate $H(N) [N_{m.p}+N_p]$, where $H(\cdot)$ is the Beddington-DeAngelis functional response defined by $H(N)=\frac{\alpha}{a_H+b_HN},$ and wherein $\alpha$ is the {\it flux rate} of HGT (dotted lines). This function covers different mechanisms for HGT: (i) density-dependent, {\it i.e.} HGT rate is proportional to the density of the donor; $b_H= 0; a_H=1$, (ii) frequency-dependent, {\it i.e.} HGT rate is proportional to the frequency of the donor; $a_H=0; b_H=1$ or (iii) a mixed between density- and frequency-dependent; $a_H>0;b_H>0$. 
	}
	\label{ModelFig}
\end{figure}

We formulate a general within-host model for interactions between four bacteria strains that are (i) fully sensitive to the drug $N_s$, (ii) with mutational (or `genomic') resistance only $N_m$, (iii) with plasmidic resistance only $N_p$ and (iv) with both resistances $N_{m.p}$. Figure \ref{ModelFig} summarizes interactions between the four compartments of bacteria strains. We assume that the dynamics of the total bacteria population in the host $N=N_s+N_p+N_m+N_{m.p}$ reads 
\begin{equation}\label{eq-totalPop}
\dot N(t)= \sum_{j \in \mathcal{J}}\left( \mu_j B(t) -d_j\right)  N_j(t),
\end{equation}
where $B(t)$ denotes the concentration of nutrients available at time $t$ for all stains. Parameter $\mu_j$ mimics the capacity of a given $N_j$-strain to take advantage from the resource, while $d_j$ is its death rate. 

In the literature, the occurrence of new mutants depends either on (i) the abundance of the parental cells or (ii) both the abundance and growth rate of the parental cells \cite{Loewe5650,Sniegowski2004,Zur2010}. We first assume the latter (the model under assumption (i) will be discussed later). Therefore, the dynamics of interactions between bacteria strains is given by 

\begin{equation}\label{eq-GenModel}
\left\{
\begin{aligned}
\dot B(t)=& \Lambda-dB -B\sum_{j \in \mathcal{J}}\beta_jN_j ,\\
\dot N_s(t)=&\tau_s\beta_s(1-\varepsilon_s) BN_s  -d_sN_s +\varepsilon_m\tau_m\beta_m BN_m + \theta\tau_p\beta_p BN_p\\
&- H(N) N_s[N_p +N_{m.p}] ,\\
\dot N_m(t)=& \tau_m\beta_m (1-\varepsilon_m) BN_m  -d_mN_m + \varepsilon_s\tau_s\beta_sB N_s +\theta\tau_{m.p}\beta_{m.p} BN_{m.p}   \\& - H(N) N_m[N_{m.p}+N_p],\\
\dot N_p(t)=& \tau_p\beta_p(1-\theta)(1-\varepsilon_p) BN_p  -d_pN_p +\varepsilon_{m.p}(1-\theta)\beta_{m.p}\tau_{m.p}B N_{m.p}\\ &  +H(N) N_s[N_p +N_{m.p}] ,\\
\dot N_{m.p}(t)=& \tau_{m.p}\beta_{m.p}(1-\theta)(1-\varepsilon_{m.p}) BN_{m.p}  - d_{m.p}N_{m.p}  + \varepsilon_p(1-\theta)\tau_p\beta_pBN_p\\ & + H(N) N_m[N_{m.p}+N_p],
\end{aligned}
\right.
\end{equation} 
coupled with initial condition $N_j(0)\ge0$ for all $j\in \mathcal{J}$ and $B(0)\ge 0$. 

More precisely, parameters of System \eqref{eq-GenModel} are summarized in Table \ref{Tab-ModelParameters} and are such that
\begin{itemize}
	\item Nutrients are produced at a constant rate $\Lambda$ and washed out at rate $d$. $\beta_j$ is the yield constant representing the amount of nutrients taken by the specific $N_j$-strain. Then, the $N_j$-strain takes advantage from nutrient at rate $\mu_j=\beta_j \tau_j$; wherein $\tau_j$ is the conversion rate (how much growth is obtained from a unit of nutrient). %\sa[]{\textbf{[$\beta_v$ occurs in the first equation so it is more related to the amount of nutrient taken by population $N_v$. To me, the conversion rate is more $\tau_v$ (how much growth you get from a unit of nutrient)]}}
	\item Upon cell replication, sensitive ($N_s$) and resistant with plasmid ($N_p$) strains acquire mutations at ratios $\varepsilon_s$ and $\varepsilon_p$ respectively. Reverse mutation from mutational resistant strain ($N_m$) and mutational with plasmid ($N_{m.p}$) also occur upon replication at ratios $\varepsilon_m$ and $\varepsilon_{m.p}$ respectively. 
	\item The segregational loss of plasmid occurs during cell growth at ratio $\theta$ for the resistant strain with plasmid ($N_p$) and mutational with plasmid ($N_{m.p}$). Here, we assume that the segregational loss of plasmid is the same for all strains bearing a plasmid ($N_p$ or $N_{m.P}$). 
	% 	\sa[]{\textbf{[I would also treat this as a ratio, which makes me wonder if the equation should not read $(1-\varepsilon_v)(1-\theta)$ for the plasmid-bearing strains]}}
	\item  We assume that plasmids are acquired from the existing donor population ($N_p+N_{m.p}$) and that conjugation follows a mass action law; i.e.~a plasmid is acquired from the existing resistant population at rate $H(N) [N_{m.p}+N_p]$, where $H(\cdot)$ is the Beddington-DeAngelis functional response defined by $H(N)=\frac{\alpha}{a_H+b_HN},$ with $\alpha$  the {\it flux rate} of HGT. This function can capture different mechanisms for HGT: (i) density-dependent, {\it i.e.}~HGT rate is proportional to the density of the donor; $b_H= 0; a_H=1$, (ii) frequency-dependent, {\it i.e.}~HGT rate is proportional to the frequency of the donor; $a_H=0; b_H=1$ or (iii) a mixed between density- and frequency-dependent; $a_H>0;b_H>0$. Throughout this work, we assume that the flux rate $\alpha>0$ for HGT is a small parameter \cite{Levin1979,Donisio2002,Gordon1992}.
\end{itemize}

% \sa{[SA: Here or later you need to explain that $\epsilon$ and $\theta$ are small so you assume that $(1-\epsilon)\times (1-\theta) \approx 1-\epsilon-\theta$]}

\begin{table}[H]
	\begin{tabular}{lll}
		\hline \hline
		State variables & \multicolumn{2}{l}{Description}\\
		\hline \hline
		$B(t)$ &  \multicolumn{2}{l}{Density of nutrients at time $t$.}\\
		$N_s(t)$ &  \multicolumn{2}{l}{Density of sensitive strain at time $t$.}\\
		$N_m(t)$ &  \multicolumn{2}{l}{Density of resistant strain at time $t$}.\\ 
		& \multicolumn{2}{l}{ through mutational changes.} \\
		$N_{p}(t)$ &  \multicolumn{2}{l}{Density of resistant strain at time $t$}\\
		& \multicolumn{2}{l}{through Horizontal Gene Transfer (HGT).}\\
		$N_{m.p}(t)$ &  \multicolumn{2}{l}{Density of resistant strain at time $t$.}\\ 
		& \multicolumn{2}{l}{through mutational changes and HGT.}\\ 
		$N(t)$ &  \multicolumn{2}{l}{Total density of bacteria at time $t$:}\\
		& \multicolumn{2}{l}{ $N=N_s+N_p+N_m+N_{m.p}$.}\\
		%		$n(t)$ & Density of cells produced by resistant strains (named {\it resistant public production} (RPP))\\
		%		&to avoid killing by antimicrobial mollecules\\
		%		$A(t)$ & Concentration of antibiotics at time $t$\\
		\hline \hline
		Model parameters & Description (unit) & Value/range [ref.]\\
		\hline \hline
		$\varepsilon_v$ & Resistance-related mutation proportion & $10^{-8}$ \cite{ankomah_exploring_2014}\\
		& per cell division of $v$-strain (dimensionless). & \\
		$\beta_v$ & Nutrient consumption by strain $v$ bacteria & $10^{-6}$ \cite{colijn_how_2015}\\
		& (mL/cell/day).  & \\
		$\tau_v$ & strain $v$ growth conversion factor (cell/$\mu$g). & $10^6$ \cite{colijn_how_2015}\\
		$d_v$ & strain $v$ washout and death rate (1/day). & 5 \cite{ankomah_exploring_2014}\\
		$d$ & nutrient washout rate (1/day). & 5 \cite{ankomah_exploring_2014}\\
		$\Lambda$ & Nutrient renewal rate ($\mu$g/mL/day). & 2500  \cite{ankomah_exploring_2014}\\
		$\theta$& Proportion of segregational plasmid loss & [0,0.25] \cite{bahl_quantification_2004}\\
		& for $p$ and $m.p$-strains (dimensionless).& \\
		%$\alpha$ & Flux rate of HGT \\
		%$\alpha_{m.p}$ & Flux rate of HGT from $N_{m.p}$- to $N_s$-strain\\
		%$\tilde \alpha_{m.p}$ & Flux rate of HGT from $N_{m.p}$- to $N_m$-strain\\
		%		$\eta(\cdot)$ & Indicator function for the rate antimicrobial uptake per unit of time\\
		%		$\sigma_v$ & Maximum killing effect of antibiotics on $N_v$-strain, $v \in \{s;m;p;m.p\}$\\
		%		$\gamma_v$ & Production rate of RPP by the $N_v$-strain, $v \in \{m;p;m.p\}$\\
		%		$\delta$ & Clearance rate of RPP\\
		$H(\cdot)$ & Functional response of HGT (mL/cell/day) & $10^{-13}$ \cite{lopatkin_antibiotics_2016}\\
		%		$F(\cdot)$ & Functional response of antibiotics\\
		%		$G(\cdot)$ & Functional response of RPP \\
		\hline 
	\end{tabular}
	\caption{Model state variables and parameters} \label{Tab-ModelParameters}
\end{table}

System \eqref{eq-GenModel} will be considered under the following natural assumption
\[
\begin{split}
&\Lambda>0\;, d>0\;, d_v> 0\;, \beta_v\ge0\;,  \tau_v\ge 0 \quad{\text{and}} \quad   \varepsilon_j;\theta \in (0,1). 
\end{split}
\]
% \begin{assumption} \label{Assumption1}	
% \[
% \begin{split}
% &\Lambda>0; d>0; d_v> 0;\beta_v\ge0,  \tau_v\ge 0,\\
% &   \varepsilon_v;\theta \in (0,1). 
% \end{split}
% \]
% \end{assumption}
% \sa[]{\textbf{[some of these parameters are not in Table 1. Also, perhaps for math journals it is important to say that parameters are positive but for a biologist it's trivial... Perhaps you could add a column in Table 2 to give the parameter ranges (actually narrowing it down a bit to get order of magnitudes)?]}}

It is useful to write System \eqref{eq-GenModel} into a more compact form. To that end, we identify the vector $(N_s,N_m)^T$ together with $u$ and $(N_p,N_{m.p})^T$ together with $v$. Here $x^T$ is set for the transpose of a vector or matrix $x$. Then, System \eqref{eq-GenModel} can be rewritten as 
\begin{equation}\label{eq-GenModelCompact}
\left\{
\begin{split}
\dot B(t)=& \Lambda-B \left[d + \left<\beta,(u,v)^T\right>\right],\\
\dot u=& \left[BG-D-h(u,v)\right] u+ BL_pv,\\
\dot v=& \left[B(G_p-L_p)-D_p\right]v+ h(u,v)u,
\end{split}
\right.
\end{equation}
wherein we have formally set $\beta= \left(\beta_s,\beta_m,\beta_{m.p} \right)^T$, $ h(u,v)= H \left(N\right) \left< 1,v \right>$ and the matrices $G$, $G_p$, $D$, $D_p$ and $L_p$ are defined as follows
\[
\begin{split}
& G= 
\left[
\begin{array}{cc}
\tau_s\beta_s(1-\varepsilon_s) & \varepsilon_m\tau_m\beta_m\\
\varepsilon_s\tau_s\beta_s & \tau_m\beta_m(1-\varepsilon_m)
\end{array}
\right], \quad 
G_p= 
\left[
\begin{array}{cc}
\tau_p\beta_p(1-\varepsilon_p(1-\theta)) & \varepsilon_{m.p}(1-\theta)\tau_{m.p}\beta_{m.p} \\
\varepsilon_p(1-\theta)\tau_p\beta_p & \tau_{m.p}\beta_{m.p} (1-\varepsilon_{m.p}(1-\theta)) 
\end{array}
\right],\\
& D=\text{diag}(d_s,d_m), \quad D_p=\text{diag}(d_p,d_{m.p}), \quad L_p=\theta \text{ diag}(\tau_p\beta_p,\tau_{m.p}\beta_{m.p}),
\end{split}
\]
wherein $\text{diag}(w)$ is a diagonal matrix the diagonal elements of which are given by $w$; and $\left<x,y\right>$ is set for the usual scalar product of vectors $x$ and $y$. The compact form \eqref{eq-GenModelCompact} is then view as the inetraction dynamics of two groups of bacteria: one with plasmids $v=\left(N_p,N_{m.p}\right)^T$ and one without it $u=\left(N_s,N_m\right)^T$.

% \sa[]{[\textbf{Nice compact form! I hope it is robust to my comment above about $\theta$, which I think should be treated as a ratio (but if it's small you can probably to a simplification)}]}

%\subsection{Fourth-strain mathematical model}

\section{General remarks, bacteria invasion process and threshold dynamics}\label{sec-threshold-dynamic}
In this section we establish some useful properties of solutions of \eqref{eq-GenModel} that include the existence of a positive global in time solution of the system and the threshold asymptotic dynamics of the model.

\subsection{General remarks}
First, in a bacteria-free environment, the nutrient dynamics is such that
\begin{equation}\label{B-dfe}
\dot B=\Lambda-dB.
\end{equation}
Consequently, $B_0= \frac{\Lambda}{d}$ is the unique positive solution of System \eqref{B-dfe}, which can straightforwardly be shown to be globally attractive in $\R_+$; {\it i.e.}~$B(t)\to B_0$ as $t\to \infty$.

Since System \eqref{eq-GenModel} is designed to model a biological process, its solutions should remain positive and bounded. The existence, positivity and boundedness solutions of System \eqref{eq-GenModel} is provided by the following result (see Section \ref{proof of thm-exist} for the proof).
\begin{theorem}\label{thm-exist}
	There exists a unique continuous solution $\left\{E(t):\R_+^5 \to \R_+^5 \right\}_{t\ge0}$ to \eqref{eq-GenModel} such that for any $\omega_0 \in \R^5_+$, the orbit of \eqref{eq-GenModel} passing through $\omega_0$ at time $t=0$ $E(\cdot)\omega_0: [0,\infty) \to \R^5_+$ defined by $E(t)\omega_0= \left( B(t),N_s(t),N_m(t),N_p(t),N_{m.p}(t)\right)$ with $E(0)\omega_0= \omega_0$ verified
	\begin{equation}\label{eq-bounded}
	\tau_{\text{max}} B(t) + \sum_{j\in\mathcal{J}}N_j\le \frac{\Lambda\tau_{\text{max}}} {\min \left(d,d_{\text{min}} \right)},
	\end{equation}
	wherein $\mathcal{J}=\left\{s,m,p,m.p\right\}$; $\tau_{\text{max}}= \max\limits_{j \in \mathcal{J}}\tau_j$ and $d_{\text{min}}= \max\limits_{j \in \mathcal{J}}d_j$.
	% \ms{<In order to alleviate the formulas, from here on, I suggest i) to introduce the notation $\mathcal{V}:=\left\{s,m,p,m.p\right\}$ and wherever it is possible to write $v\in\mathcal{V}$, ii) to replace $N_s(t)+N_m(t)+N_p(t)+N_{m.p}(t)$ by $N(t)$ or $\sum_{v\in\mathcal{V}}N_v$... though $v$ is used later as a vector!>}
\end{theorem}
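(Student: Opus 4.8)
The plan is to establish the four assertions in turn by standard ODE theory: local well-posedness via Cauchy--Lipschitz, forward invariance of $\R^5_+$ via a boundary (quasi-positivity) argument, an a priori bound obtained from a weighted linear combination of the state variables, and then global existence as a consequence of the bound. First I would check that the map $F$ defined by the right-hand side of \eqref{eq-GenModel} is locally Lipschitz on an open neighbourhood of $\R^5_+$: every term is polynomial except for those carrying the factor $H(N)=\alpha/(a_H+b_HN)$, and on $\R^5_+$ the denominator satisfies $a_H+b_HN\ge a_H\ge 0$. In the two regimes with $a_H>0$ (or $a_H=1,b_H=0$) this is bounded away from $0$, so $F$ is $C^1$; in the purely frequency-dependent regime $a_H=0,b_H=1$ one uses that the HGT terms read $\alpha N_{s}[N_p+N_{m.p}]/N$ and $\alpha N_{m}[N_p+N_{m.p}]/N$, which extend Lipschitz-continuously up to $N=0$ (the set $\{N=0\}$ being invariant and reducing the system to \eqref{B-dfe}). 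Picard--Lindel\"of then yields, for every $\omega_0\in\R^5_+$, a unique maximal solution $t\mapsto\bigl(B(t),N_s(t),N_m(t),N_p(t),N_{m.p}(t)\bigr)$ on an interval $[0,T_{\max})$.

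Next I would prove that $\R^5_+$ is forward invariant by verifying the quasi-positivity (Nagumo-type) condition: on each face of $\R^5_+$ where one coordinate vanishes while the others stay nonnegative, the corresponding component of $F$ is nonnegative. Indeed $\dot B=\Lambda>0$ when $B=0$; when $N_s=0$ one is left with $\varepsilon_m\tau_m\beta_m BN_m+\theta\tau_p\beta_p BN_p\ge 0$; when $N_m=0$ with $\varepsilon_s\tau_s\beta_s BN_s+\theta\tau_{m.p}\beta_{m.p}BN_{m.p}\ge 0$; when $N_p=0$ with $\varepsilon_{m.p}(1-\theta)\tau_{m.p}\beta_{m.p}BN_{m.p}+H(N)N_sN_{m.p}\ge 0$; and when $N_{m.p}=0$ with $\varepsilon_p(1-\theta)\tau_p\beta_p BN_p+H(N)N_mN_p\ge 0$ --- the point being that the only negative contributions (the death terms and the HGT losses $-H(N)N_{s}[N_p+N_{m.p}]$, $-H(N)N_{m}[N_p+N_{m.p}]$) all carry the vanishing coordinate as a factor. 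The standard invariance criterion for locally Lipschitz quasi-positive vector fields then gives that the solution stays in $\R^5_+$ on $[0,T_{\max})$.

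For the bound and global existence I would introduce $V(t)=\tau_{\max}B(t)+\sum_{j\in\mathcal J}N_j(t)$ with $\tau_{\max}=\max_{j}\tau_j$, and compute, using \eqref{eq-totalPop} together with the $B$-equation,
\[
\dot V=\tau_{\max}\Lambda-\tau_{\max}dB-B\sum_{j\in\mathcal J}(\tau_{\max}-\tau_j)\beta_jN_j-\sum_{j\in\mathcal J}d_jN_j .
\]
Since $B\ge 0$, $\beta_j\ge 0$ and $\tau_{\max}\ge\tau_j$, the cross term is $\le 0$, hence $\dot V\le\tau_{\max}\Lambda-\min(d,d_{\min})V$ with $d_{\min}=\min_{j}d_j$. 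A comparison argument (Gr\"onwall) with the scalar linear ODE $\dot w=\tau_{\max}\Lambda-\min(d,d_{\min})w$ then gives $V(t)\le\max\{V(0),\,\tau_{\max}\Lambda/\min(d,d_{\min})\}$ for all $t\in[0,T_{\max})$, so $\limsup_{t\to\infty}V(t)\le\tau_{\max}\Lambda/\min(d,d_{\min})$, which is \eqref{eq-bounded} (and it holds for every $t\ge 0$ whenever $\omega_0$ lies in the sublevel set $\{V\le\tau_{\max}\Lambda/\min(d,d_{\min})\}$). As the solution remains in a bounded subset of $\R^5_+$ on which $F$ is Lipschitz, it cannot blow up in finite time, so $T_{\max}=+\infty$; uniqueness then makes $E(t)\colon\R^5_+\to\R^5_+$, $E(t)\omega_0=\bigl(B(t),N_s(t),N_m(t),N_p(t),N_{m.p}(t)\bigr)$, a well-defined continuous semiflow.

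The only genuinely delicate points are the regularity of the frequency-dependent HGT term at the origin, handled by the Lipschitz extension (or by isolating the invariant trivial set $\{N=0\}$) mentioned above, and the choice of the weight $\tau_{\max}$ in $V$: it is precisely this weighting that forces the nutrient-consumption fluxes $\mp B\beta_jN_j$ to cancel up to a nonpositive remainder, whereas the unweighted sum $B+\sum_{j}N_j$ does not obviously obey a dissipative inequality. Everything else is routine verification.
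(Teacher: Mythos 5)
Your proposal is correct and follows essentially the same route as the paper: Picard--Lindel\"of for local existence, a quasi-positivity check on each face of $\R^5_+$ for invariance, and the weighted quantity $\tau_{\max}B+\sum_j N_j$ satisfying $\dot V\le\tau_{\max}\Lambda-\min(d,d_{\min})V$ to get the bound and hence global existence. In fact you are slightly more careful than the paper on two points it glosses over --- the regularity of $H(N)$ at $N=0$ in the purely frequency-dependent case, and the fact that \eqref{eq-bounded} holds for all $t$ only on the corresponding sublevel set (otherwise asymptotically), with $d_{\min}$ necessarily meaning $\min_j d_j$ rather than the $\max$ misprinted in the statement.
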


\bigskip

Recalling that $u=(N_s,N_m)^T$, $v=(N_p,N_{m.p})^T$ and setting $K=\frac{\Lambda\tau_{\text{max}}} {\min \left(d,d_{\text{min}} \right)}$, we define
\begin{equation*}\label{Omega}
\begin{split}
& \Omega=\left\{ (B,N_s,N_m,N_p,N_{m.p})\in \R_+^5 \left|
\tau_{\text{max}} B +\sum_{j \in \mathcal{J}} N_j\le K \right.
\right\},\\
&X_0=\left\{(B,N_s,N_m,N_p,N_{m.p})\in \Omega:
\sum_{j \in \mathcal{J}} N_j >0 \right\},\quad  \text{and} \quad  \partial X_0= \Omega\setminus X_0.
\end{split}
\end{equation*}
Then the following lemma holds true
\begin{lemma}\label{lem-invariance-X0}
	Let $w_0$ be a given initial data of System \eqref{eq-GenModel} and set $ E(\cdot)w_0$ the orbit passing through $\omega_0$ at $t=0$. The subsets $X_0$ and $\partial X_0$ are both positively invariant under the map $\left\{ E(t)w_0 : [0,\infty) \to R^5_+ \right\}_{t\ge0}$; in other words every solution of system \eqref{eq-GenModel} with initial value in $X_0$ (respectively in $\partial X_0$) stays in $X_0$ (respectively in $\partial X_0$).
	%\[
	%E(t)X_0 \subset X_0 \quad \text{ and } \quad E(t)\partial X_0 \subset \partial X_0.
	%\]
\end{lemma}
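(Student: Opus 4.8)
The plan is to reduce both invariance claims to a single scalar estimate for the total population $N=\sum_{j\in\mathcal J}N_j$, using the crucial fact that the mutation, reverse-mutation, segregation and HGT terms in \eqref{eq-GenModel} only redistribute individuals among the four compartments and therefore cancel when the four equations are summed --- this is exactly the compact identity \eqref{eq-totalPop}. Throughout I would invoke Theorem \ref{thm-exist}: every orbit exists for all $t\ge0$, stays in $\R^5_+$ (so $B(t)\ge0$ and $N_j(t)\ge0$ for all $j$ and all $t$), and satisfies \eqref{eq-bounded}; in particular $\Omega$ is positively invariant and $B$ remains bounded along orbits issued from $\Omega$, say $0\le B(t)\le B_{\max}$ (from \eqref{eq-bounded}, or directly from $\dot B\le\Lambda-dB$).

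First I would treat $X_0$. Let $w_0\in X_0$, so $N(0)>0$. Because $\mu_j=\tau_j\beta_j\ge0$, $B(t)\ge0$ and $N_j(t)\ge0$, each summand $\mu_jB(t)N_j(t)$ in \eqref{eq-totalPop} is nonnegative, hence
\[
\dot N(t)=\sum_{j\in\mathcal J}\bigl(\mu_jB(t)-d_j\bigr)N_j(t)\ \ge\ -d_{\max}N(t),\qquad d_{\max}:=\max_{j\in\mathcal J}d_j .
\]
Gronwall's inequality then gives $N(t)\ge N(0)e^{-d_{\max}t}>0$ for every $t\ge0$, and since the orbit also stays in $\Omega$ (Theorem \ref{thm-exist}) it stays in $\Omega\cap\{N>0\}=X_0$.

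Next I would treat $\partial X_0=\Omega\setminus X_0=\{(B,0,0,0,0):0\le\tau_{\max}B\le K\}$. If $w_0\in\partial X_0$ then $N(0)=0$; bounding from above in \eqref{eq-totalPop} via $B(t)\le B_{\max}$ and $N_j(t)\ge0$ yields $\dot N(t)\le \mu_{\max}B_{\max}N(t)$ with $\mu_{\max}:=\max_{j\in\mathcal J}\mu_j$, hence $0\le N(t)\le N(0)e^{\mu_{\max}B_{\max}t}=0$. Thus $N\equiv0$, and since the $N_j(t)$ are nonnegative and sum to $0$ they all vanish identically; the orbit is then $(B(t),0,0,0,0)$, which remains in $\Omega$ by Theorem \ref{thm-exist}, hence in $\partial X_0$.

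I do not anticipate a genuine obstacle here: the conservative structure of the redistribution terms collapses everything to the two elementary Gronwall estimates above. The only points needing (routine) care are the sign and boundedness of $B$ and of the solution components, and in particular the boundedness of the Beddington--DeAngelis factor $H(N)[N_p+N_{m.p}]$ as $N\to0$ in the frequency-dependent case $a_H=0$; but these are already established in Theorem \ref{thm-exist}, which I would simply cite.
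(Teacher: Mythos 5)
Your proposal is correct and follows essentially the same route as the paper: the authors likewise invoke the positivity and boundedness from Theorem \ref{thm-exist} and sandwich $\dot N$ via \eqref{eq-totalPop} as $-d_{\max}N\le\dot N\le B_0\tau_{\max}\beta_{\max}N$, deducing $N(0)e^{-d_{\max}t}\le N(t)\le N(0)e^{B_0\tau_{\max}\beta_{\max}t}$, which gives both invariance statements at once. The only cosmetic difference is your use of the bound on $B$ coming from \eqref{eq-bounded} rather than $B(t)\le B_0$, which changes nothing.
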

%Lemma \ref{lem-invariance-X0} means that every solution of system \eqref{eq-GenModel} with initial value
%in $X_0$ (respectively in $\partial X_0$) stays in $X_0$ (respectively in $\partial X_0$). 
% \ms{<I understand the notation $E(t)w_0$ but I don't find it so common and a bit confusing (as a product), could we introduce it formally before?>}
\begin{proof}
	The proof of Lemma \ref{lem-invariance-X0} is straightforward. Indeed, let us first recall that $B(t)\le B_0$ for all time $t$. Next, the positivity of the map $E(t)w_0$ provided by Theorem \ref{thm-exist} together with equation \eqref{eq-totalPop} give 
	\[
	-d_{\text{max}} N(t) \le \dot N(t) \le B_0 \tau_{\text{max}} \beta_{\text{max}} N(t), \text{ for all time } t.
	\]
	From where $N(0)\exp(-d_{\text{max}} t) \le N(t) \le N(0) \exp(B_0 \tau_{\text{max}} \beta_{\text{max}} t)$, and the result follows.
\end{proof}

\subsection{Bacteria invasion process and threshold dynamics} \label{sec-Inv}
Trivially, the bacteria-free stationary state is given by $E^0= \left(B_0,0,0,0,0\right)$. Recall that each $N_j$-strain is basically characterized by a specific nutrient consumption ability $\beta_j$, growth conversion factor $\tau_j$, death rate $d_j$, and mutation (forward or reverse) probability $\varepsilon_j$. Additionally, $N_p$- and $N_{m.p}$-strains can transmit plasmids  througth HGT and there is generally some probability $\theta$ that any offspring from plasmid bearing strain will be plasmid‐free. The qualitative dynamics of our model is strongly related to the following threshold numbers 
\begin{equation} \label{eq-threshold-Tj}
\mathcal{T}_j= B_0\frac{\tau_j\beta_j} {d_j}; \quad \text{ for } j \in \mathcal{J},
\end{equation}
and 
\begin{equation}\label{eq-threshold-Rj}
\begin{split}
& \mathcal{R}_s= \mathcal{T}_s(1-\varepsilon_s), \quad  \mathcal{R}_m= \mathcal{T}_m(1-\varepsilon_m),\\
&\mathcal{R}_p= \mathcal{T}_p(1-\varepsilon_p)(1-\theta), \quad \mathcal{R}_{m.p}= \mathcal{T}_{m.p}(1-\theta)(1-\varepsilon_{m.p}),
\end{split}
\end{equation}
wherein the positive constant $B_0$ is the bacteria-free stationary state. A threshold $\mathcal{T}_j$ will be referred to below as the growth threshold of the $N_j$-strain in a bacteria-free environment. Indeed, the net growth rate of the $N_j$-strain can be written as $B_0\tau_j\beta_j-d_j= d_j \left(\mathcal{T}_j-1 \right)$. Therefore, the quantity $\left(\mathcal{T}_j-1 \right)$ can be seen as the net growth threshold. Regarding the $\mathcal{R}_j$, since they capture not only the growth but also the genetic component of the dynamics, we call them {\it effective reproduction numbers}. The main biological interpretation here is that the $\mathcal{T}_j$s describe the ecological conditions of invasion while the $\mathcal{R}_j$s describe both the ecological and micro-evolutionary (or, shortly 'eco-evolutionary') conditions of strain persistence, that is their effective genetic contribution to each next generation.

In addition to the $\mathcal{T}_j$'s and $\mathcal{R}_j$'s, transitions between strain state variables, imposed by mutations and plasmids segregation, are of great importance for the dynamical behaviour of the model considered here. Therefore, we similarly introduce parameters (referred to as {\it strain transition numbers})
\begin{equation}\label{eq-threshold-Kj}
\begin{split}
& \mathcal{K}_{s\to m}= B_0\frac{\varepsilon_s\tau_s\beta_s}{d_m}, \quad \mathcal{K}_{m\to s}= B_0\frac{\varepsilon_m\tau_m\beta_m}{d_s}\\
& \mathcal{K}_{p\to m.p}= B_0\frac{\varepsilon_p(1-\theta)\tau_p\beta_p}{d_{m.p}}, \quad \mathcal{K}_{m.p\to p}= B_0\frac{\varepsilon_{m.p}(1-\theta)\tau_{m.p}\beta_{m.p}}{d_p}\\
& \mathcal{K}_{p\to s}= B_0\frac{\theta\tau_p\beta_p}{d_s},\quad  \mathcal{K}_{m.p \to m}= B_0\frac{\theta\tau_{m.p}\beta_{m.p}}{d_m}.
\end{split}
\end{equation}

From the above notations, we can express the threshold criterion for bacteria invasion as follows:
\begin{theorem}\label{Thm-invasion} Assume that mutation rates $\varepsilon_j$ are sufficiently small. Let $E(t)w_0= (B(t),N_s(t),N_m(t),N_p(t),N_{m.p}(t))$ be the orbit of \eqref{eq-GenModel} passing through $\omega_0$ at time $t=0$. Let us set
	\begin{equation}
	\mathcal{T}^*= \max_{j \in \mathcal{J}} \{\mathcal{T}_j\} {\quad \text{and}} \quad \mathcal{R}^*= \max_{j \in \mathcal{J}} \{\mathcal{R}_j\}, 
	\end{equation}
	where thresholds $\mathcal{T}_j$'s and $\mathcal{R}_j$'s are defined by \eqref{eq-threshold-Tj} and \eqref{eq-threshold-Rj}. Then,
	
	\begin{description}
		\item[(i)] 
		The bacteria-free stationary state $E^0$ for System \eqref{eq-GenModel} is locally asymptotically stable if $\mathcal{R}^* <1$ and unstable if $\mathcal{R}^*
		>1$.
		\item[(ii)] 
		The bacteria-free stationary state $E^0$ for System \eqref{eq-GenModel} is globally asymptotically stable if $\mathcal{T}^* <1$.
		\item[(iii)]
		If $\min\limits_{j \in \mathcal{J}}  \mathcal{T}_j^*>1$, then System \eqref{eq-GenModel} is uniformly persistent with respect to the pair $(X_0,\partial X_0)$, in the sense that there exists $\delta>0$, such that for any $w_0 \in X_0$ we have,
		\begin{equation*}\label{persistence1}
		\liminf _{t\to \infty} \Delta (E(t)w_0,\partial X_0) \geq\delta. 
		\end{equation*}
		Here, $\Delta(\cdot,\cdot)$ is set for the semi-distance defined by $\Delta(\mathcal{X},\mathcal{Y})= \sup\limits_{x\in \mathcal{X}}\inf\limits_{y\in \mathcal{Y}}\|x-y\|$. 
	\end{description}
\end{theorem}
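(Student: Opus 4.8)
The plan is to handle the three assertions in turn, each reduced to either the spectral analysis of the linearisation of \eqref{eq-GenModelCompact} at $E^0 = (B_0,0,0,0,0)$ or a comparison argument built on the total-population equation \eqref{eq-totalPop}.

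\textbf{Part (i).} I would linearise \eqref{eq-GenModelCompact} at $E^0$, writing $B = B_0 + b$: every term carrying a factor $H(N)$ or a product $BN_j$ is quadratic and disappears, so $\{-d\}$ is one eigenvalue of the Jacobian and the rest of its spectrum is that of the $4\times 4$ block $A$ governing $(N_s,N_m,N_p,N_{m.p})$. Since the segregation terms couple only $v\to u$ and the HGT terms are quadratic, $A$ is block triangular,
\[
A = \begin{pmatrix} B_0G - D & B_0 L_p \\ 0 & B_0(G_p - L_p) - D_p \end{pmatrix},
\]
whence $\operatorname{spec}(A) = \operatorname{spec}(M_u)\cup\operatorname{spec}(M_v)$ with $M_u := B_0G-D$ and $M_v := B_0(G_p-L_p)-D_p$. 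Both are Metzler matrices: $G \ge 0$, and $G_p - L_p \ge 0$ by a one-line computation (its diagonal entries are $\tau_p\beta_p(1-\theta)(1-\varepsilon_p)$ and $\tau_{m.p}\beta_{m.p}(1-\theta)(1-\varepsilon_{m.p})$), so $M_u = F_u - D$, $M_v = F_v - D_p$ with $F_u,F_v\ge 0$ and $D,D_p$ nonsingular diagonal. By the standard sign identity $\operatorname{sign}(s(M_u)) = \operatorname{sign}(\rho(F_uD^{-1})-1)$, and likewise for $M_v$, and a direct computation gives
\[
F_uD^{-1} = \begin{pmatrix} \mathcal{R}_s & \varepsilon_m\mathcal{T}_m \\ \varepsilon_s\mathcal{T}_s & \mathcal{R}_m\end{pmatrix},\qquad F_vD_p^{-1} = \begin{pmatrix}\mathcal{R}_p & \varepsilon_{m.p}(1-\theta)\mathcal{T}_{m.p} \\ \varepsilon_p(1-\theta)\mathcal{T}_p & \mathcal{R}_{m.p}\end{pmatrix},
\]
nonnegative $2\times2$ matrices whose diagonals are the effective reproduction numbers and whose off-diagonal products are $O(\varepsilon^2)$. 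For the $\varepsilon_j$ small enough the spectral radius of each is $<1$ precisely when both diagonal entries are $<1$, and $>1$ as soon as one of them exceeds $1$; hence $s(A)<0\iff\mathcal{R}^*<1$ and $s(A)>0\iff\mathcal{R}^*>1$, giving (i).

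\textbf{Part (ii).} From $\dot B \le \Lambda - dB$ I get $\limsup_t B(t)\le B_0$, so $B(t)\le B_0+\eta$ for $t\ge T$. Summing the $N_j$-equations the transfer and HGT terms cancel and \eqref{eq-totalPop} yields, for $t\ge T$,
\[
\dot N(t) \le \Big(\max_{j\in\mathcal{J}}\big[d_j(\mathcal{T}_j-1)+\tau_j\beta_j\eta\big]\Big) N(t).
\]
When $\mathcal{T}^*<1$ one picks $\eta$ small enough that the bracket is negative for every $j$, so $N(t)\to0$, hence every $N_j(t)\to0$, and then $\dot B = \Lambda - dB - B\sum_j\beta_jN_j$ is an asymptotically autonomous perturbation of $\dot B = \Lambda-dB$, forcing $B(t)\to B_0$. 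This gives global attractivity of $E^0$; and since $\mathcal{T}^*<1$ entails $\mathcal{R}^*\le\mathcal{T}^*<1$, part (i) provides Lyapunov stability, so $E^0$ is globally asymptotically stable.

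\textbf{Part (iii).} I would apply an abstract uniform-persistence theorem (Hale--Waltman / Thieme). Theorem \ref{thm-exist} makes the semiflow point dissipative with compact absorbing set $\Omega$; Lemma \ref{lem-invariance-X0} gives invariance of $\partial X_0$; on $\partial X_0$ all $N_j$ vanish, so the reduced flow is $\dot B = \Lambda-dB$ with global attractor the single point $\{E^0\}$, which is isolated and acyclic there. It then remains to show $E^0$ is a uniform weak repeller for $X_0$: if some orbit in $X_0$ eventually stayed in a $\delta$-ball around $E^0$, then $B(t)\ge B_0-\delta$ there and \eqref{eq-totalPop} would give $\dot N \ge \big(\min_j[d_j(\mathcal{T}_j-1)-\tau_j\beta_j\delta]\big)N$, a positive constant times $N$ once $\delta$ is small (using $\min_j\mathcal{T}_j>1$), so $N(t)$ — which stays positive by invariance of $X_0$ — would grow exponentially, contradicting $N(t)\le\delta$. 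The abstract theorem then yields uniform persistence with respect to $(X_0,\partial X_0)$.

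\textbf{Expected main obstacle.} Part (iii) is the delicate one: checking every hypothesis of the persistence theorem (compactness of the semiflow, boundary invariance and acyclicity, isolatedness of $E^0$) and, above all, turning the weak-repeller heuristic into a rigorous uniform lower bound on $B$ along orbits near $E^0$. In (i) the only subtlety is quantifying how small the $\varepsilon_j$ must be so that the threshold reads exactly $\mathcal{R}^*$ and not $\max\{\rho(F_uD^{-1}),\rho(F_vD_p^{-1})\}$; (ii) is a routine comparison argument.
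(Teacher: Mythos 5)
Your proposal is correct and follows essentially the same route as the paper: the same block-triangular linearisation at $E^0$ split into the blocks $B_0G-D$ and $B_0(G_p-L_p)-D_p$ for (i), the same comparison of $\dot B\le\Lambda-dB$ and of the total-population equation \eqref{eq-totalPop} for (ii), and the same Hale--Waltman persistence argument with the weak-repeller contradiction via $\dot N\ge d_{\min}\bigl(\tfrac{B}{B_0}\min_j\mathcal{T}_j-1\bigr)N$ for (iii). The only (cosmetic) difference is that in (i) you read off $\operatorname{sign}(s_0)$ through the next-generation matrices $B_0GD^{-1}$ and $B_0(G_p-L_p)D_p^{-1}$, whereas the paper computes the principal eigenvalues of the two $2\times2$ blocks explicitly and expands them for small $\varepsilon_j$; both require the same smallness caveat on the mutation ratios.
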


Theorem \ref{Thm-invasion} states that bacteria die out when $\mathcal{T}^*<1$ and can potentially persist with the nutrient if $\mathcal{T}^*>1$. Moreover, we always have $\mathcal{R}^* \le \mathcal{T}^* $ and the difference $\mathcal{T}^* - \mathcal{R}^* $ is of order $\max(\varepsilon_s, \varepsilon_m, \theta+ \varepsilon_p, \theta+ \varepsilon_{m.p})$. Therefore, for sufficiently small mutations ratios ($\varepsilon_j$'s) and plasmid loss ($\theta$), we have $\text{sign}(\mathcal{R}^*-1) = \text{sign}(\mathcal{T}^*-1)$. The proof of Theorem \ref{Thm-invasion} is given in Section \ref{proof of Thm-invasion}.

% \section{Model parameters, Typical model simulation and Sensitivity analysis}
% XXXXXXXXXXXXXXXXXXXXXXXXXXXXXXXXXXXXXX\\
% XXXXXXXXXXXXXXXXXXXXXXXXXXXXXXXXXXXXXXX

\section{Nontrivial stationary states of Model \eqref{eq-GenModel} and stability results} \label{sec-Equi-Gen-Model}
We have proven that the trivial stationary state $E^0$ is always a solution of System \eqref{eq-GenModel} (see Section \ref{sec-Inv}). However, in some conditions, the model can also have nontrivial stationary solutions. Here, let us recall the strain transition numbers $\mathcal{K}_{s\to m}$, $\mathcal{K}_{m\to s}$, $\mathcal{K}_{p\to m.p}$, $\mathcal{K}_{m.p\to p}$, $\mathcal{K}_{p\to s}$ and $\mathcal{K}_{m.p \to m}$ introduced by \eqref{eq-threshold-Kj}.

\subsection{Stationary states of the general model \eqref{eq-GenModel}}
Model \eqref{eq-GenModel} has three possible stationary states. The extinction of all strains $E^0$, the co-existence of $N_s$ and $N_m$ strains ($E^*_{s-m}$) and the co-existence of all strains ($E^*$). Let $r\left(M\right)$ be the spectral radius of a given matrix $M$. Then, the precise result reads as follows

\begin{theorem}\label{thm-equi-Model}
	\begin{description}
		\item[(i)] If $B_0r\left(D^{-1}G\right)>1$, then the co-existence stationary state of $N_s$ and $N_m$ strains is $E^*_{s-m}= \left(B^*,N^*_s,N^*_m,0,0 \right)$ with 
		\begin{equation*}
		\begin{split}
		& \frac{1}{B^*}= r\left(D^{-1}G\right)= \frac{1}{2B_0} \left\{\mathcal{R}_s +\mathcal{R}_m + \left[\left(\mathcal{R}_s -\mathcal{R}_m \right)^2 +4 \mathcal{K}_{m\to s}  \mathcal{K}_{s\to m} \right]^{1/2}   \right\},\\
		& \left(N^*_s, N^*_m\right) = \frac{d\left(B_0r(D^{-1}G)-1\right)}{\left< (\beta_s,\beta_m),\phi_0\right>}  \phi_0,
		\end{split}
		\end{equation*}
		wherein $\phi_0= \left(\mathcal{R}_s -\mathcal{R}_m + \left[\left(\mathcal{R}_s -\mathcal{R}_m \right)^2 +4 \mathcal{K}_{m\to s}  \mathcal{K}_{s\to m} \right]^{1/2}  ,2 \mathcal{K}_{s\to m}\right)^T$.
		
		\item[(ii)] If $r(D_p^{-1}(G_p-L_p))> r\left(D^{-1}G\right)$ and $B_0r(D_p^{-1}(G_p-L_p)) >1$, then for small flux rate of HGT  $\alpha$, the coexistence of all strains $E^*= \left(B^*,N^*_s,N^*_m,N^*_p,N^*_{m.p} \right)$ is defined by 
		\[
		\begin{split}
		&(N_s^*,N_m^*)= u^0_*+ \alpha u^1_*+ \mathcal{O}(\alpha^2),\\
		&(N_p^*,N_{m.p}^*)= v^0_*+ \alpha v^1_*+ \mathcal{O}(\alpha^2),\\
		&B^*=  b^0_*+ \alpha b^1_*+ \mathcal{O}(\alpha^2),
		\end{split}
		\]
		wherein 
		\[
		\left\{\begin{split}
		& 1/b^0_*=r(D_p^{-1}(G_p-L_p)),\\
		& v^0_*= \frac{  d\left[B_0/b^0_*-1 \right] } { \left< \beta, \left(-(D^{-1}G -1/b^0_*\I)^{-1} D^{-1}L_p\varphi_0, \varphi_0\right) \right> } \varphi_0,\\
		& u^0_* = -(D^{-1}G -1/b^0_*\I)^{-1} D^{-1}L_pv^0_*,\\
		& \varphi_0= \left(\mathcal{R}_p -\mathcal{R}_{m.p} + \left[\left(\mathcal{R}_p -\mathcal{R}_{m.p} \right)^2 +4 \mathcal{K}_{m.p\to p}  \mathcal{K}_{p\to m.p} \right]^{1/2}  ,2 \mathcal{K}_{p\to m.p}\right)^T,
		\end{split}\right.
		\]
	\end{description}
	and $\I$ denotes the identity matrix of the convenient dimension.
\end{theorem}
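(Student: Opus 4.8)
The idea is to extract both nontrivial equilibria directly from the compact form \eqref{eq-GenModelCompact}, handling \textbf{(i)} as an exact Perron--Frobenius eigenvalue computation and \textbf{(ii)} as a regular perturbation in the small flux rate $\alpha$. For \textbf{(i)} I impose $v=0$ in the stationary version of \eqref{eq-GenModelCompact}. Since $h(u,0)=H(N)\langle 1,0\rangle=0$, the $\dot v$-equation holds automatically, the $\dot u$-equation collapses to $(B^*G-D)u^*=0$, and the nutrient equation reads $\Lambda=B^*\big(d+\langle(\beta_s,\beta_m),u^*\rangle\big)$. A nonzero $u^*\ge 0$ forces $1/B^*$ to be an eigenvalue of the nonnegative (and, since $\varepsilon_s,\varepsilon_m>0$, irreducible) matrix $D^{-1}G$ with nonnegative eigenvector, hence $1/B^*=r(D^{-1}G)$ with positive eigenvector unique up to scaling. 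Rewriting $B_0D^{-1}G=\left(\begin{smallmatrix}\mathcal{R}_s&\mathcal{K}_{m\to s}\\ \mathcal{K}_{s\to m}&\mathcal{R}_m\end{smallmatrix}\right)$ by means of \eqref{eq-threshold-Tj}--\eqref{eq-threshold-Kj}, and using the closed form of the spectral radius and eigenvector of a $2\times2$ matrix, gives the stated formulas for $r(D^{-1}G)$ and $\phi_0$. Writing $u^*=c\,\phi_0$ and substituting into the nutrient balance yields $c=d\big(B_0r(D^{-1}G)-1\big)/\langle(\beta_s,\beta_m),\phi_0\rangle$, and $B_0r(D^{-1}G)>1$ is exactly the condition making $c>0$, i.e.\ making $E^*_{s-m}$ an admissible positive stationary state.

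For \textbf{(ii)}, note first that $h(u,v)=\alpha\,\tilde h(u,v)$ with $\tilde h(u,v)=\langle 1,v\rangle/(a_H+b_HN)$ real-analytic near any state with $N>0$, so the stationary map $F(\alpha,B,u,v)$ built from the right-hand side of \eqref{eq-GenModelCompact} is analytic in all its arguments. At $\alpha=0$ the $v$-block decouples: $(b^0_*(G_p-L_p)-D_p)v^0_*=0$ forces $1/b^0_*=r(D_p^{-1}(G_p-L_p))$ with Perron eigenvector $\varphi_0$, computed exactly as in \textbf{(i)} once one checks $(G_p-L_p)_{11}=\tau_p\beta_p(1-\theta)(1-\varepsilon_p)$ and $(G_p-L_p)_{22}=\tau_{m.p}\beta_{m.p}(1-\theta)(1-\varepsilon_{m.p})$, so that $B_0D_p^{-1}(G_p-L_p)=\left(\begin{smallmatrix}\mathcal{R}_p&\mathcal{K}_{m.p\to p}\\ \mathcal{K}_{p\to m.p}&\mathcal{R}_{m.p}\end{smallmatrix}\right)$. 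The $u$-block then gives $u^0_*=-(D^{-1}G-\tfrac{1}{b^0_*}\I)^{-1}D^{-1}L_pv^0_*$, well defined precisely because $1/b^0_*=r(D_p^{-1}(G_p-L_p))>r(D^{-1}G)$, so that $\tfrac{1}{b^0_*}\I-D^{-1}G$ is a nonsingular $M$-matrix with positive inverse; combined with $D^{-1}L_pv^0_*>0$ this yields $u^0_*>0$. The nutrient balance fixes the scale $v^0_*=c'\varphi_0$ with $c'=d(B_0/b^0_*-1)/\langle\beta,(u^0_*/c',\varphi_0)\rangle$, the denominator being positive, so $B_0r(D_p^{-1}(G_p-L_p))>1$ makes $c'>0$ and $(b^0_*,u^0_*,v^0_*)$ a componentwise positive equilibrium of the $\alpha=0$ system — the leading order of the claimed expansion.

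To continue the branch to $\alpha>0$ I would apply the implicit function theorem at $(0,b^0_*,u^0_*,v^0_*)$. Granting that the relevant Jacobian is nonsingular (see below), this produces a unique analytic branch $\alpha\mapsto(B^*,N_s^*,N_m^*,N_p^*,N_{m.p}^*)(\alpha)$ through this point, whose Taylor expansion gives $(N_s^*,N_m^*)=u^0_*+\alpha u^1_*+\mathcal{O}(\alpha^2)$, $(N_p^*,N_{m.p}^*)=v^0_*+\alpha v^1_*+\mathcal{O}(\alpha^2)$, $B^*=b^0_*+\alpha b^1_*+\mathcal{O}(\alpha^2)$; componentwise positivity, i.e.\ coexistence of all four strains, persists for small $\alpha>0$ by continuity since the $\alpha=0$ state is strictly positive in every coordinate. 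A short case analysis on whether $u$ and/or $v$ vanish then shows that $E^0$, $E^*_{s-m}$ and $E^*$ exhaust the stationary states.

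The hard part is verifying the hypothesis of the implicit function theorem, namely that the Jacobian
\[
J=\begin{pmatrix}-(d+\langle\beta,(u^0_*,v^0_*)\rangle)&-b^0_*(\beta_s,\beta_m)&-b^0_*(\beta_p,\beta_{m.p})\\ Gu^0_*+L_pv^0_*&b^0_*G-D&b^0_*L_p\\(G_p-L_p)v^0_*&0&b^0_*(G_p-L_p)-D_p\end{pmatrix}
\]
is nonsingular; the naive block-triangular argument fails because the bottom-right block $b^0_*(G_p-L_p)-D_p$ is singular with kernel $\R\varphi_0$, since $1/b^0_*$ is exactly its Perron value. I would close this by solving $J(x,y,z)^T=0$ and using Fredholm solvability: testing the third block of equations against the positive left Perron eigenvector $\psi_0$ of $b^0_*(G_p-L_p)-D_p$ gives $x\,\psi_0^T(G_p-L_p)v^0_*=0$ with $\psi_0^T(G_p-L_p)v^0_*=\tfrac{c'}{b^0_*}\psi_0^TD_p\varphi_0>0$, forcing $x=0$; then $z=s\varphi_0$, the second block gives $y=(s/c')u^0_*$, and the first equation reduces to $s\,\langle\beta,(u^0_*/c',\varphi_0)\rangle=0$ with the same positive denominator, forcing $s=0$. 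Hence $\ker J=\{0\}$. The remaining ingredients — the $2\times2$ eigenpair formulas, the $M$-matrix positivity, and the scalings from the nutrient balance — are direct computations.
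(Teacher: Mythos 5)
Your proposal is correct. Part (i) is essentially the paper's argument: impose $v=0$, reduce the $\dot u$-equation to the eigenvalue problem $D^{-1}Gu=u/B$, identify $B_0D^{-1}G$ with the matrix of $\mathcal{R}$'s and $\mathcal{K}$'s, and fix the scale $c$ from the nutrient balance, with $B_0r(D^{-1}G)>1$ exactly the positivity condition. For part (ii) the leading-order computation also coincides with the paper's (decoupled $v$-block giving the Perron pair $(1/b^0_*,\varphi_0)$ of $D_p^{-1}(G_p-L_p)$, then $u^0_*$ via the resolvent of $D^{-1}G$ at $1/b^0_*$, then the scale from the nutrient equation), but you justify the continuation to $\alpha>0$ differently: the paper performs a formal Lyapunov--Schmidt expansion in $\alpha$ (citing Cushing) and imposes the Fredholm solvability condition at order $\alpha^1$ to determine $b^1_*$, whereas you invoke the implicit function theorem and prove directly that the full $5\times 5$ Jacobian in $(B,u,v)$ at $\alpha=0$ has trivial kernel, correctly noting that the naive block-triangular argument fails because $b^0_*(G_p-L_p)-D_p$ is singular and repairing this by pairing against its left Perron eigenvector. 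These are two sides of the same coin --- the one-dimensional kernel of the $v$-block is compensated by the scalar variable $B$, which is precisely what your kernel computation shows --- but your route buys a genuine existence statement for the branch and a rigorous $\mathcal{O}(\alpha^2)$ remainder, which the paper's formal expansion leaves implicit, while the paper's route additionally yields the explicit formula for $b^1_*$ (not needed for the statement as given). Two minor points worth making explicit if you write this up: the positivity of $\varphi_0$, $\phi_0$ and of the left eigenvector $\psi_0$ rests on the irreducibility of $D^{-1}G$ and $D_p^{-1}(G_p-L_p)$, guaranteed by $\varepsilon_j\in(0,1)$ and $\theta\in(0,1)$; and your closing remark that $E^0$, $E^*_{s-m}$, $E^*$ exhaust the stationary states needs the observation that $u=0$, $v\neq 0$ is impossible because the stationary $\dot u$-equation then forces $BL_pv=0$ with $L_p$ a positive diagonal matrix.
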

We referrer to Section \ref{proof of thm-equi-Model} for the proof of Theorem \ref{thm-equi-Model} above.

\subsection{Estimates for small mutation ratios and flux rate of HGT}
Let us denote by $E= (B,N_s, N_m, N_p, N_{m.p})$ the state variables of System \eqref{eq-GenModel}. For a given solution $E$ of System \eqref{eq-GenModel}, we denote by $P_R(E)=(N_m+N_p+N_{m.p})/N $ the resistance frequency or proportion corresponding to this solution. Further, the proportion $P_R(E)$ is decomposed in term of mutational resistance ($\mathcal{P}^*_m=N_m/N$), plasmidic ($\mathcal{P}^*_p=N_p/N$) and mutational/plasmidic ($\mathcal{P}^*_{m.p}=N_{m.p}/N$) such that $P_R(E)= \mathcal{P}^*_m+ \mathcal{P}^*_p+ \mathcal{P}^*_{m.p}.$ Next, mostly based on Theorem \ref{thm-equi-Model} for small mutation ratios $\varepsilon_v$, we derive simple approximations of resistance frequencies $P_R \left( E^*_{s-m}\right)$ and $P_R \left( E^*\right)$ at stationary states $E^*_{s-m}= \left(B^*,N^*_s,N^*_m,0,0 \right)$ and $E^*= \left(B^*,N^*_s,N^*_m,N^*_p,N^*_{m.p} \right)$. For simplicity and without loss of generality, we express/standardize parameters $\epsilon_j$  and $\alpha$ as functions of the same quantity, let us say $\eta$, with $\eta \ll 1$ a small parameter. This parameter $\eta$ can be seen as the overall contribution of genetic processes to the ecological dynamics.

\paragraph{Estimates for $P_R \left( E^*_{s-m}\right)$ for small mutation ratios.} We consider two situations: $\mathcal{R}_s>\mathcal{R}_m$ and $\mathcal{R}_s<\mathcal{R}_m$, which correspond to situations where a mutation provides a fitness cost or a fitness benefit.
% \ms{\bf <I wonder if we couldn't express/standardize all parameters $\epsilon_i$  and $\alpha$ as well as functions of the same quantity, let us say $\eta$, with $\eta\rightarrow\infty$, which would read as the overall contribution of genetic processes to the ecological dynamics. This would simplify and clarify the following formulas without much loss of generality.>}
% \sa{\bf [I also wonder whether it might not be a good idea to write in a sentence or two what these cases correspond to (mutation providing a fitness benefit or mutation providing a fitness cost)]}

{\it First case: $\mathcal{R}_s>\mathcal{R}_m$ {\it i.e.} $\mathcal{T}_s>\mathcal{T}_m$ (mutation providing a fitness cost).} We find, 
\begin{equation}\label{prop Esm Rs is max}
	\left\{
	\begin{split}
	   & P_R \left(E^*_{s-m}\right)=\mathcal{P}^*_m= \frac{d_s}{d_m} \frac{\mathcal{T}_s}{\mathcal{T}_s- \mathcal{T}_m} \eta + \mathcal{O}(\eta^2),\\
	   &\text{with} \quad \mathcal{R}_s= \mathcal{T}_s (1-\varepsilon_s)  >1, \quad \text{and   } \mathcal{T}_s > \mathcal{T}_m.
	\end{split}
	\right.
\end{equation}
% \[
% \begin{split}
% &\frac{1}{B^*}= \mathcal{R}_s/B_0+ \mathcal{O}(\varepsilon_s\varepsilon_m),\\
% &\left(N^*_s,N^*_m \right)= \frac{d (\mathcal{R}_s-1)} {\beta_s(\mathcal{R}_s- \mathcal{R}_m)+ \beta_{m}\mathcal{K}_{s\to m}  } \left(\mathcal{R}_s- \mathcal{R}_m, \mathcal{K}_{s\to m} \right)^T+ \mathcal{O}(\varepsilon_s\varepsilon_m),\\
% & P_R(E^*_{s-m})= \mathcal{P}^*_m + \mathcal{O}(\varepsilon_s\varepsilon_m),\\
% &\text{with }  \mathcal{P}^*_m = \frac{\mathcal{K}_{s\to m}} {\mathcal{K}_{s\to m} + \mathcal{R}_s- \mathcal{R}_m}.
% \end{split}
% \]

{\it Second case: $\mathcal{R}_s<\mathcal{R}_m$, {\it i.e.} $\mathcal{T}_s<\mathcal{T}_m$ (mutation providing a fitness benefit).} We find,
\begin{equation}\label{prop Esm Rm is max}
	\left\{
	\begin{split}
	   & P_R \left(E^*_{s-m}\right)= \mathcal{P}^*_m= 1- \frac{d_m \left(\mathcal{T}_m-\mathcal{T}_s \right) }{4B_0\tau_s\beta_s} \left( \left( \frac{\mathcal{T}_s +\mathcal{T}_m}{\mathcal{T}_s -\mathcal{T}_m} \right)^2  -1\right) \eta  + \mathcal{O}(\eta^2),\\
	   &\text{with} \quad \mathcal{R}_m= \mathcal{T}_m (1-\varepsilon_m)  >1, \quad \text{and   } \mathcal{T}_m > \mathcal{T}_s.
	\end{split}
	\right.
\end{equation}
% \[
% \begin{split}
% &\frac{1}{B^*}= \mathcal{R}_m/B_0+ \mathcal{O}(\varepsilon_s\varepsilon_m),\\
% &\left(N^*_s,N^*_m \right) = \frac{d (\mathcal{R}_m-1)} { \beta_{m} } \left(0, 1 \right)^T+ \mathcal{O}(\varepsilon_s\varepsilon_m),\\
% &P_R(E^*_{s-m})= \mathcal{P}^*_m + \mathcal{O}(\varepsilon_s\varepsilon_m),\\
% &\text{with } \mathcal{P}^*_m=  1.
% \end{split}
% \]

\paragraph{Estimates of $E^*$ for small mutation ratios and flux rate of HGT.} Here, we also have two situations: in one case the mutation somehow `compensates' for the presence of the plasmid ($\mathcal{R}_p<\mathcal{R}_{m.p}$), whereas in the other mutation and plasmid costs add up ($\mathcal{R}_p>\mathcal{R}_{m.p}$).
% \sa{\bf [Same comment here: in one case the mutation kind of `compensates' for the presence of the plasmid, whereas in the other mutation cost and plasmid cost add up (assuming we have $\mathcal{R}_p<\mathcal{R}_s$]}
\\

{\it First case: $\mathcal{R}_p> \mathcal{R}_{m.p}$, {\it i.e.} $\mathcal{T}_p> \mathcal{T}_{m.p}$ (mutation and plasmid cost add up).} The existence of the stationary state $E^*$ is ensured by
\begin{equation*}
\mathcal{R}_p> \max \left( \mathcal{R}_s, \mathcal{R}_m \right)\quad \text{ and } \quad \mathcal{R}_p >1.
\end{equation*}
Here, the resistance proportion $P_R(E^*)$ is decomposed in term of mutational resistance ($\mathcal{P}^*_m$), plasmidic ($\mathcal{P}^*_p$) and mutational/plasmidic ($\mathcal{P}^*_{m.p}$) such that 
\begin{equation}\label{eq resistance decomposed0}
\begin{split}
&P_R(E^*)= \mathcal{P}^*_m+ \mathcal{P}^*_p+ \mathcal{P}^*_{m.p},\\
&\text{wherein}\\
&	\left\{
	\begin{split}
	   \mathcal{P}^*_m= & \mathcal{O}(\eta), \\
	   \mathcal{P}^*_p= & \frac{1}{\left( 1 + \frac{B_0 \theta} {\mathcal{T}_p (1-\theta)- \mathcal{T}_s}   \frac{B_0\tau_p\beta_p}{d_s} \right)} + \mathcal{O}(\eta),\\
	   \mathcal{P}^*_{m.p}=&  \mathcal{O}(\eta),\\
	   &\text{with} \quad \mathcal{R}_p= \mathcal{T}_p (1-\theta-\varepsilon_p)  >1, \quad \text{and   } \mathcal{T}_p (1-\theta) > \max \left( \mathcal{T}_s, \mathcal{T}_m \right).
	\end{split}
	\right.
\end{split}
\end{equation}

{\it Second case: $\mathcal{R}_p< \mathcal{R}_{m.p}$, {\it i.e.} $\mathcal{T}_p< \mathcal{T}_{m.p}$ (mutation compensates for the presence of the plasmid).} Again, the existence of the stationary state $E^*$ is ensured by 
\begin{equation*}
\mathcal{R}_{m.p}> \max \left( \mathcal{R}_s, \mathcal{R}_m \right)\quad \text{ and } \quad \mathcal{R}_{m.p} >1.
\end{equation*}

Here we have,
\begin{equation}\label{eq resistance decomposed}
\begin{split}
&P_R(E^*)= \mathcal{P}^*_m+ \mathcal{P}^*_p+ \mathcal{P}^*_{m.p},\\
&\text{wherein}\\
&	\left\{
\begin{split}
	\mathcal{P}^*_m=& \frac{\theta B_0^2 \tau_p\beta_p} {\theta B_0^2 \tau_p\beta_p +d_m \left[\mathcal{T}_{m.p}(1-\theta)-\mathcal{T}_m \right] }  + \mathcal{O}(\eta),\\
	\mathcal{P}^*_p= &  \mathcal{O}(\eta),\\
	 \mathcal{P}^*_{m.p}= & \frac{d_m \left[\mathcal{T}_{m.p}(1-\theta)-\mathcal{T}_m \right]} {\theta B_0^2 \tau_p\beta_p +d_m \left[\mathcal{T}_{m.p}(1-\theta)-\mathcal{T}_m \right] } + \mathcal{O}(\eta).
	\end{split}
	\right.
\end{split}
\end{equation}

% We have $\varphi_0= \left(0 , 1\right)^T + \mathcal{O}(\eta^2)$ and $\omega_0= \left(0 , 1\right) + \mathcal{O}(\eta^2)$. Which gives the following approxiamtion of the stationary state $E^*$:
% \[
% \begin{split}
% & B^*=  \frac{B_0 }{\mathcal{R}_{m.p}} \left(1+ \frac{\alpha}{d_{m.p}}  \frac{N_p^0+ N_{m.p}^0}{a_H+b_HN^0} \right) + \mathcal{O}(\eta^2)+ \mathcal{O}(\alpha^2),\\
% & (N_p^*,N_{m.p}^*)=   c_0  \left(0, 1\right)^T + \mathcal{O}(\alpha) +\mathcal{O}(\varepsilon_p\varepsilon_{m.p}),\\
% &(N_s^*,N_m^*)= \frac{B_0c_0}{ (\mathcal{R}_p-\mathcal{R}_m) } \left(0,
% \mathcal{K}_{m.p\to m} \right)^T + \mathcal{O}(\alpha) +\mathcal{O}(\varepsilon_m\varepsilon_{m.p}) +  \mathcal{O}(\varepsilon_s\varepsilon_m) + \mathcal{O}(\varepsilon_p\varepsilon_{m.p}),
% \end{split}
% \]
% with $c_0= \frac{d  \left(\mathcal{R}_{m.p}-1 \right) } { \beta^* }>0$ and  $
% \beta^*=  \beta_{m.p} + \frac{B_0}{  (\mathcal{R}_p-\mathcal{R}_m)}  \beta_m  \mathcal{K}_{m.p\to m} .$ Here we have 
% \begin{equation}
% \begin{split}
% &P_R(E^*)= \mathcal{P}^*_m+ \mathcal{P}^*_{m.p} \simeq 1,\\
% &\text{with}\\
% &\mathcal{P}^*_m= \frac{S^*} {1+S^*},\\
% &\mathcal{P}^*_{m.p}= \frac{1 } {1+S^*},\\
% & S^*= \frac{B_0 \mathcal{K}_{m.p\to m}}{ (\mathcal{R}_p-\mathcal{R}_m) }.
% \end{split}
% \end{equation}

\subsection{Stability results of the general model \eqref{eq-GenModel}}

\begin{figure}[!htp]
	\centering
	\includegraphics[width=.9\textwidth]{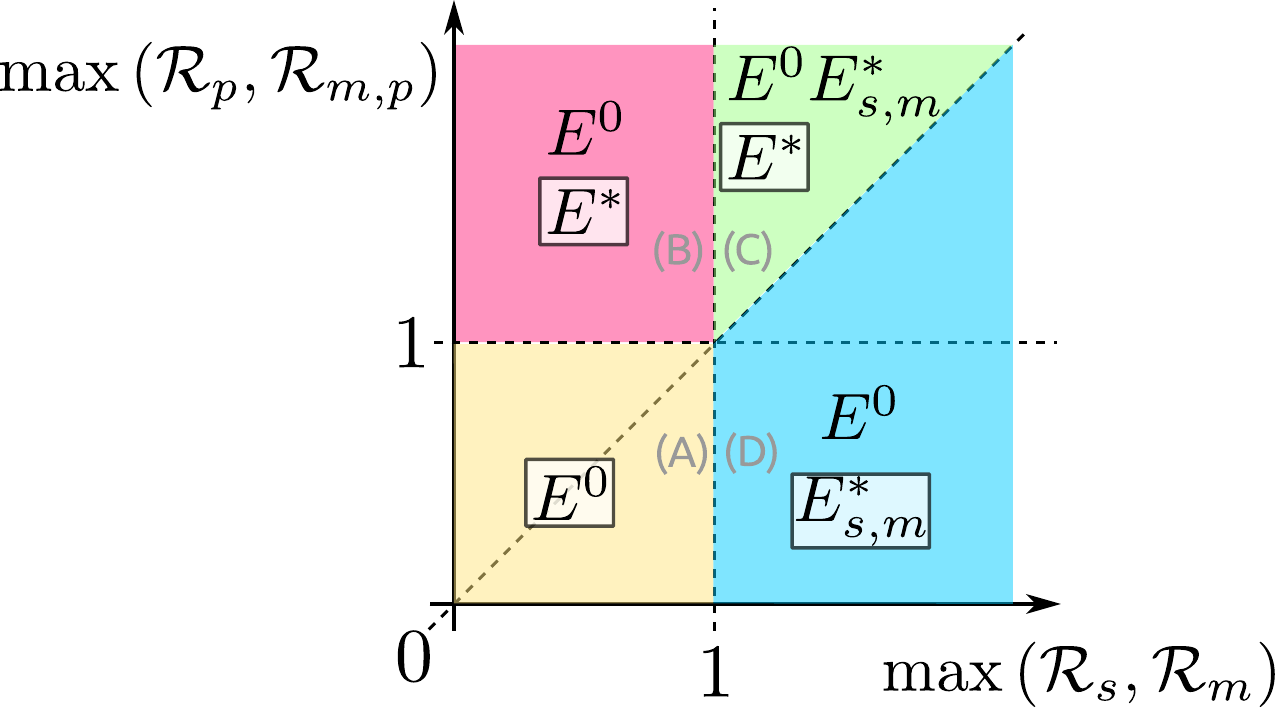}
	\caption{Summary of the qualitative analysis of the general model \eqref{eq-GenModel}. The positive orthant is divided into four zones ((A) to (D)). In each of them, the feasible stationary states are shown -- their notations are boxed if and only if they are locally asymptotically stable (l.a.s.). Note that in zone (D), $E^*_{s,m}$ is l.a.s. providing that \eqref{cond_las_Esm} is satisfied.}
	\label{DiagStability}
\end{figure}

\begin{figure}
	\centering
	\begin{tabular}{cc}
		\includegraphics[width=.5\textwidth]{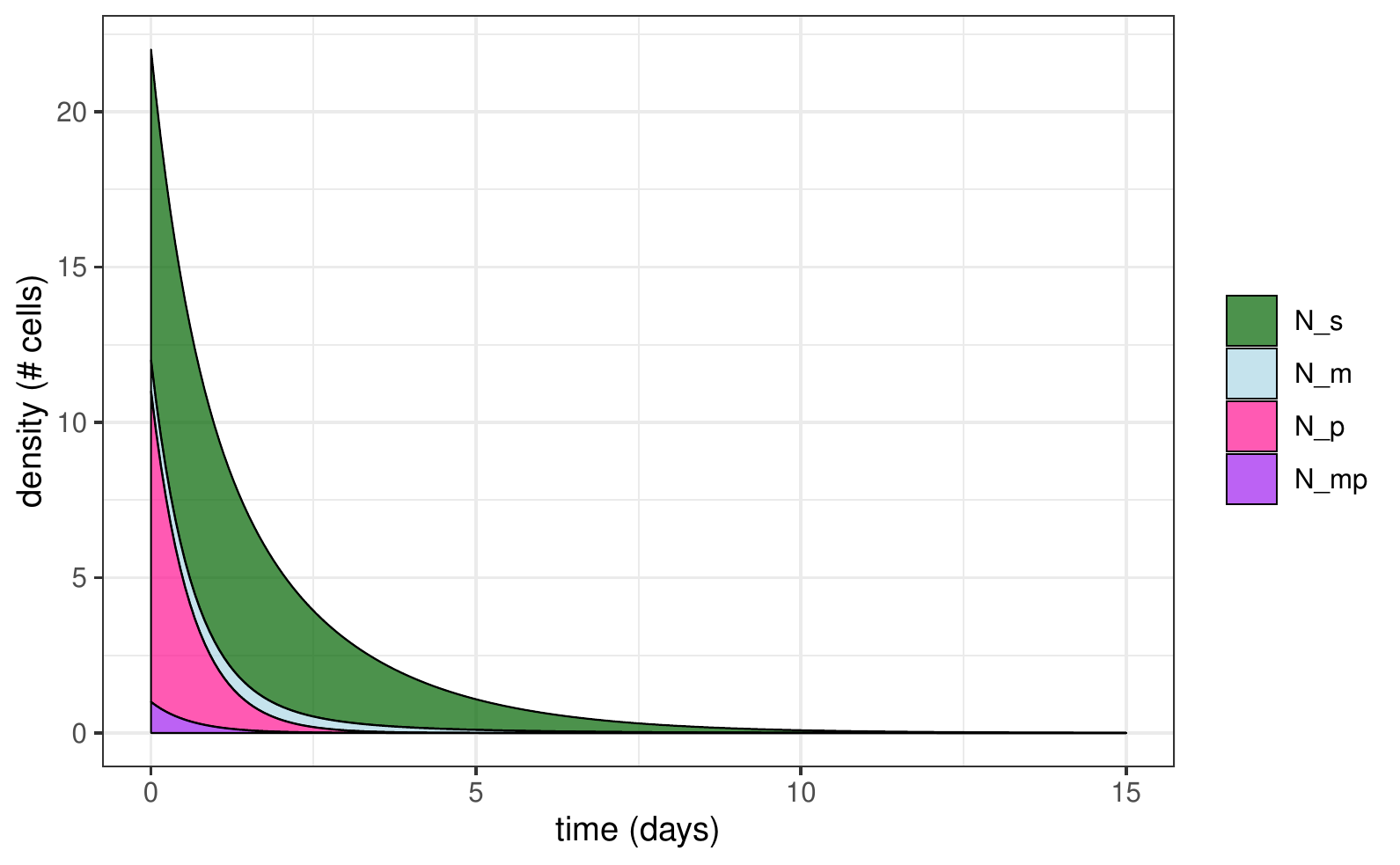} & \includegraphics[width=.5\textwidth]{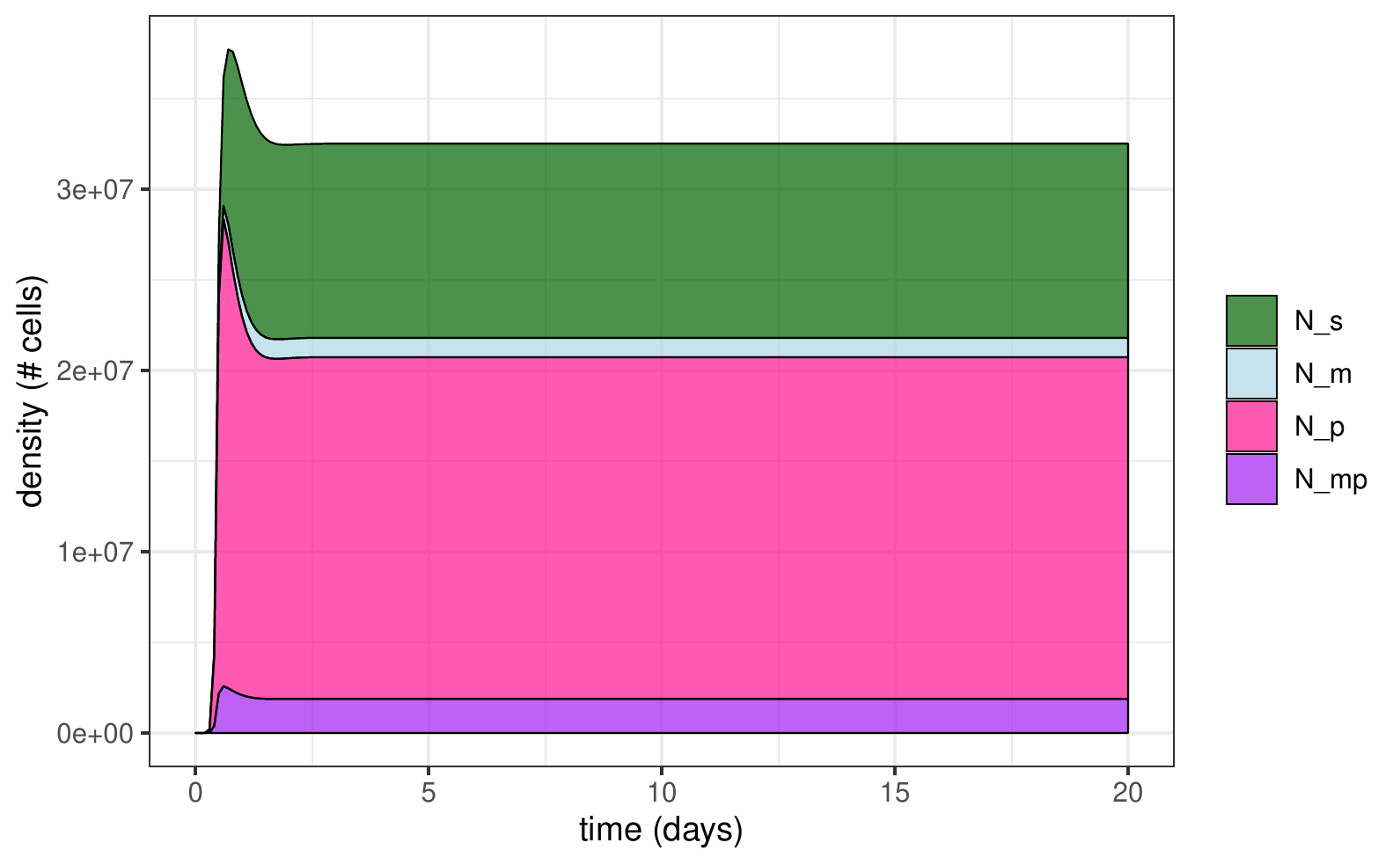}  \\
		(A) & (B)\\
		\includegraphics[width=.5\textwidth]{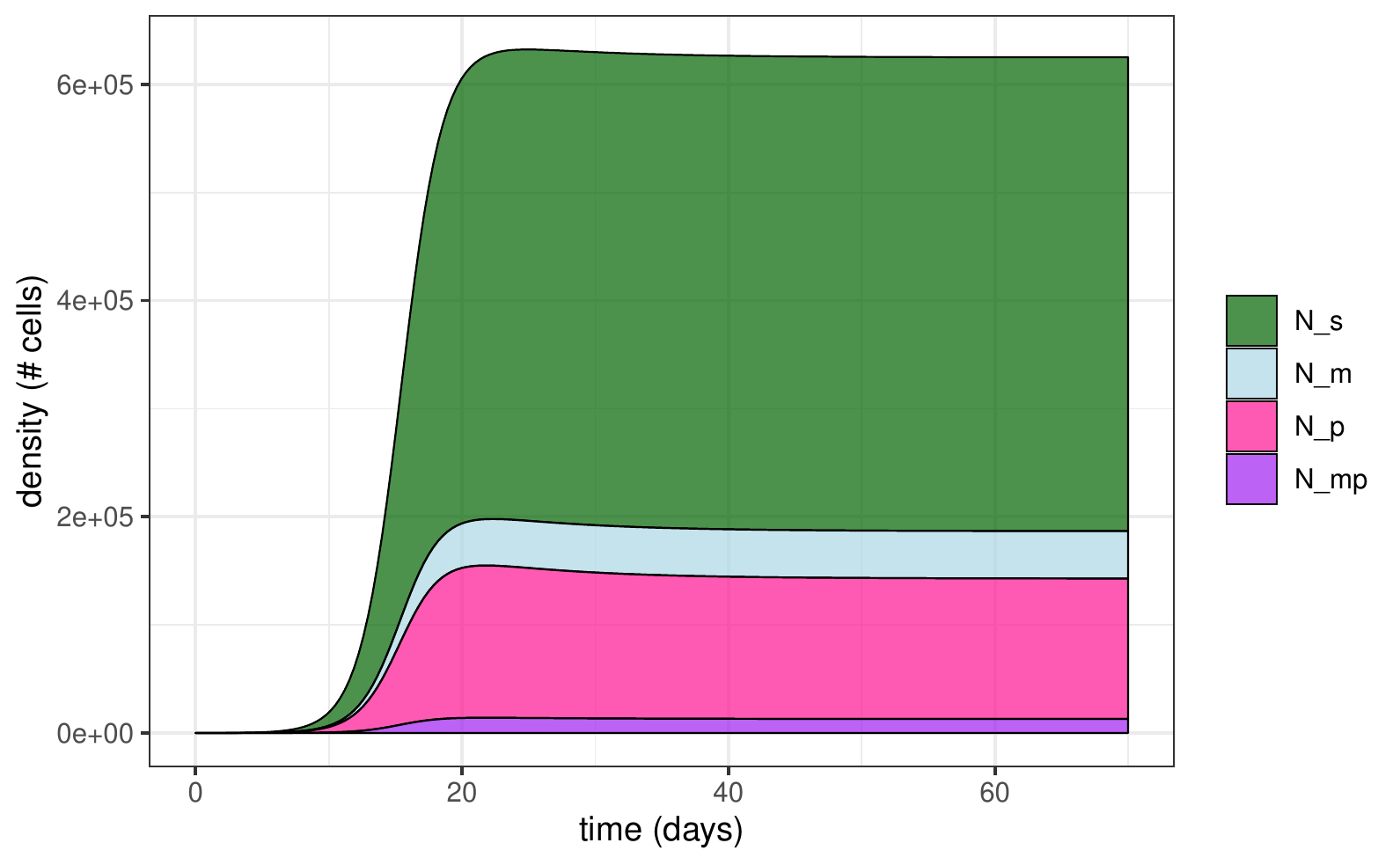}  & \includegraphics[width=.5\textwidth]{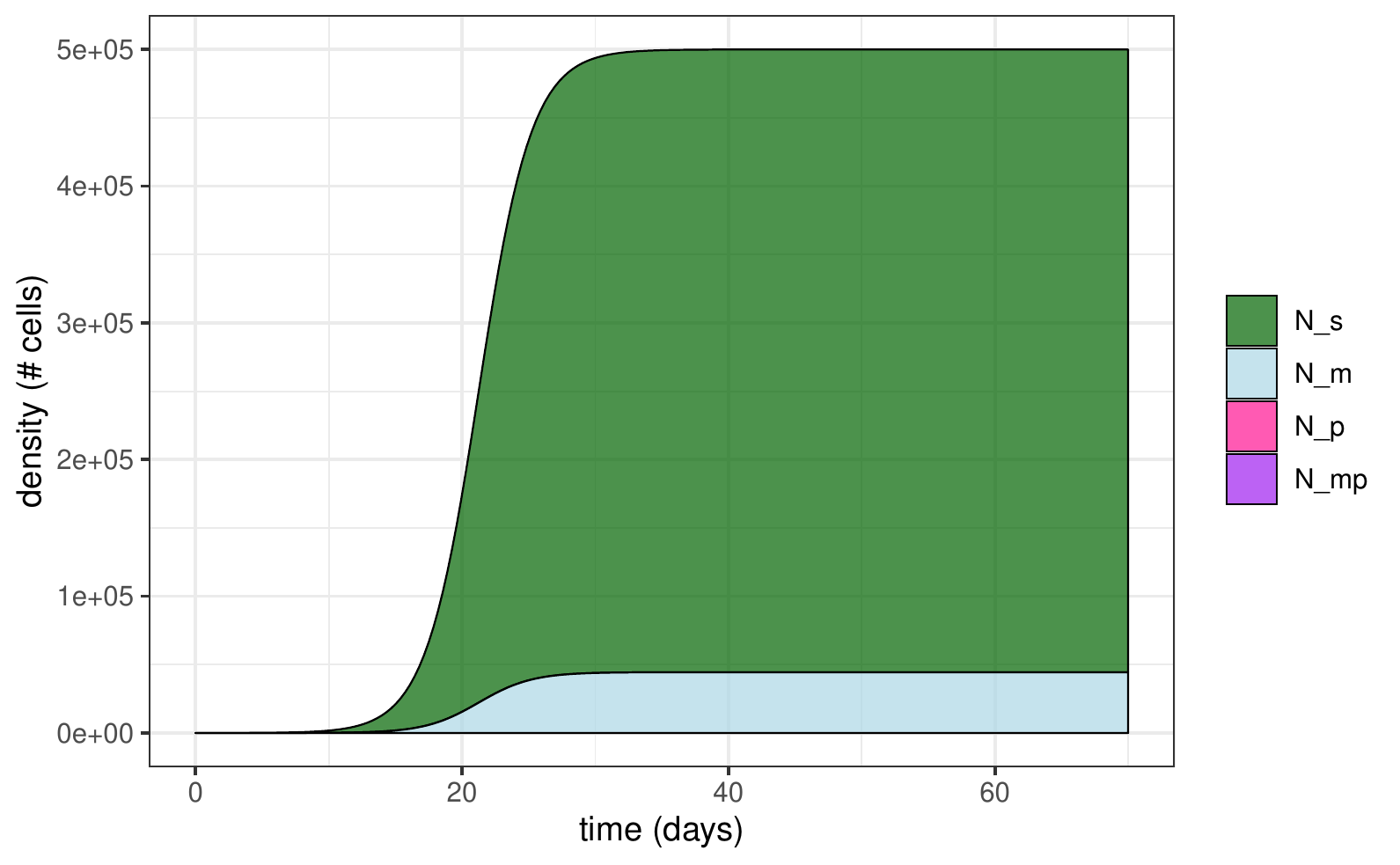} \\
		(C) & (D)
	\end{tabular}
	\caption{Bacterial density chronicles, as decimal logarithm of cell number
		per mL through time in days, obtained through numerical integration
		of System \eqref{eq-GenModel}: hierarchy diversity. Each plot (A) to (D) is representative of the asymptotic
		behavior of the system in the corresponding zone likewise labelled
		in Fig. \ref{DiagStability}. Shared parameters between plots are, $\Lambda=25,d=d_{j}=5,\beta_{j}=10^{-6}$,$\varepsilon_{j}=10^{-8},\theta=0.25,\alpha=10^{-13},a_H=1,b_H=0$,
		with $j\in\mathcal{J}$ (see Table \ref{Tab-ModelParameters} for units). Initial conditions
		are $\left(B,N_{s},N_{m},N_{p},N_{m.p}\right)\left(0\right)=\left(5,10,1,10,1\right)$.
		With $\mathbf{k}\coloneqq10^{-6}\cdot\left(\tau_{s},\tau_{m},\tau_{p},\tau_{m.p}\right)$,
		varying parameters are: (A) $\mathbf{k}=\left(0.9,0.9,0.9,0.9\right)$,
		(B) $\mathbf{k}=\left(0.9,0.9,10,10\right)$, (C) $\mathbf{k}=\left(1.1,1.1,1.5,1.5\right)$,
		(D) $\mathbf{k}=\left(1.1,1.1,0.9,0.9\right)$. }
	\label{fig:my_label}
\end{figure}

\begin{figure}
	\centering
	\begin{tabular}{cc}
		\includegraphics[width=.5\textwidth]{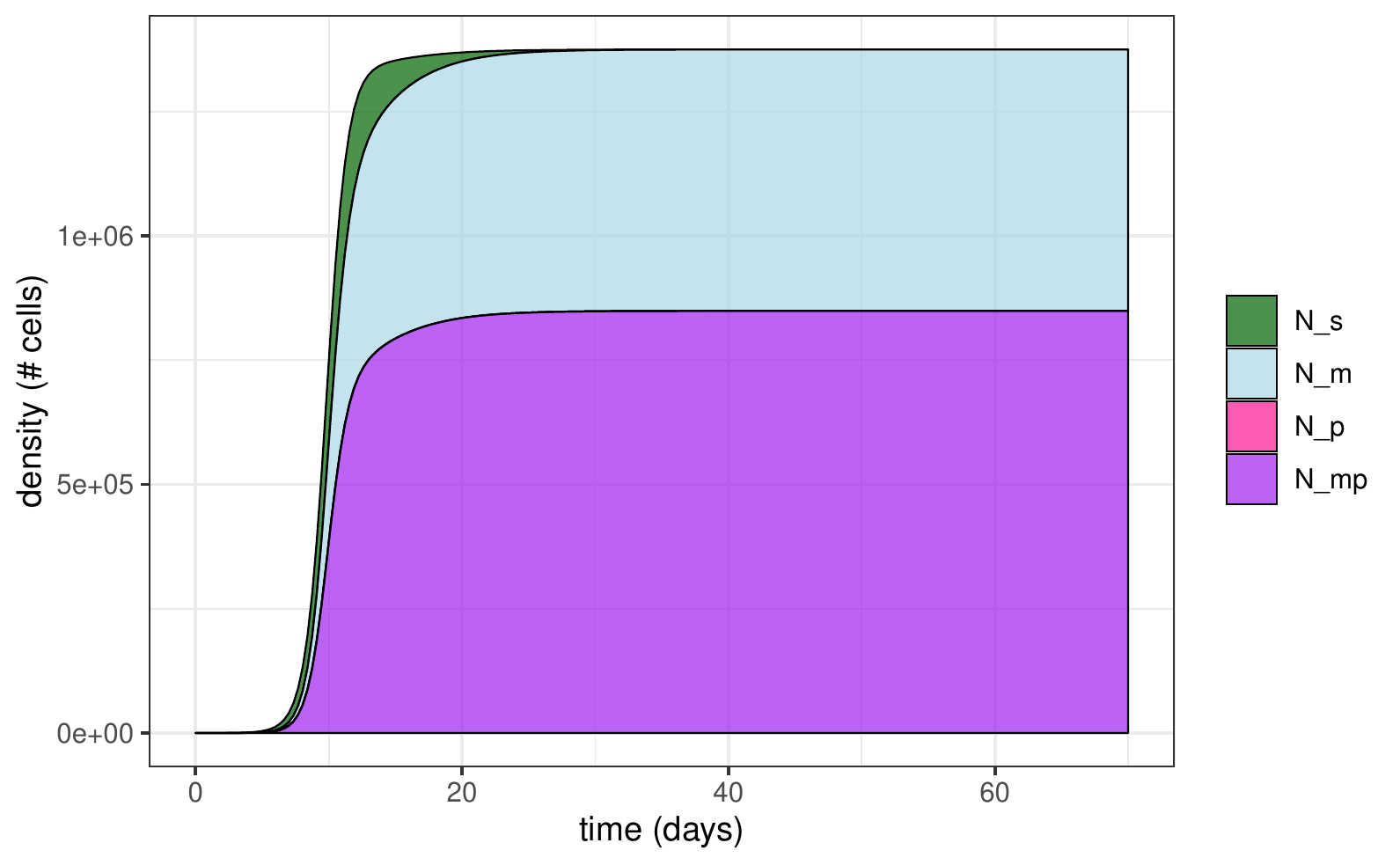}  & \includegraphics[width=.5\textwidth]{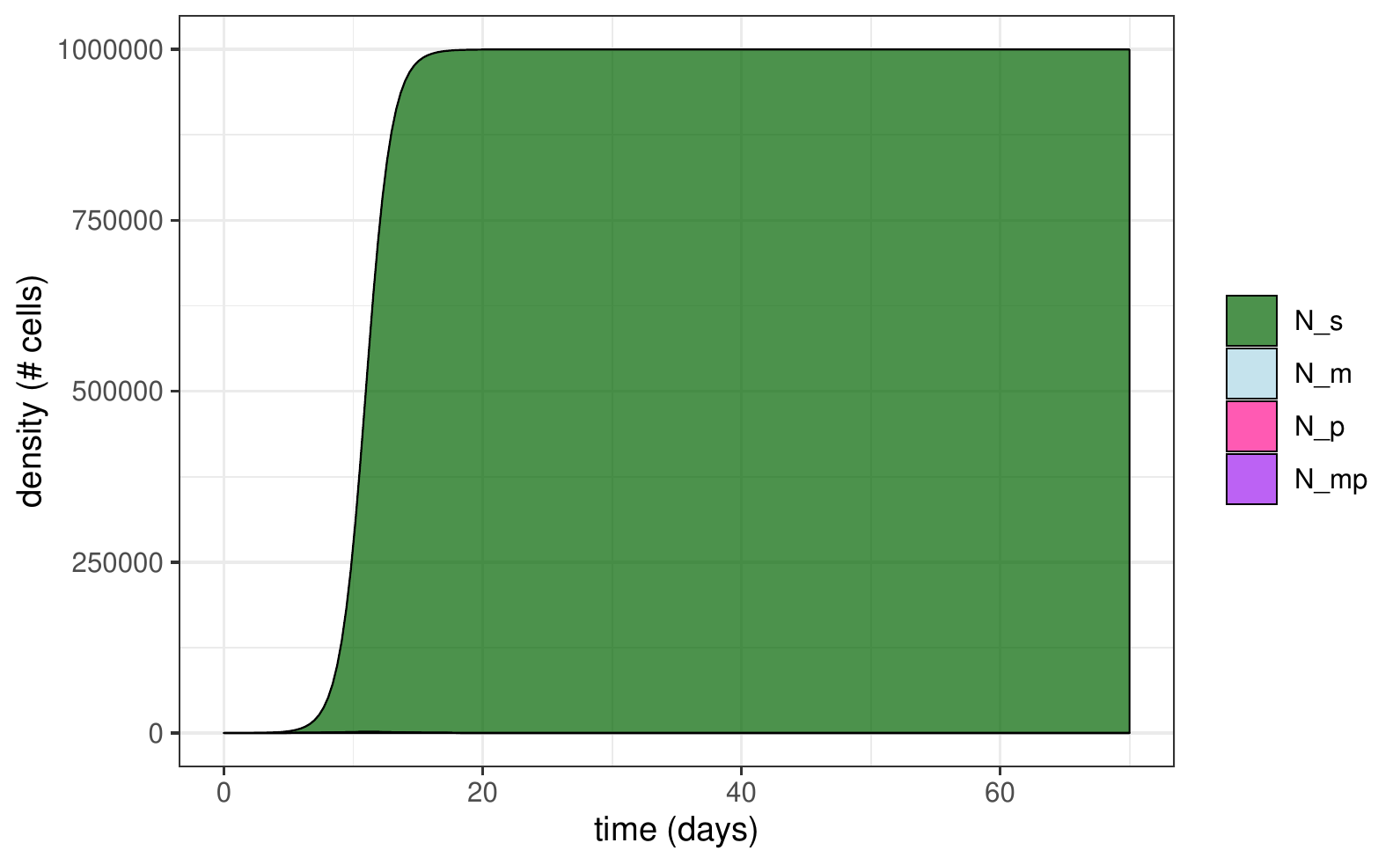} \\
		(C) & (D)
	\end{tabular}
	\caption{Bacterial density chronicles, as decimal logarithm of cell number
		per mL through time in days, obtained through numerical integration
		of system \eqref{eq-GenModel} for treatment-free scenarios: hierarchy diversity. Plots (C) and (D) correspond to zones labelled in Fig. \ref{DiagStability}. All parameters are equal to those of \ref{fig:my_label} except for: (C) $\mathbf{k}=\left(1.2,1.1,1.467,1.7\right)$, (D) $\mathbf{k}=\left(1.2,1.1,1.467,1.4\right)$, where $\mathbf{k}\coloneqq10^{-6}\cdot\left(\tau_{s},\tau_{m},\tau_{p},\tau_{m.p}\right)$.}
	\label{fig:TreatFree}
\end{figure}

\begin{figure}
	\centering
	\begin{tabular}{ccc}
		\includegraphics[width=.4\textwidth]{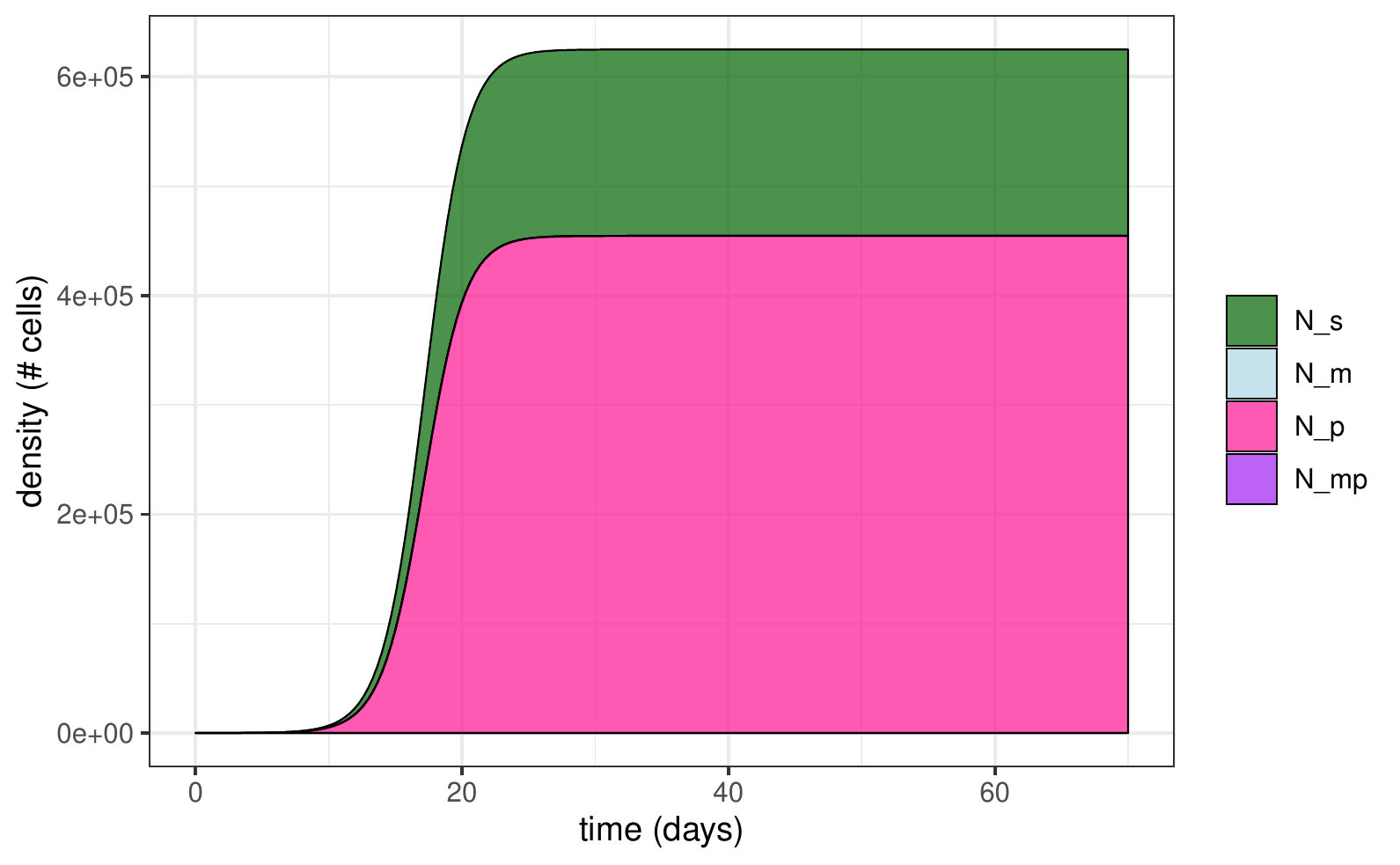}  & \includegraphics[width=.4\textwidth]{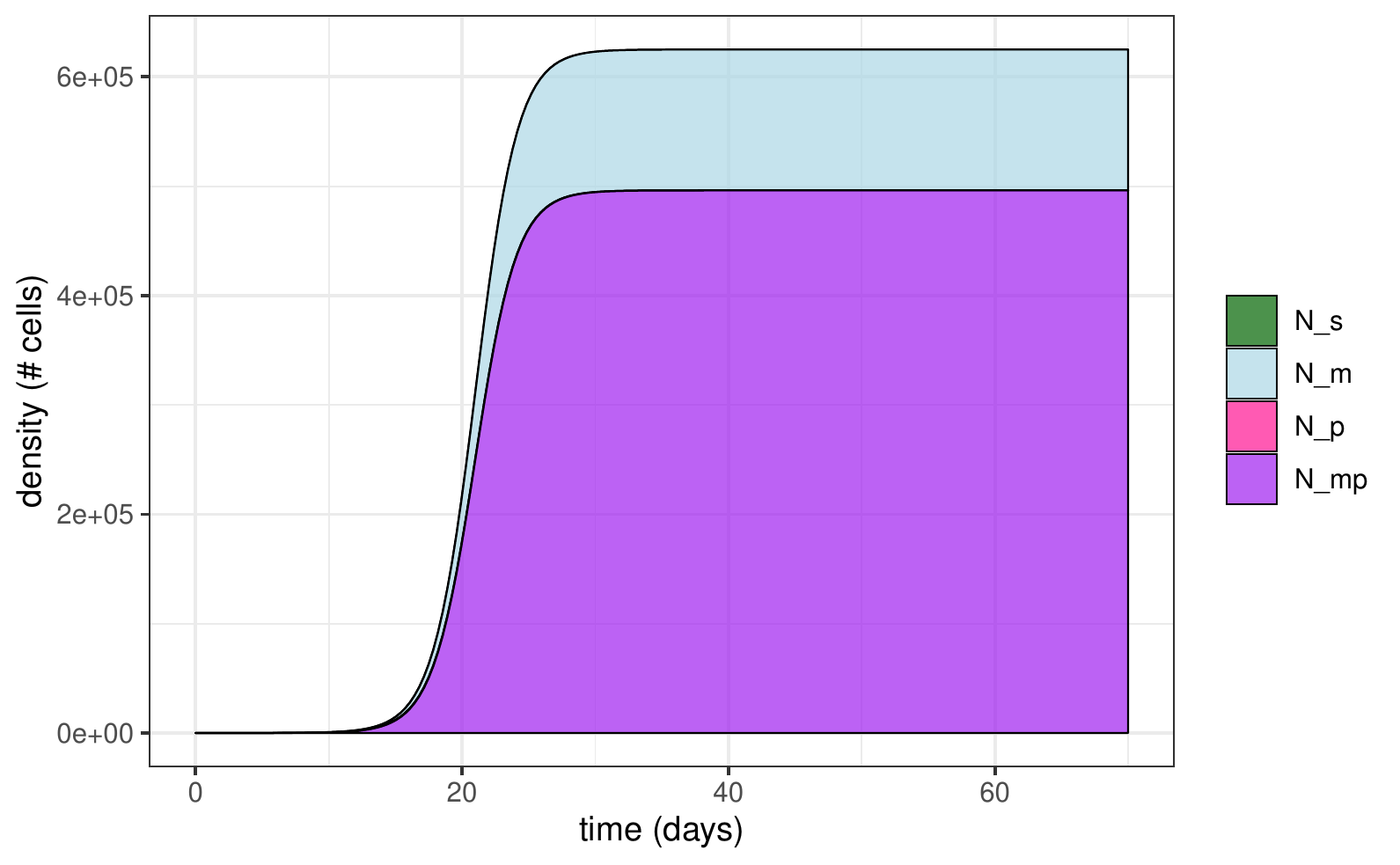}
		& \includegraphics[width=.4\textwidth]{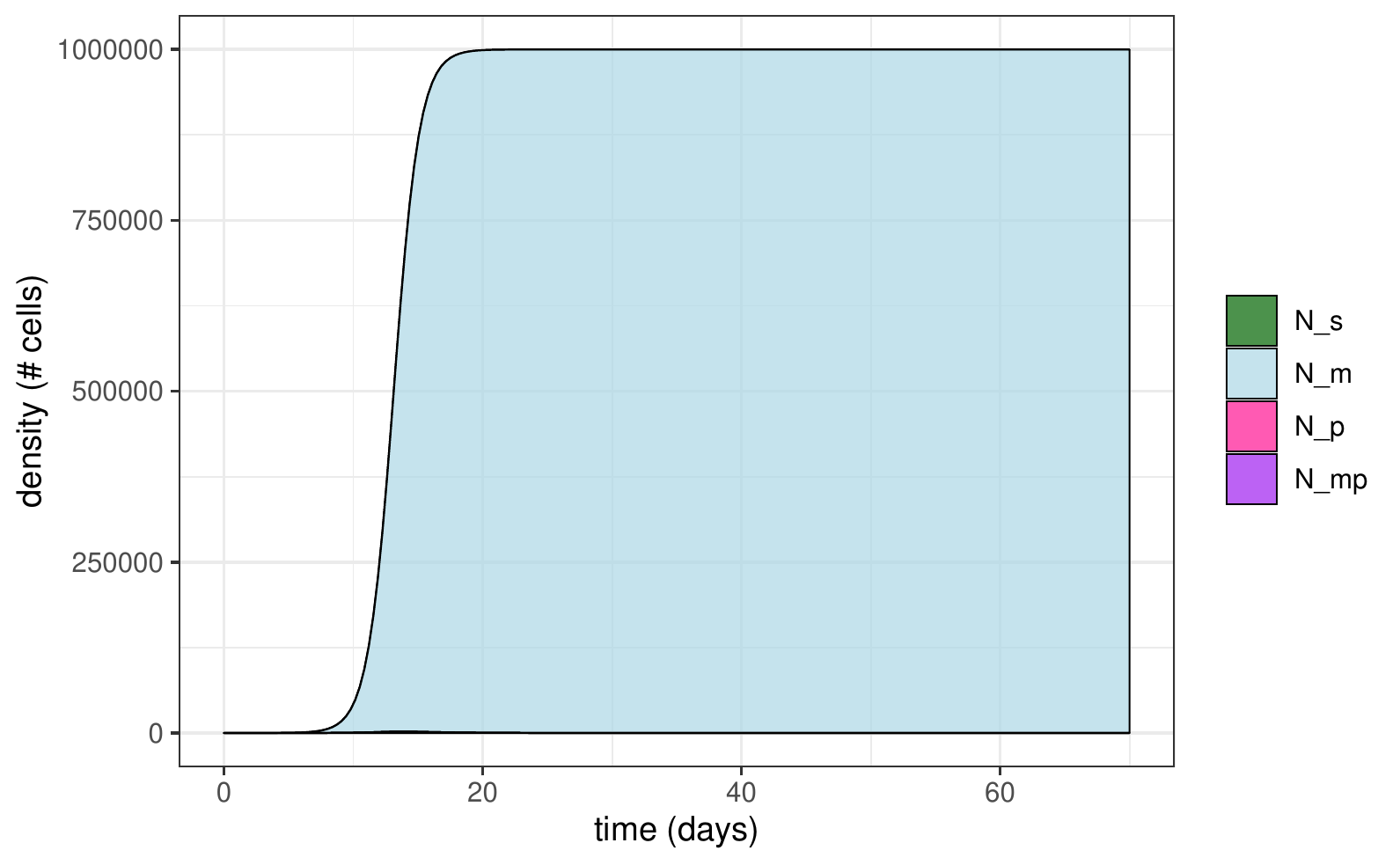}\\
		(B) & (C) & (D) 
	\end{tabular}
	\caption{Bacterial density chronicles, as decimal logarithm of cell number per mL through time in days, obtained through numerical integration of System \eqref{eq-GenModel} for treatment scenarios: hierarchy diversity. Plots (B to D) correspond to zones labelled in Fig. \ref{DiagStability}. All parameters are equal to those of \ref{fig:my_label} except for: (B) $\mathbf{k}=\left(0.9,0.8,1.5,1.4\right)$, (C) $\mathbf{k}=\left(0.9,0.8,1.4,1.5\right)$, (D) $\mathbf{k}=\left(0.9,1.2,1.1,1.5\right)$, where $\mathbf{k}\coloneqq10^{-6}\cdot\left(\tau_{s},\tau_{m},\tau_{p},\tau_{m.p}\right)$.}
	\label{fig:WithTreat}
\end{figure}

Here, we check the local asymptotic stability of stationary states $E^*_{s-m}$ and $E^*$ by linearizing System \eqref{eq-GenModel} at those points. Stability results are summarised in Figures \ref{DiagStability}-\ref{fig:my_label} and given by the following theorem (see Section \ref{proof of thm-stab-reults} for the proof).
\begin{theorem} \label{thm-stab-reults}
	Let mutation ratios and flux rate of HGT  be small enough.
% 	\ms{<In relation to the $\eta$ parameter I suggested above, it would be interesting to support this result by numerical simulations that also estimate the range of HGT rate for which it holds. Put differently, with biologically realistic parameters (orders of magnitude are sufficient), how small alpha should be so that points (i) to (iii) hold? We then compare to biological estimations of HGT.>}
	\begin{description}
		\item[(i)] If $\max \left(\mathcal{R}_p,\mathcal{R}_{m.p} \right) >1 > \max \left(\mathcal{R}_s,\mathcal{R}_m \right)$ then, Model \eqref{eq-GenModel} have two stationary states: $E^0$ (unstable) and $E^*$ (locally asymptotically stable -l.a.s. for short).
		\item[(ii)] If $\max \left(\mathcal{R}_p,\mathcal{R}_{m.p} \right) >\max \left(\mathcal{R}_s,\mathcal{R}_m \right)> 1 $ then, Model \eqref{eq-GenModel} have three stationary states: $E^0$ (unstable), $E^*_{s-m}$ (unstable) and $E^*$ (l.a.s.).
		\item[(iii)] If $\max \left(\mathcal{R}_s,\mathcal{R}_m \right)> 1$ and $\max \left(\mathcal{R}_s,\mathcal{R}_m \right) > \max \left(\mathcal{R}_p,\mathcal{R}_{m.p} \right) $ then, Model \eqref{eq-GenModel} have two stationary states: $E^0$ (unstable), $E^*_{s-m}= (B^*,N^*_s,N^*_m,0,0)$ which is l.a.s. providing that the following condition is satisfied
 \begin{equation}\label{cond_las_Esm}
% 		\frac{\mathcal{R}_p}{\max \left(\mathcal{R}_s,\mathcal{R}_m\right)} +\alpha \frac{N^*_s}{d_p(a_H+b_HN^*)} <1.
\left\{	\begin{split}
	& \frac{\mathcal{R}_p}{\max \left(\mathcal{R}_s,\mathcal{R}_m\right)} +\alpha \frac{N^*_s}{d_p(a_H+b_HN^*)} <1,\\
	& \frac{\mathcal{R}_{m.p}}{\max \left(\mathcal{R}_s,\mathcal{R}_m\right)} + \alpha \frac{N^*_s}{d_{m.p}(a_H+b_HN^*)} <1.
	\end{split}
	\right.
		\end{equation}
	\end{description}
\end{theorem}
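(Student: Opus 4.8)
The plan is to analyze the Jacobian of System~\eqref{eq-GenModel} at each stationary state, exploiting the block structure of the compact form~\eqref{eq-GenModelCompact} and the smallness of the genetic parameters $\varepsilon_j$, $\theta$ (in items (i)--(ii)) and of the flux rate $\alpha$. First I would linearize \eqref{eq-GenModelCompact} at $E^0=(B_0,0,0,0,0)$. Because the bacterial components vanish at $E^0$, the nutrient equation decouples to first order (the $B$-row gives $-d$, a stable direction), and the $(u,v)$-block of the Jacobian becomes the constant matrix $\left[\begin{smallmatrix} B_0G-D & B_0L_p\\ 0 & B_0(G_p-L_p)-D_p\end{smallmatrix}\right]$, since $h(u,v)=H(N)\langle 1,v\rangle$ is quadratically small near $E^0$. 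This is block upper-triangular, so its spectrum is the union of the spectra of $B_0G-D$ and $B_0(G_p-L_p)-D_p$. For small $\varepsilon_j,\theta$ these $2\times2$ matrices are near-diagonal with diagonal entries $d_j(\mathcal{R}_j-1)$ (up to the genetic corrections already accounted for in the definition of $\mathcal{R}_j$), so $E^0$ is unstable precisely when $\max_j\mathcal{R}_j>1$; this is exactly Theorem~\ref{Thm-invasion}(i), which I can simply cite. Hence in all three cases $E^0$ is unstable, as claimed.

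Next, for $E^*_{s-m}=(B^*,N^*_s,N^*_m,0,0)$, I would again use the block structure: at this state $v=(N_p,N_{m.p})=0$, so the Jacobian restricted to the $(B,u)$-coordinates is decoupled from the $v$-coordinates in the sense that the $v$-rows feed only through the quadratic term $h(u,v)u$ (vanishing at $v=0$) plus the linear block $B^*(G_p-L_p)-D_p$. Thus the Jacobian is again block triangular: the "$v$-block" is $B^*(G_p-L_p)-D_p - (\text{HGT correction of order }\alpha)$, whose eigenvalues, for small $\varepsilon_j,\theta,\alpha$, have real parts with the sign of $B^*r(D_p^{-1}(G_p-L_p))-1 = \mathcal{R}_p/\max(\mathcal{R}_s,\mathcal{R}_m)-1 + O(\alpha)$ (using $1/B^*=r(D^{-1}G)$ from Theorem~\ref{thm-equi-Model}(i) and the identification $B^*r(D^{-1}G)=1$). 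A careful accounting of the $\alpha$-perturbation of these two eigenvalues — via standard first-order eigenvalue perturbation, differentiating the characteristic polynomial in $\alpha$ — produces exactly the two inequalities \eqref{cond_las_Esm}: the leading term is $\mathcal{R}_p/\max(\mathcal{R}_s,\mathcal{R}_m)$ (resp.\ $\mathcal{R}_{m.p}/\max(\mathcal{R}_s,\mathcal{R}_m)$) and the linear-in-$\alpha$ correction is the term $\alpha N^*_s/(d_p(a_H+b_HN^*))$ (resp.\ with $d_{m.p}$), since the HGT inflow into the $v$-compartment at this state is $h$ evaluated with $N^*_s$ in the donor-recipient coupling and $H(N^*)=\alpha/(a_H+b_HN^*)$. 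The remaining "$(B,u)$-block" governs stability within the invariant face $\{v=0\}$; there the reduced $3$-dimensional system is a classical chemostat-type competition system whose unique interior equilibrium $(B^*,N^*_s,N^*_m)$ can be shown locally stable by checking the Routh--Hurwitz conditions on the $3\times3$ Jacobian (the off-diagonal mutation couplings are small and the structure is essentially a monotone/cooperative $2\times2$ bacterial subsystem forced by the stable nutrient equation). This yields (iii), and the same reduced-face computation shows $E^*_{s-m}$ is a saddle — unstable transversally — in case (ii), since there $\mathcal{R}_p/\max(\mathcal{R}_s,\mathcal{R}_m)>1$ makes one transversal eigenvalue positive.

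For the stability of $E^*$ in items (i) and (ii), I would use the perturbative description from Theorem~\ref{thm-equi-Model}(ii): $E^*$ is an $O(\alpha)$-perturbation of the "zeroth-order" state in which the plasmid-bearing subsystem $v^0_*$ sits at its chemostat equilibrium $1/b^0_* = r(D_p^{-1}(G_p-L_p))$ and the plasmid-free subsystem $u^0_*$ is slaved to it via $u^0_* = -(D^{-1}G-\tfrac{1}{b^0_*}\I)^{-1}D^{-1}L_pv^0_*$. At $\alpha=0$ the Jacobian is again block triangular (the $u$-equation receives $h u = 0$-order-$\alpha$ input, the $v$-equation's coupling $hu$ is order $\alpha$), with the $v$-block being a stable chemostat-competition Jacobian (eigenvalues with negative real part because $v^0_*$ is the stable interior equilibrium of the $p$/$m.p$ subsystem under the hypothesis $B_0r(D_p^{-1}(G_p-L_p))>1$), and the $u$-block being $b^0_*G-D$, whose spectrum lies strictly in the left half-plane precisely because $r(D^{-1}(b^0_*G))=b^0_*/B^* r(D^{-1}G) = \mathcal{R}^{(u)}/\mathcal{R}^{(v)}<1$ under the standing hypotheses $\max(\mathcal{R}_p,\mathcal{R}_{m.p})>\max(1,\mathcal{R}_s,\mathcal{R}_m)$. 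So at $\alpha=0$ all eigenvalues (plus the $-d+O(\cdot)$ nutrient eigenvalue) have strictly negative real part, and by continuity of the spectrum this persists for $\alpha$ small; hence $E^*$ is l.a.s. The main obstacle I anticipate is the bookkeeping for \eqref{cond_las_Esm} in part (iii): one must track the first-order-in-$\alpha$ correction to the two relevant transversal eigenvalues of $E^*_{s-m}$ precisely enough to recover the stated formulae (in particular getting the denominators $d_p(a_H+b_HN^*)$ and $d_{m.p}(a_H+b_HN^*)$ and the coefficient $N^*_s$ right, and verifying that the genetic $O(\varepsilon)$ corrections do not contribute at this order), whereas the sign-of-leading-term arguments and the interior-face Routh--Hurwitz checks are routine.
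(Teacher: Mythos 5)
Your overall strategy --- exploiting the block structure of the Jacobian, extracting the two transversal eigenvalues at $E^*_{s-m}$ to obtain \eqref{cond_las_Esm}, handling the interior block by Routh--Hurwitz, and perturbing in the small genetic parameters --- coincides with the paper's treatment of $E^0$ and of $E^*_{s-m}$. The genuine gap is in your stability argument for $E^*$ in items (i)--(ii). You assert that at $\alpha=0$ the Jacobian at $E^*$ is block triangular, with a $u$-block equal to $b^0_*G-D$ and a separate $(B,v)$ ``chemostat'' block. This is not true: setting $\alpha=0$ only removes the HGT coupling, while the nutrient row still carries the entries $-B^*\beta_s,\,-B^*\beta_m$ in the $u$-columns, the $u$-rows carry $Gu^*+L_pv^*$ in the $B$-column and the segregation block $B^*L_p$ in the $v$-columns; none of these terms is small (in particular $\theta$ is \emph{not} assumed small in the theorem). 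Hence the spectrum is not the union $\sigma(b^0_*G-D)\cup\sigma(\text{chemostat block})$, and the inequality $B^*r(D^{-1}G)<1$ does not by itself control the eigenvalues you need; continuity in $\alpha$ cannot repair an incorrect decomposition at $\alpha=0$.

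The paper closes this by additionally expanding the equilibrium and the Jacobian in the mutation parameter $\eta$ and splitting into the cases $\mathcal{T}_p>\mathcal{T}_{m.p}$ and $\mathcal{T}_p<\mathcal{T}_{m.p}$. In the first case $x_m^*=x_{m.p}^*=0$ at leading order, so two rows of the leading-order Jacobian $J^*_0$ take the form $\lambda e^T$; this produces the two explicit eigenvalues $b^0_*(1-\theta)\mu_{m.p}-d_{m.p}$ and $b^0_*\mu_m-d_m$ (negative under the standing hypotheses) together with a genuinely coupled $3\times3$ block in the coordinates $(B,N_s,N_p)$ --- note that this block mixes a $u$-coordinate with a $v$-coordinate, so it does not respect the splitting you propose --- whose characteristic polynomial is then checked by Routh--Hurwitz; the case $\mathcal{T}_p<\mathcal{T}_{m.p}$ is symmetric with $(B,N_m,N_{m.p})$. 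You would need to carry out this reduction (or an equivalent one) to complete the stability proof for $E^*$; the remainder of your proposal, including the derivation of \eqref{cond_las_Esm} from the transversal block $B^*(G_p-L_p)-D_p+N[E^*_{s-m}]$, is consistent with the paper's computation.
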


Following Theorem \ref{thm-stab-reults}, note that for plasmids to persist at equilibrium, it is necessary that $\max \left(\mathcal{R}_p,\mathcal{R}_{m.p} \right) > \max \left(\mathcal{R}_s,\mathcal{R}_m \right)$. The threshold $\mathcal{R}_0^p= \frac{\max\left(\mathcal{R}_p,\mathcal{R}_{m.p} \right)} {\max\left(\mathcal{R}_s,\mathcal{R}_m \right)}$ can therefore be viewed as the total number of plasmid-bearing strains arising from one strain bearing such a plasmid and introduced into a plasmid-free environment. $\mathcal{R}_0^p$ is similar to the basic reproduction number $\mathcal{R}_0$ in epidemiology and serves as a sharp threshold parameter determining whether or not plasmids can persist by   $\mathcal{R}_0^p<1$  or   $\mathcal{R}_0^p>1$. This also highlights how plasmids differ from mutations: the former are transmitted whereas the latter are not.

\section{Applications}

This section describes how our general analysis can be applied in two situations: (i) bacteria strains interacting without any drug pressure, and (ii) bacteria strains interacting with drug action. Due to our model formulation and results, to explore a scenario we only need to analyse the $\mathcal{R}_j$s.

\subsection{The treatment-free model}
Here, we assume that strains $N_j$s interact without any drug action. In such scenario, we further assume that before the treatment, each  $N_j$ can potentially grow $\left(\mathcal{R}_j >1, \text{ for all, } j \in \mathcal{J} \right)$. We further assume that there is a fitness cost of bearing a resistance, either through mutation or plasmid $\left(\min \left\{\mathcal{R}_p; \mathcal{R}_m \right\}\le  \max \left\{\mathcal{R}_p; \mathcal{R}_m \right\}< \mathcal{R}_s\right)$ and that these costs either add up or compensate each other. 

\paragraph{Costs of mutation and plasmid add up.} In such situation, we have the following realistic assumption: $1<\mathcal{R}_{m.p} < \min \left\{\mathcal{R}_p; \mathcal{R}_m \right\}\le  \max \left\{\mathcal{R}_p; \mathcal{R}_m \right\}< \mathcal{R}_s.$  Then, based on Sections \ref{sec-Inv} and \ref{sec-Equi-Gen-Model} and under the assumptions above, the co-existence of $N_s$ and $N_m$, $E^*_{s-m}= \left(B^*,N^*_s,N^*_m,0,0\right)$, is the only stable stationary state when mutation rates and flux of HGT are small enough (Figure \ref{fig:TreatFree}D). We then have $B^* \simeq B_0/\mathcal{R}_s$ and the corresponding proportion of resistance $P_R(E^*_{s-m})= \mathcal{P}^*_m=\mathcal{O}(\eta)$ is estimated by \eqref{prop Esm Rs is max}.
% We have
% \[
% \mathcal{P}^*_m =  \frac{\mathcal{K}_{s\to m}} {\mathcal{K}_{s\to m} + \mathcal{R}_s- \mathcal{R}_m} = \left( 1 + \frac{\mathcal{R}_s- \mathcal{R}_m}{\mathcal{K}_{s\to m}} \right)^{-1} .
% \]
% Note that the above equality provides the intuitive interpretation of a Michaelis-Menten function = Holling type II response of the upper bound of resistance proportion with respect to $(\mathcal{R}_s-\mathcal{R}_m)/\mathcal{K}_{s\to m}$ which reads as the differential effective growth / genetic contribution between strains, weighted by the transition intensity from sensitive strain $N_s$ to mutational resistance $N_m$. 
From a biological standpoint, this means that when the cost of mutation and plasmid add up, in the absence of treatment it is expected that the proportion of resistance is very small at equilibrium.

\paragraph{Costs of mutation and plasmid compensate each other.} In this scenario, we can have two cases:  in case 1 the strain $N_{m.p}$ performs less well than $N_s$ $\left(1<\min \left\{\mathcal{R}_p; \mathcal{R}_m \right\}\le  \max \left\{\mathcal{R}_p; \mathcal{R}_m \right\}< \mathcal{R}_{m.p} < \mathcal{R}_s\right)$ , while in case 2 the compensation is such that strain $N_{m.p}$ performs at least as well as $N_s$ $\left(1<\min \left\{\mathcal{R}_p; \mathcal{R}_m \right\}\le  \max \left\{\mathcal{R}_p; \mathcal{R}_m \right\}< \mathcal{R}_s \le \mathcal{R}_{m.p} \right)$. Case 1 is similar to the previous paragraph where costs add up leading to a negligible proportion of resistance at equilibrium, while case 2 corresponds to a situation where plasmidic resistance spreads and persists (despite the costs it imposes to its hosts) in an environment that is not necessarily subject to drug pressure \cite{Hughes1983,Bergstrom2000,Lili2007}. Indeed, for case 2, results of \ref{sec-Inv} and \ref{sec-Equi-Gen-Model} suggest plasmid invasion and co-existence of all the strains, $E^*= \left(B^*,N^*_s,N^*_m,N^*_p,N^*_{m.p}\right)$, is the only stable stationary state when mutation rates and flux of HGT are small enough (Figure \ref{fig:TreatFree}C). The corresponding proportion of resistance $P_R(E^*)\simeq \mathcal{P}^*_m+ \mathcal{P}^*_{m.p},$
wherein proportions $\mathcal{P}^*_m$ and $ \mathcal{P}^*_{m.p}$ estimated by \eqref{eq resistance decomposed} are respectively attributable to mutational and mutational/plasmidic resistance.

\subsection{The model with drug action}
Our model formulation can easily account for the effect of drug action, either cytostatic (acting on the strain death rate $d_j$) or cytotoxic (acting on the strain growth conversion factor $\tau_j$). This makes it possible to discuss some scenarios of Model \ref{eq-GenModel} with the assumption of a constant drug pressure. Here, we briefly discuss two scenarios.

\paragraph{The drug is efficient on sensitive and mutational strains only $\left(\mathcal{R}_m < \mathcal{R}_s< 1\right)$.} In this scenario, evolving drug resistance requires \textit{de novo} genes that can be only acquired via plasmid (not by mutations). We then have two cases: either the mutation is a compensatory mutation for the plasmid presence $\left(\mathcal{R}_m < \mathcal{R}_s< 1< \mathcal{R}_{p} < \mathcal{R}_{m.p}\right)$ or mutation and plasmid costs add up $\left(\mathcal{R}_m < \mathcal{R}_s< 1< \mathcal{R}_{m.p} < \mathcal{R}_p\right)$. For both cases, results of Sections \ref{sec-Inv} and \ref{sec-Equi-Gen-Model} indicate that the co-existence of all strains, $E^*$, is the only stable stationary state. Estimates \eqref{eq resistance decomposed0} and \eqref{eq resistance decomposed} respectively give, for the first case, $P_R(E^*) \simeq \mathcal{P}^*_m+ \mathcal{P}^*_{m.p}$ (Figure \ref{fig:WithTreat}C) and, for the second case, $P_R(E^*)\simeq \mathcal{P}^*_p$ (Figure \ref{fig:WithTreat}B); where proportions $\mathcal{P}^*_m$, $\mathcal{P}^*_p$ and $ \mathcal{P}^*_{m.p}$ are attributable to mutational, plasmidic and mutational/plasmidic resistance.

\paragraph{The drug is efficient on sensitive and plasmid-bearing strain only $\left (\mathcal{R}_p < \mathcal{R}_s< 1\right)$.} Here, the plasmid is useless but it performs less well than the $N_s$ strain $\left( \mathcal{R}_p < \mathcal{R}_s< 1< \mathcal{R}_{m.p} < \mathcal{R}_m\right)$.
% \sa{\bf [Biologically to me it is a case where the plasmid is useless but then I would even expect it to perform less well than the $S$ strain, i.e. $\mathcal{R}_p < \mathcal{R}_s< 1$]}
% Here, the drug is assumed to have strong effect only on the plasmid bearing strain such that 
% \begin{eqnarray}
% &\mathcal{R}_s < \mathcal{R}_p< 1< \mathcal{R}_{m.p} < \mathcal{R}_m. 
% \end{eqnarray}
Again, following Sections \ref{sec-Inv} and \ref{sec-Equi-Gen-Model}, only the co-existence stationary state $E^*_{s-m}$ of sensitive en mutational resistance strain is plausible at equilibrium when mutation rates and flux of HGT are small enough. Furthermore, based on estimate \eqref{prop Esm Rm is max}, $E_{s-m}^*$ is mostly a totally mutational resistant stationary state, {\it i.e.} $P_R(E^*_{s-m})=\mathcal{P}^*_m \simeq 1$ (Figure \ref{fig:WithTreat}D).

% \paragraph{Model with drug action on mutational/plamsidic strain only.} 
% \sa{\bf [Not sure how to link this to a biological scenario... It would require a super strong interference between mutation and plasmid. Unless you find some evidence for this, I would move it to appendix as an unusual case.]}
% Here the drug effect is such that $\mathcal{R}_s < \mathcal{R}_{m.p}< 1$. Depending if the cost of bearing a resistance is stronger on $N_p$ than on $N_m$ (or vice versa), we then have two possible configurations
% \begin{eqnarray}
% &\mathcal{R}_s < \mathcal{R}_{m.p}< 1< \mathcal{R}_p < \mathcal{R}_m, \label{case1 with drug effect on plasmid}\\
% or \nonumber\\
% &\mathcal{R}_s < \mathcal{R}_{m.p}< 1< \mathcal{R}_m < \mathcal{R}_p. \label{case2 with drug effect on plasmid}
% \end{eqnarray}
% Case \eqref{case1 with drug effect on plasmid} is similar to the situation above where the drug action is on plasmid bearing strain, while case \eqref{case2 with drug effect on plasmid} is similar to the situation where the drug action is on mutational resistance strain.

\section{Discussion} \label{sec-discuss}

\paragraph{Mutations are key factors to maintain strain co-existence at equilibrium.}
Compared to similar studies in the literature, here we do not neglect the evolution of resistance via mutation, which we find to substantially impact the qualitative dynamics of the model. The study in \cite{Svara2011} reaches the opposite conclusion that the co-existence of $N_s$ and $N_m$ or the co-existence of all strains are not plausible at equilibrium because they neglect mutations ratios, such that strain transition $ \mathcal{K}_{s\to m}$ or $\mathcal{K}_{p\to m.p}$ is zero. 

\paragraph{Measurement of the cost of bearing a resistance.}
Drug resistance is known to be traded-off against other components of the bacterial life-cycle through competitive fitness. But, most studies investigating costs of
resistance use growth rate as fitness proxy, which incorporates only a single component of bacterial fitness \cite{Melnyk2015,Vogwill2015}. Indeed, this cost can also act on other bacteria quantitative traits such as the cell's death rate. In our analysis, the cost of resistance is quantified by the effective reproduction numbers $\mathcal{R}_j$s defined by \eqref{eq-threshold-Rj}. These $\mathcal{R}_j$s aggregate potential costs on quantitative traits of the bacteria life cycle such as nutrient consumption rate, nutrient conversion factor, bacteria's death rate, bacteria mutation or plasmid segregation probability.

%\subsection{Model hypothesis and limitations}
\paragraph{Occurrence of new mutants.} 
In exponentially growing cells, mutations usually occur during replication \cite{Loewe5650}, but some studies indicate that mutations can be  substantially  higher in  non growing  than  growing cultures \cite{Sniegowski2004}. Thus, the occurrence of new mutants depends either on the abundance of the parental cells or  both the abundance and growth rate of the parental cells \cite{Zur2010}. In Model \eqref{eq-GenModel}, we have first assumed that the latter is satisfied. However, with our model formulation, we can easily  switch to the case where mutations depend on the abundance of the parental cells (see \eqref{eq-GenModel2} for the full model equations). Furthermore, results obtained for Model \eqref{eq-GenModel} remain valid for Model \eqref{eq-GenModel2} with the effective reproduction numbers $\mathcal{R}$ and strains transition numbers $\mathcal{K}$ defined as follows
\begin{equation*}
\begin{split}
& \mathcal{R}_s= \mathcal{T}_s \frac{d_s}{d_s+\varepsilon_s}, \quad  \mathcal{R}_m= \mathcal{T}_m \frac{d_m}{d_m+\varepsilon_m},\\
&\mathcal{R}_p= \mathcal{T}_p (1-\theta)\frac{d_p}{d_p+\varepsilon_p}, \quad \mathcal{R}_{m.p}= \mathcal{T}_{m.p} (1-\theta)\frac{d_{m.p}}{d_{m.p}+\varepsilon_{m.p}},
\end{split}
\end{equation*}
and 
\begin{equation*}
\begin{split}
& \mathcal{K}_{s\to m}= \frac{\varepsilon_s}{d_m+\varepsilon_m}, \quad \mathcal{K}_{m\to s}= \frac{\varepsilon_m}{d_s+\varepsilon_s}\\
& \mathcal{K}_{p\to m.p}= \frac{\varepsilon_p}{d_{m.p}+ \varepsilon_{m.p}}, \quad \mathcal{K}_{m.p\to p}= \frac{\varepsilon_{m.p}}{d_p+\varepsilon_p}\\
& \mathcal{K}_{p\to s}= B_0\frac{\theta\tau_p\beta_p}{d_s +\varepsilon_s},\quad  \mathcal{K}_{m.p \to m}= B_0\frac{\theta\tau_{m.p}\beta_{m.p}}{d_m+ \varepsilon_m},
\end{split}
\end{equation*}
and wherein $\mathcal{T}_j$s are the same as for the previous model \eqref{eq-GenModel}. Note that here, terms $\varepsilon_j$s represent mutation rates rather than ratios as compared to Model \eqref{eq-GenModel}.

\paragraph{Competition at the plasmid level of selection.} In Model \eqref{eq-GenModel} we assume that plasmid-bearing strains always carry drug resistance genes. This assumption does not have any impact in the case of the treatment-free model. However, although a significant proportion of plasmid-bearing strains is involved in drug resistance, a small proportion can be `non-resistant plasmids'. To focus even more on the question of drug action with the model, it could be interesting to introduce additional bacteria strains with non-resistant plasmids but paying a cost of plasmid carriage, as in \cite{Svara2011}. This would allow to consider competition between plasmidic strains carrying the drug resistance gene with other plasmidic strains that do not carry the drug resistance gene. Indeed, competition at the plasmid level can be of great importance, since the spread of a resistant plasmid can be slowed or entirely stopped by a nonresistant version of the same plasmid \cite{Tazzyman2014}. This issue will will be rigorously investigated in a forthcoming work.

%\newpage

%start pagenumbers
%\clearpage 
%\pagenumbering{arabic} 
\begin{appendix}
%for SUPPLEMENTAL MATERIALS%%%%%%%%%%%%%%
%Table 1 ---> Table S1
\setcounter{table}{0}
\let\oldthetable\thetable
\renewcommand{\thetable}{S\oldthetable}

%Figure 1--> Figure S1
\setcounter{figure}{0}
\let\oldthefigure\thefigure
\renewcommand{\thefigure}{S\oldthefigure}
%%%%%%%%%%%%%%%%%%%%%%%%%%%%%%%%%%%%%%%%%
%{\center{\Huge \textbf {Supporting Information} }}\\ 

%equation
\setcounter{equation}{0}

\section{Proof of Theorem \ref{thm-exist}}\label{proof of thm-exist}
%\begin{proof}[Proof of Theorem \ref{thm-exist}]
	The right-hand side of System \eqref{eq-GenModel} is continuous and locally lipschitz on $\R^5$. Using a classic existence theorem, we then find $T>0$ and a unique solution $E(t)\omega_0= \left( B(t),N_s(t),N_m(t),N_p(t),N_{m.p}(t)\right)$ of \eqref{eq-GenModel} from $[0,T) \to \R^5$ and passing through the initial data $\omega_0$ at $t=0$.  Let us now check the positivity and boundedness of the solution $E$ on $[0,T)$. 
	
	Since $E(\cdot)\omega_0$ starts in the positive orthant $\R^5_+$, by continuity, it must cross at least one of the five borders $\{B=0\}$, $\{N_j=0\}$ (with $j\in \mathcal{J}$) to become negative. Without loss of generality, let us assume that $E$ reaches the border $\{N_s=0\}$. This means we can find $t_1 \in (0,t)$ such that $N_j(t)>0$ for all $t \in(0,t_1)$, $j \in \mathcal{J}$ and $N_s(t_1)=0$, $B(t_1) \ge 0$, $N_j(t_1) \ge 0$ for $j \in \mathcal{J}$. %Since $N(t) \min_v\tau_v  \le \sum_v \tau_v N_v(t) \le N(t) \max_v \tau_v $ for all $t \in [0,t_1]$, we then find $\bar \tau>0$ such that equation \eqref{eq-totalPop} gives $\dot N(t) \le \bar \tau N(t)(1-N(t)/K)$ for all $t \in [0,t_1]$. Since $N(0) \le K$, it then comes 
%	\begin{equation} \label{esti-Nt1}
%	\text{$N(t)\le K$ for all $t \in [0,t_1]$.}
%	\end{equation}
	Then, the $\dot N_s$-equation of \eqref{eq-GenModel}  yields $\dot N_s(t_1)=  \theta\tau_p\beta_p B(t_1)N_p(t_1) \ge 0$, from where the orbit $E(\cdot)\omega_0$ cannot cross $\R^5_+$ through the border $\{N_s=0\}$. Similarly, we prove that at any borders $\{B=0\}$, $\{N_j=0\}$ (with $j \in \mathcal{J}$), either the resulting vector field stays on the border or  points inside $\R^5_+$. Consequently, $E([0,T))\omega_0 \subset \R^5_+$.
	
	Recalling that $N=\sum_{j \in \mathcal{J}} N_j$ and adding up the $\dot N$- and $\dot B$-equations, it comes 
	\[
	\frac{\d}{\d t} \left(\tau_{\text{max}}B + N \right)\le \tau_{\text{max}}\Lambda - \min \left(d,d_{\text{min}} \right) \left(\tau_{\text{max}}B + N \right);
	\]
	from where one deduces estimate \eqref{eq-bounded}. So, the aforementioned local solution of System \eqref{eq-GenModel} is a global solution {\it i.e.} defined for all $t\in \R_+$. Which ends the proof of Theorem \ref{thm-exist}.
%\end{proof}

%\[
%\mathcal{R}^*_{[s,m]}= \max\left(\mathcal{R}_s, \mathcal{R}_m\right)
%\]
%
%\[
%\mathcal{R}^*_{[s,m]}= \mathcal{R}^*_{[p,m.p]}
%\]
%
%
%\[
%\mathcal{R}^*_{[p,m.p]}= \max\left(\mathcal{R}_p, \mathcal{R}_{m.p}\right)
%\]
%
%\[
%E^0(\cdots\cdots) \quad E^*_{s,m}(\cdots\cdots) \quad E^*(\cdots\cdots) 
%\]
%
%
%\[
%E^0(---) \quad E^*_{s,m}(---) \quad E^*(---) 
%\]
%
%\[
%\begin{split}
%& E^0(\cdots\cdots) \quad E^*_{s,m}(---)\\
%& \left\| 
%\begin{split}
%& \frac{\mathcal{R}_p}{\max \left(\mathcal{R}_s,\mathcal{R}_m\right)} +\alpha \frac{N^*_s}{d_p(a_H+b_HN^*)} <1
%\end{split}
%\right.
%\end{split}
%\]

\section{Proof of Theorem \ref{Thm-invasion}}\label{proof of Thm-invasion}
%\begin{proof}[Proof of Theorem \ref{Thm-invasion}]
If we consider a small perturbation of the bacteria-free steady state $E^0$, the initial phase of the invasion can be described by the linearized system at $E^0$. Since the linearized equations for bacteria populations do not include the one for the nutrient, we then have 
\begin{equation}\label{eq-Linear-E0}
%\left\{
\begin{split}
%\dot B(t)=& -dB- B_0 \left<\beta,(u,v)^T\right> ,\\
\frac{\d}{\d t} (u,v)^T= J[E^0] (u,v)^T,
%\dot u=& \left[B_0G-D \right] u  + B_0L_pv\\
%\dot v=& \left[B_0(G_p-L_p)-D_p  \right]v.
\end{split}
%\right.
\end{equation}
with $ J[E^0]= \left(\begin{array}{cc}
B_0G-D & B_0L_p\\
0 & B_0(G_p-L_p)-D_p
\end{array} \right).$ 

We claim that 
\begin{claim}\label{claim-spectre-DFE}
For small mutation rates $\varepsilon_v$, the principal eigenvalue $r(B_0G-D )$ and $r( B_0(G_p-L_p)-D_p)$, of matrices $(B_0G-D )$ and $( B_0(G_p-L_p)-D_p)$ writes 
\[
\begin{split}
r(B_0G-D )=& \frac{1}{2} \left\{B_0\tau_s\beta_s(1-\varepsilon_s)-d_s + B_0\tau_m\beta_m(1-\varepsilon_m) -d_m + \right.\\
& \left. \left[ \left(B_0\tau_s\beta_s(1-\varepsilon_s)-d_s - B_0\tau_m\beta_m(1-\varepsilon_m)+d_m \right)^2 +4\varepsilon_m\varepsilon_s\tau_m\tau_s\beta_m\beta_sB_0^2 \right]^{1/2} \right\}
\end{split}
\]
{\it i.e.}
\[
r(B_0G-D )= \left\{
\begin{split}
&B_0\tau_s\beta_s(1-\varepsilon_s)-d_s + \mathcal{O}(\varepsilon_m\varepsilon_s),  \quad \text{ if } \mathcal{R}_s> \mathcal{R}_m,\\
&B_0\tau_m\beta_m(1-\varepsilon_m)-d_m + \mathcal{O}(\varepsilon_m\varepsilon_s),  \quad \text{ if } \mathcal{R}_s< \mathcal{R}_m,
\end{split}
\right.
\]

By the same way, we have 
\[
r(( B_0(G_p-L_p)-D_p))= \left\{
\begin{split}
&B_0\tau_p\beta_p(1-\theta)(1- \varepsilon_p)-d_p + \mathcal{O}(\varepsilon_p\varepsilon_{m.p}),  \quad \text{ if } \mathcal{R}_p> \mathcal{R}_{m.p},\\
&B_0\tau_{m.p}\beta_{m.p}(1-\theta)(1-\varepsilon_{m.p})-d_{m.p} + \mathcal{O}(\varepsilon_p\varepsilon_{m.p}),  \quad \text{ if } \mathcal{R}_p< \mathcal{R}_{m.p},
\end{split}
\right.
\]
\end{claim}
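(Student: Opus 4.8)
The statement is a purely linear‑algebraic computation about the principal eigenvalue of a $2\times 2$ matrix, so the plan is to write that eigenvalue explicitly and then expand in the small parameters. First I would observe that $B_0G-D$ is a $2\times 2$ matrix with nonnegative off‑diagonal entries (it is essentially a Metzler/quasi‑positive matrix), hence by Perron--Frobenius theory it has a real dominant eigenvalue $r(B_0G-D)$ equal to the larger root of its characteristic polynomial. Writing $a:=B_0\tau_s\beta_s(1-\varepsilon_s)-d_s$ and $b:=B_0\tau_m\beta_m(1-\varepsilon_m)-d_m$ for the diagonal entries, and noting the product of the off‑diagonal entries is $\varepsilon_s\varepsilon_m\tau_s\tau_m\beta_s\beta_m B_0^2$, the characteristic equation is $(\lambda-a)(\lambda-b)=\varepsilon_s\varepsilon_m\tau_s\tau_m\beta_s\beta_m B_0^2$, whose larger root is precisely the closed form displayed in the Claim. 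That establishes the first (exact) formula; the same argument verbatim, with $G_p-L_p$ and $D_p$ in place of $G$ and $D$, gives the analogous exact formula for $r(B_0(G_p-L_p)-D_p)$, since $G_p-L_p$ also has nonnegative off‑diagonal entries and its off‑diagonal product is $\varepsilon_p\varepsilon_{m.p}(1-\theta)^2\tau_p\tau_m\beta_p\beta_{m.p}B_0^2$.

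For the asymptotic part, I would expand the square root $\tfrac12\{a+b+[(a-b)^2+4\kappa]^{1/2}\}$ with $\kappa=\varepsilon_s\varepsilon_m\tau_s\tau_m\beta_s\beta_m B_0^2=\mathcal{O}(\varepsilon_s\varepsilon_m)$. The key point is that $a-b=d_s(\mathcal{T}_s(1-\varepsilon_s)-1)-d_m(\mathcal{T}_m(1-\varepsilon_m)-1)$, and one checks that $\operatorname{sign}(a-b)=\operatorname{sign}(\mathcal{R}_s-\mathcal{R}_m)$ up to the small corrections — indeed $\mathcal R_s-\mathcal R_m=B_0(\tau_s\beta_s(1-\varepsilon_s)-\tau_m\beta_m(1-\varepsilon_m))$ is, after dividing by the $d_j$, the relevant comparison, and for the parameter regimes considered $a-b$ is bounded away from $0$ so the discriminant is $(a-b)^2(1+\mathcal{O}(\varepsilon_s\varepsilon_m))$. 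Then $[(a-b)^2+4\kappa]^{1/2}=|a-b|+\mathcal{O}(\varepsilon_s\varepsilon_m)$, and substituting gives $r(B_0G-D)=\tfrac12(a+b+|a-b|)+\mathcal{O}(\varepsilon_s\varepsilon_m)=\max(a,b)+\mathcal{O}(\varepsilon_s\varepsilon_m)$, which is exactly the case split in the Claim (the case $\mathcal R_s>\mathcal R_m$ giving $\max(a,b)=a$, and $\mathcal R_s<\mathcal R_m$ giving $\max(a,b)=b$). The computation for $r(B_0(G_p-L_p)-D_p)$ is identical with $\varepsilon_p,\varepsilon_{m.p},\theta$ in the appropriate places.

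The only genuine subtlety — and the step I would be most careful about — is the passage from $\operatorname{sign}(\mathcal R_s-\mathcal R_m)$ to $\operatorname{sign}(a-b)$ and the claim that the discriminant behaves like $(a-b)^2$ rather than being swamped by the $4\kappa$ term. This requires that $a-b$ not be itself of order $\sqrt{\varepsilon_s\varepsilon_m}$; it is legitimate precisely because the Claim (and the surrounding Theorem) is invoked only in the strict regimes $\mathcal R_s>\mathcal R_m$ or $\mathcal R_s<\mathcal R_m$ with the gap treated as $\mathcal{O}(1)$ relative to the mutation scales, so I would state that hypothesis explicitly and note that the borderline case $\mathcal R_s=\mathcal R_m$ is excluded. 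Everything else — existence of a dominant real eigenvalue, the explicit root formula, the Taylor expansion of the square root — is routine and I would simply carry it out.
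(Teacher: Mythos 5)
Your core computation is exactly the paper's (implicit) argument: write the characteristic polynomial of the $2\times 2$ Metzler matrix $B_0G-D$, take the larger root $\tfrac12\{a+b+[(a-b)^2+4\kappa]^{1/2}\}$ with $a=B_0\tau_s\beta_s(1-\varepsilon_s)-d_s$, $b=B_0\tau_m\beta_m(1-\varepsilon_m)-d_m$, $\kappa=\varepsilon_s\varepsilon_m\tau_s\tau_m\beta_s\beta_m B_0^2$, and expand the square root to get $r(B_0G-D)=\max(a,b)+\mathcal{O}(\varepsilon_s\varepsilon_m)$ away from the degenerate case $a=b$. The paper states the Claim without proof, so your write-up supplies precisely the missing routine steps, and the exact formula and the off-diagonal products you compute are correct.

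There is, however, one genuine flaw in the step you yourself single out as the subtle one. You assert that $\operatorname{sign}(a-b)=\operatorname{sign}(\mathcal{R}_s-\mathcal{R}_m)$, but since $a=d_s(\mathcal{R}_s-1)$ and $b=d_m(\mathcal{R}_m-1)$, one has $a-b=d_s(\mathcal{R}_s-1)-d_m(\mathcal{R}_m-1)$, and this does \emph{not} have the sign of $\mathcal{R}_s-\mathcal{R}_m$ unless $d_s=d_m$ (e.g.\ $d_s=1$, $d_m=10$, $\mathcal{R}_s=2$, $\mathcal{R}_m=1.5$ gives $a=1<b=5$ while $\mathcal{R}_s>\mathcal{R}_m$). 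Your displayed identity $\mathcal{R}_s-\mathcal{R}_m=B_0(\tau_s\beta_s(1-\varepsilon_s)-\tau_m\beta_m(1-\varepsilon_m))$ also drops the divisions by $d_s$ and $d_m$, which is where the discrepancy hides. The honest conclusion of your expansion is $r(B_0G-D)=\max(a,b)+\mathcal{O}(\varepsilon_s\varepsilon_m)$ with the dichotomy governed by $d_s(\mathcal{R}_s-1)$ versus $d_m(\mathcal{R}_m-1)$, not by $\mathcal{R}_s$ versus $\mathcal{R}_m$; the Claim as printed shares this imprecision (and is harmless for its intended use in Theorem~\ref{Thm-invasion}, since $\max(a,b)<0$ is still equivalent to $\max(\mathcal{R}_s,\mathcal{R}_m)<1$), but you should either add the hypothesis $d_s=d_m$ (consistent with the paper's numerics) or restate the case split in terms of $d_j(\mathcal{R}_j-1)$ rather than claim an equivalence that fails in general. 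The same remark applies verbatim to the $(p,m.p)$ block.
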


Denoting by $\sigma(J[E^0])$ the spectrum of $J[E^0]$, we recall that the stability modulus of $J[E^0]$ is $s_0(J[E^0])=\{\max \text{Re}(z): z\in \sigma(J[E^0]) \}$ and $J[E^0]$ is said to be locally asymptotically stable (l.a.s.) if $s_0(J[E^0])<0$ \cite{LiWang1998}. Following Claim \ref{claim-spectre-DFE} it comes
\[
\begin{split}
&\sigma(J[E^0])= \sigma(B_0G-D ) \cup \sigma( B_0(G_p-L_p)-D_p)  \simeq \\
&\left\{ B_0\tau_s\beta_s(1-\varepsilon_s)-d_s, B_0\tau_m\beta_m(1-\varepsilon_m)-d_m, B_0\tau_p\beta_p(1-\varepsilon_p)(1-\theta)-d_p,\right.\\ &\left.  B_0\tau_{m.p}\beta_{m.p}(1-\theta)(1-\varepsilon_{m.p})-d_{m.p} \right\},
\end{split}
\]
where the last approximation holds for small mutation rates.

Note that, when mutation rates are small enough, we obtain $s_0(J[E^0])<0$ if and only if $\mathcal{R}^*<1$, {\it i.e.} $E^0$ is l.a.s if $\mathcal{R}^*<1$ and unstable if $\mathcal{R}^*>1$.

We now check the global stability of $E^0$ when $\mathcal{T}^* <1$. The $\dot B$-equation of \eqref{eq-GenModel} gives $ \dot B \le \Lambda- dB$. Further, $B_0$ is a globally attractive stationary state of the upper equation $ \dot w = \Lambda- dw$, {\it i.e.} $w(t)\to B_0$ as $t\to \infty$. Which gives $w(t)\le B_0$ for sufficiently large time $t$, from where $B(t)\le B_0$ for sufficiently large time $t$. Combining this last inequality with the total bacteria dynamics described by \eqref{eq-totalPop}, we find $\dot N\le \sum_{j \in \mathcal{J}}\left( \beta_j\tau_j B_0 -d_j\right)  N\le c_0(\mathcal{T}^*-1)N$, with $c_0>0$ a positive constant. Therefore $N(t) \le N(0) e^{c_0(\mathcal{T}^*-1)t} \to 0$ as as $t\to \infty$. This ends the proof of the global stability of $E^0$ when $\mathcal{T}^* <1$. 

Item (iii) of the theorem remains to be checked. To do so we will apply results in \cite{Hale1989}. Let us first notice that $E^0$ is an unstable stationary
state with respect to the semiflow $E$. To complete the proof, it is sufficient to show that $W^s(\{E^0\})\cap X_0=\emptyset$, where  $W^s(\{E^0\})= \left\{w \in \Omega: \lim_{t\to \infty}  E(t)w= E^0\right\}$ is the stable set of $\{E^0\}$. To prove this assertion, let us argue by contradiction by assuming that there exists $w \in W^s(\{E^0\})\cap X_0$. We set $E(t)w=(B(t),N_s(t),N_m(t),N_p(t),N_{m.p}(t))$.  The $\dot N$-equation defined by \eqref{eq-totalPop} gives 
\[
\begin{split}
\dot N(t) =& \sum_{j \in \mathcal{J}}d_v \left( \mathcal{T}_j \frac{B(t)}{B_0} -1\right)  N_j(t)\\
\ge & d_{\text{min}} \left(\frac{B(t)}{B_0} \min_j \mathcal{T}_j  -1\right) N(t), \text{ for all time } t.
\end{split}
\]
Since $\min_j \mathcal{T}_j>1$ and it is assumed that $B(t) \to B_0$ as $t\to \infty$, we find that the function function $t \mapsto N(t)=N_s(t)+N_m(t)+N_p(t)+N_{m.p}(t)$ is not decreasing for $t$ large enough. Hence there exists $t_0\ge 0$ such that $N(t) \ge N(t_0)$ for all $t\ge t_0$. Since $N(t_0)>0$, this prevents the component $(N_s,N_m,N_p,N_{m.p})$ from converging to $(0,0,0,0)$ as $t \to \infty$. A contradiction with $E(t)\omega \to E^0$. This completes the proof of Theorem \ref{Thm-invasion}.
%\end{proof}

\section{Proof of Theorem \ref{thm-equi-Model}}\label{proof of thm-equi-Model}
%\begin{proof}[Proof of Theorem \ref{thm-equi-Model}]
	\paragraph{The stationary state $E^*_{s-m}= \left(B^*,u^*,0 \right)$.}  Here, it is useful to considered the abstract formulation of the model given by \eqref{eq-GenModelCompact}.
	The $\dot u$-equation of \eqref{eq-GenModelCompact} gives $\left[BG-D\right] u=0$ i.e. $D^{-1}Gu= u/B$. Therefore $1/B= r(D^{-1}G)$ and $u=c\phi$, where $\phi>0$ is the eigenvector of $D^{-1}G$ corresponding to $r(D^{-1}G)$ and normalized such that $\|\phi\|_1=1$ and $c>0$ is a positive constant. Notice that $\phi>0$ means all components of the vector $\phi$ are positive. More precisely, we have $D^{-1}G= \left[
	\begin{array}{cc}
	\mathcal{R}_s/B_0 & \mathcal{K}_{m\to s}/B_0\\
	\mathcal{K}_{s\to m}/B_0 & \mathcal{R}_m/B_0
	\end{array}
	\right]$, 
	\begin{equation*}
	\begin{split}
	& B^{-1}=r(D^{-1}G)= \frac{1}{2B_0} \left\{\mathcal{R}_s +\mathcal{R}_m + \left[\left(\mathcal{R}_s -\mathcal{R}_m \right)^2 +4 \mathcal{K}_{m\to s}  \mathcal{K}_{s\to m} \right]^{1/2}   \right\},\\
	& \phi= \frac{\phi_0}{\|\phi_0\|_1},
	\end{split}
	\end{equation*}
	with $\phi_0= \left(\mathcal{R}_s -\mathcal{R}_m + \left[\left(\mathcal{R}_s -\mathcal{R}_m \right)^2 +4 \mathcal{K}_{m\to s}  \mathcal{K}_{s\to m} \right]^{1/2}  ,2 \mathcal{K}_{s\to m}\right)^T$.
	
	Further, from the $\dot B$-equation we find $\Lambda-dB= c  B \left< (\beta_s,\beta_m),\phi\right>$, {\it i.e. }
	\begin{equation*}
	c = \frac{d\left(B_0r(D^{-1}G)-1\right)}{\left< (\beta_s,\beta_m),\phi\right>} >0 \Longleftrightarrow B_0r(D^{-1}G)>1.
	\end{equation*}

	\paragraph{Approximation of $P_R \left(E^*_{s-m}\right)$ for small mutation rates.}
	Now, let us assumed that mutation rates $\varepsilon_v$ are small enough. Without loss of generality, we express parameters $\epsilon_j$ as functions of the same quantity, let us say $\eta$, with $\eta \ll 1$. We have 
	\begin{equation}\label{eq-Equi-Esm}
	   P_R \left(E^*_{s-m}\right)= \frac{2 \mathcal{K}_{s\to m}} {\mathcal{R}_s -\mathcal{R}_m + \left[\left(\mathcal{R}_s -\mathcal{R}_m \right)^2 +4 \mathcal{K}_{m\to s}  \mathcal{K}_{s\to m} \right]^{1/2}  +2 \mathcal{K}_{s\to m}}. 
	\end{equation}
	By setting $\zeta_\eta =\mathcal{R}_s -\mathcal{R}_m + \left[\left(\mathcal{R}_s -\mathcal{R}_m \right)^2 +4 \mathcal{K}_{m\to s}  \mathcal{K}_{s\to m} \right]^{1/2}$, it comes 
	\begin{equation}\label{eq-Equi-cEsm}
	    \zeta_\eta= \left( \mathcal{T}_s -\mathcal{T}_m\right) (1-\eta)+ |\mathcal{T}_s -\mathcal{T}_m| \left[1- \eta + \frac{1}{2} \left( \left( \frac{\mathcal{T}_s +\mathcal{T}_m}{\mathcal{T}_s -\mathcal{T}_m} \right)^2  -1\right)\eta^2 \right] +\mathcal{O}(\eta^3).
	\end{equation}
	
	Next, we find a simple approximation of the frequency  $P_R \left(E^*_{s-m}\right)$ for cases $\mathcal{R}_s>\mathcal{R}_m$ and $\mathcal{R}_s<\mathcal{R}_m$, {\it i.e. } $\mathcal{T}_s>\mathcal{T}_m$ and $\mathcal{T}_s<\mathcal{T}_m$ for small mutations $\varepsilon_j$'s.

	{\it Case: $\mathcal{T}_s>\mathcal{T}_m$.} From estimates \eqref{eq-Equi-Esm} and \eqref{eq-Equi-cEsm} it comes 
	\[ 
	\left\{
	\begin{split}
	   & P_R \left(E^*_{s-m}\right)= \frac{d_s}{d_m} \frac{\mathcal{T}_s}{\mathcal{T}_s- \mathcal{T}_m} \eta + \mathcal{O}(\eta^2),\\
	   &\text{with} \quad \mathcal{R}_s= \mathcal{T}_s (1-\varepsilon_s)  >1, \quad \text{and   } \mathcal{T}_s > \mathcal{T}_m.
	\end{split}
	\right.
	\]
% 	\[
% 	\begin{split}
% 	&B^{-1}=r(D^{-1}G)= \mathcal{R}_s/B_0+ \mathcal{O}(\varepsilon_s\varepsilon_m),\\
% 	&\phi= \frac{1}{\mathcal{R}_s- \mathcal{R}_m+ \mathcal{K}_{s\to m}} \left(\mathcal{R}_s- \mathcal{R}_m,  \mathcal{K}_{s\to m} \right)^T+ \mathcal{O}(\varepsilon_s\varepsilon_m) >0,\\
% 	&u= \frac{d (\mathcal{R}_s-1)} {\beta_s(\mathcal{R}_s- \mathcal{R}_m)+ \beta_{m}\mathcal{K}_{s\to m}  } \left(\mathcal{R}_s- \mathcal{R}_m, \mathcal{K}_{s\to m} \right)^T+ \mathcal{O}(\varepsilon_s\varepsilon_m); \\ 
% 	& \mathcal{R}_s >1, \quad \mathcal{R}_s > \mathcal{R}_m.
% 	\end{split}
% 	\]

	{\it Case: $\mathcal{T}_s<\mathcal{T}_m$.} From estimates \eqref{eq-Equi-Esm} and \eqref{eq-Equi-cEsm} it comes 
	\[ 
	\left\{
	\begin{split}
	   & P_R \left(E^*_{s-m}\right)= 1- \frac{d_m \left(\mathcal{T}_m-\mathcal{T}_s \right) }{4B_0\tau_s\beta_s} \left( \left( \frac{\mathcal{T}_s +\mathcal{T}_m}{\mathcal{T}_s -\mathcal{T}_m} \right)^2  -1\right) \eta  + \mathcal{O}(\eta^2),\\
	   &\text{with} \quad \mathcal{R}_m= \mathcal{T}_m (1-\varepsilon_m)  >1, \quad \text{and   } \mathcal{T}_m > \mathcal{T}_s.
	\end{split}
	\right.
	\]
% 	\[
% 	\begin{split}
% 	&B^{-1}=r(D^{-1}G)= \mathcal{R}_m/B_0+ \mathcal{O}(\varepsilon_s\varepsilon_m),\\
% 	&\phi=  \left(0, 1 \right)+ \mathcal{O}(\varepsilon_s\varepsilon_m) >0,\\
% 	&u= \frac{d (\mathcal{R}_m-1)} { \beta_{m} } \left(0, 1 \right)^T+ \mathcal{O}(\varepsilon_s\varepsilon_m); \\ 
% 	& \mathcal{R}_m >1, \quad \mathcal{R}_m > \mathcal{R}_s.
% 	\end{split}
% 	\]

	\paragraph{The stationary state $E^*= \left(B^*,N_s^*,N_m^*,N_p^*,N^*_{m.p} \right)$.} By setting $u=(N_s,N_m)^T$, $v=(N_p,N_{m.p})^T$ 
	we have from $\dot u= \dot v=0$
	
	\[
	\left\{
	\begin{split}
	& B\left(G u+ L_pv\right)=Du+ H(N)[N_p+N_{m.p}] \left(N_s, N_m \right)^T,\\
	& B(G_p-L_p)v= D_pv-  H(N)[N_p+N_{m.p}] \left(N_s, N_m \right)^T ,
	\end{split}
	\right.
	\]
	i.e.
	\begin{equation}\label{eq-for-psi-fullEqui}
	B L (u,v)+\alpha F(u,v) = (u,v),
	\end{equation}
	with 
	\[
	\begin{split}
	&L= \mathcal{D}^{-1} \mathcal{G}= \left[
	\begin{array}{cc}
	D^{-1}G & D^{-1}L_p\\
	0 & D_p^{-1}(G_p-L_p)
	\end{array}
	\right], \\
	& \mathcal{G}= 
	\left[
	\begin{array}{cc}
	G & L_p\\
	0 & G_p-L_p
	\end{array}
	\right]; \quad \mathcal{D}= 
	\left[
	\begin{array}{cc}
	D & 0\\
	0 & D_p
	\end{array}
	\right],\\
	& F(u,v)= \frac{N_p+ N_{m.p}}{a_H+b_HN} \left[ 
	\begin{array}{c}
	-D^{-1}u  \\
	D_p^{-1}u  
	\end{array}
	\right]. 
	\end{split}
	\]
	Note that the spectrum of $L$ is $\sigma(L)= \sigma\left(D^{-1}G\right)\cup \sigma \left(D_p^{-1}(G_p-L_p)\right)$ and the spectral radius of matrices $D^{-1}G$ and $D_p^{-1}(G_p-L_p)$ are given by
	\[
	\begin{split}
	& r(D^{-1}G)= \frac{1}{2B_0} \left\{\mathcal{R}_s +\mathcal{R}_m + \left[\left(\mathcal{R}_s -\mathcal{R}_m \right)^2 +4 \mathcal{K}_{m\to s}  \mathcal{K}_{s\to m} \right]^{1/2}   \right\},\\
	& r(D_p^{-1}(G_p-L_p))= \frac{1}{2B_0} \left\{\mathcal{R}_p +\mathcal{R}_{m.p} + \left[\left(\mathcal{R}_p -\mathcal{R}_{m.p} \right)^2 +4 \mathcal{K}_{m.p\to p}  \mathcal{K}_{p\to m.p} \right]^{1/2}   \right\}.
	\end{split}
	\]

	We now introduced a parametric representation of the stationary state $E^*$ with respect to the small parameter $\alpha$. Using the Lyapunov-Schmidt expansion (see \cite{Cushing1998} for more details), the expanded variables are
	\begin{equation}\label{expand1}
	\begin{split}
	&u= u^0+ \alpha u^1+ \cdots,\\
	&v= v^0+ \alpha v^1+ \cdots,\\
	&B=  b^0+ \alpha b^1+ \cdots,\\
	& F(u,v)= \frac{N_p^0+ N_{m.p}^0}{a_H+b_HN^0} \left[ 
	\begin{array}{c}
	-D^{-1}u^0  \\
	D_p^{-1}u^0 
	\end{array}
	\right] + \cdots,
	\end{split}
	\end{equation}
	with  $u^0=(N_s^0,N_m^0)$, $v^0=(N_p^0,N_{m.p}^0)$ and $N^0=N_s^0+N_m^0+N_p^0+N_{m.p}^0.$
	
	Evaluating the substitution of expansions \eqref{expand1} into the
	eigenvalue equation \eqref{eq-for-psi-fullEqui} at $\mathcal{O}(\alpha^0)$ produces $b_0 L (u^0,v^0)= (u^0,v^0)$, {\it i.e.}
	\begin{equation}\label{system-u0v0-fullEqui}
	\left\{\begin{split}
	& b^0D^{-1}Gu^0 + b^0D^{-1}L_pv^0=u^0,\\
	& b^0D_p^{-1}(G_p-L_p)v^0=v^0.
	\end{split}\right.
	\end{equation}
	Since we are interesting for $v_0>0$, System \eqref{system-u0v0-fullEqui} leads to $b^0 D_p^{-1}(G_p-L_p)v^0=v^0$. The irreducibility of the matrix $D_p^{-1}(G_p-L_p)$ gives that $(1/b_0,v^0)$ is the principal eigenpair of $D_p^{-1}(G_p-L_p)$. That is $1/b^0= r(D_p^{-1}(G_p-L_p))$ and $v^0=  c_0 \frac{\varphi_0}{\|\varphi_0\|_1} $, wherein $c_0$ is a positive constant and\\ $\varphi_0= \left(\mathcal{R}_p -\mathcal{R}_{m.p} + \left[\left(\mathcal{R}_p -\mathcal{R}_{m.p} \right)^2 +4 \mathcal{K}_{m.p\to p}  \mathcal{K}_{p\to m.p} \right]^{1/2}  ,2 \mathcal{K}_{p\to m.p}\right)^T.$
	
	Again, System \eqref{system-u0v0-fullEqui} gives 
	\begin{equation}\label{eq-u0-fullEqui}
	(D^{-1}G -1/b^0\I)u^0 = -D^{-1}L_pv^0. 
	\end{equation}
	Since $D^{-1}L_pv^0 = \text{diag} \left(\mathcal{K}_{p\to s}, \mathcal{K}_{m.p\to m}\right)v^0$ which is positive, equation \eqref{eq-u0-fullEqui} can be solved for $u_0>0$ iff $1/b^0> r\left(D^{-1}G\right)$ and so $u^0 = -(D^{-1}G -1/b^0\I)^{-1} D^{-1}L_pv^0.$ 
	
	From the $\dot B$-equation of the model, the term of order $\mathcal{O}(\alpha^0)$ leads to
	\[
	c_0= \frac{\|\varphi_0\|_1  d\left[B_0/b^0-1 \right] } { \left< \beta, \left(-(D^{-1}G -1/b^0\I)^{-1} D^{-1}L_p\varphi_0, \varphi_0\right) \right> }>0 \Longleftrightarrow  B_0/b^0=B_0r(D_p^{-1}(G_p-L_p)) >1.
	\]

	Consequently, it comes $b_0>0$ and $(u^0,v^0)>0$ are such that 
	\[
	\left\{\begin{split}
	& 1/b^0=r(D_p^{-1}(G_p-L_p)),\\
	& v^0= \frac{  d\left[B_0/b^0-1 \right] } { \left< \beta, \left(-(D^{-1}G -1/b^0\I)^{-1} D^{-1}L_p\varphi_0, \varphi_0\right) \right> } \varphi_0,\\
	& u^0 = -(D^{-1}G -1/b^0\I)^{-1} D^{-1}L_pv^0,
	\end{split}\right.
	\]
	conditioned by 
	\begin{equation}\label{condi-full-Equi}
	r(D_p^{-1}(G_p-L_p))> r\left(D^{-1}G\right) \quad \text{ and } \quad B_0r(D_p^{-1}(G_p-L_p)) >1.
	\end{equation}
	
	Again, evaluating the substitution of expansions \eqref{expand1} into the
	eigenvalue equation \eqref{eq-for-psi-fullEqui} at $\mathcal{O}(\alpha)$ produces 
	\begin{equation*}
	\left(b^0L-\I\right)(u^1,v^1) =- \frac{b^1}{b^0} (u^0,v^0) - \frac{N_p^0+ N_{m.p}^0}{a_H+b_HN^0} \left[ 
	\begin{array}{c}
	-D^{-1}u^0  \\
	D_p^{-1}u^0 
	\end{array}
	\right].
	\end{equation*}
	that is
	\begin{subequations}\label{eq-order-psi1} 
		\begin{empheq}[left=\empheqlbrace]{align}
		& b^0D^{-1}Gu^1 + b^0D^{-1}L_pv^1-u^1= - \frac{b^1}{b^0} u^0 + \frac{N_p^0+ N_{m.p}^0}{a_H+b_HN^0} D^{-1}u^0, \label{eq-order-psi1-u1}\\ 
		& b^0D_p^{-1}(G_p-L_p)v^1-v^1= - \frac{b^1}{b^0} v^0 - \frac{N_p^0+ N_{m.p}^0}{a_H+b_HN^0} D_p^{-1}u^0. \label{eq-order-psi1-v1}
		\end{empheq}
	\end{subequations}

	As $1/b^0$ is a characteristic value of $D_p^{-1}(G_p-L_p)$, $(b^0D_p^{-1}(G_p-L_p)-\I )$ is a singular matrix. Thus, for \eqref{eq-order-psi1-v1} to have a solution, the right-hand side of  \eqref{eq-order-psi1-v1} must be orthogonal to the null space of the adjoint $(b^0D_p^{-1}(G_p-L_p)-\I )^{T}$ of $(b^0D_p^{-1}(G_p-L_p)-\I )$. The null space of $(b^0D_p^{-1}(G_p-L_p)-\I )^{T}$ is spanned by $\omega_0$, where $\omega_0^T$ is the eigenvector of $(D_p^{-1}(G_p-L_p))^T$ corresponding to the eigenvalue  $1/b^0$ and normalized such that $\|\omega_0\|_1=1$. The Fredholm condition for the solvability of \eqref{eq-order-psi1-v1} is $  \omega_0 \cdot \left(\frac{b_1}{b_0} v^0 + \frac{N_p^0+ N_{m.p}^0}{a_H+b_HN^0} D_p^{-1}u^0 \right)= 0$. Which gives 
	\[
	b^1= -b^0 \frac{N_p^0+ N_{m.p}^0}{a_H+b_HN^0} \frac{ \omega_0\cdot D_p^{-1}u^0 }{\omega_0 \cdot v^0}.
	\]
	
	\[
	\left\{\begin{split}
	&(N_s,N_m)= u^0+ \alpha u^1+ \mathcal{O}(\alpha^2),\\
	&(N_p,N_{m.p})= v^0+ \alpha v^1+ \mathcal{O}(\alpha^2),\\
	&B=  b^0+ \alpha b^1+ \mathcal{O}(\alpha^2).
	\end{split}\right.
	\]
	
	\paragraph{Approximation of $E^*$ for small mutation rates.} Here we derive a simple approximation of the stationary state $E^*$ when mutation rates $\varepsilon_v$ are small.  Without loss of generality, we express parameters $\epsilon_j$ as functions of the same quantity, let us say $\eta$, with $\eta \ll 1$. First, we have
	\[
	\begin{split}
	   1/b^0= &r(D_p^{-1}(G_p-L_p))= B_0^{-1} \max \left( \mathcal{R}_p, \mathcal{R}_{m.p} \right)+ \mathcal{O}(\eta^2)\\
	   &= (1-\theta)(1-\eta)  B_0^{-1} \max \left( \mathcal{T}_p, \mathcal{T}_{m.p} \right)+ \mathcal{O}(\eta^2).
	\end{split}
	\]

	Next, we consider two cases $\mathcal{R}_p> \mathcal{R}_{m.p}$ and $\mathcal{R}_p< \mathcal{R}_{m.p}$, {\it i.e. } $\mathcal{T}_p> \mathcal{T}_{m.p} $ and $\mathcal{T}_p< \mathcal{T}_{m.p}$ for small mutations $\varepsilon_j$'s.
	
	{\it Case: $\mathcal{T}_p> \mathcal{T}_{m.p}$.} Recall that condition \eqref{condi-full-Equi} for the existence of the stationary state $E^*$ simply rewrites 
	\begin{equation*}
	\mathcal{R}_p> \max \left( \mathcal{R}_s, \mathcal{R}_m \right)\quad \text{ and } \quad \mathcal{R}_p >1,
	\end{equation*}
	which, for small $\eta$, rewrites
	\begin{equation*}
	(1-\theta)\mathcal{T}_p> \max \left( \mathcal{T}_s, \mathcal{T}_m \right)\quad \text{ and } \quad (1-\theta)\mathcal{T}_p >1.
	\end{equation*}
	We have $\varphi_0= \left((\mathcal{T}_p -\mathcal{T}_{m.p})(1-\theta)(1-\eta) , \eta B_0\tau_p\beta_p/d_{m.p}\right)^T + \mathcal{O}(\eta^2)$ and $\omega_0= \left( (\mathcal{T}_p -\mathcal{T}_{m.p})(1-\theta)(1-\eta) , \eta B_0\tau_{m.p}\beta_{m.p}/d_p \right) + \mathcal{O}(\eta^2)$. Which gives the following approxiamtion of the stationary state $E^*$:
	\begin{equation}\label{esti-Estar-Rp}
	\begin{split}
	& B^*=  \frac{B_0 }{(1-\theta)\mathcal{T}_p} \left(1+ \eta \right) + \mathcal{O}(\eta^2),\\
	& (N_p^*,N_{m.p}^*)=   \frac{c_0}{\Delta^\eta_{p-m.p} + \eta \frac{B_0\tau_p\beta_p}{d_{m.p}}} \left(\Delta^\eta_{p-m.p} , \eta \frac{B_0\tau_p\beta_p}{d_{m.p}} \right)^T   + \mathcal{O}(\eta^2),\\
	&(N_s^*,N_m^*)=  \frac{B_0c_0 \theta} {\Delta^\eta_{p-s}  \left(\Delta^\eta_{p-m.p} + \eta \frac{B_0\tau_p\beta_p}{d_{m.p}}\right)} \times\\
	&   \left( \Delta^\eta_{p-m.p}  \frac{B_0\tau_p\beta_p}{d_s}, \eta \frac{\Delta^\eta_{p-m.p}  \frac{B_0\tau_s\beta_s}{d_m}   \frac{B_0\tau_p\beta_p}{d_s} + \Delta^\eta_{p-s}  \frac{B_0\tau_{m.p}\beta_{m.p}}{d_m}  \frac{B_0\tau_p\beta_p}{d_{m.p}} } { \Delta^\eta_{p-m} } \right)^T+ \mathcal{O}(\eta^2) 
	\end{split}
	\end{equation}
	with $\Delta^\eta_{p-m.p}= (\mathcal{T}_p -\mathcal{T}_{m.p})(1-\theta)(1-\eta)$, $\Delta^\eta_{p-m}= \mathcal{T}_p (1-\theta)(1-\eta)- \mathcal{T}_m (1-\eta)$, $\Delta^\eta_{p-s}= \mathcal{T}_p (1-\theta)(1-\eta)- \mathcal{T}_s (1-\eta)$, $c_0= \frac{d \left(\mathcal{R}_p -\mathcal{R}_{m.p} + \mathcal{K}_{p\to m.p}\right) \left(\mathcal{R}_p-1 \right) } { \beta^* }>0$ and  
	\[
	\begin{split}
	\beta^*= &\beta_p (\mathcal{R}_p-\mathcal{R}_{m.p}) + \beta_{m.p} \mathcal{K}_{p\to m.p}+ \frac{B_0}{ (\mathcal{R}_s-\mathcal{R}_p) (\mathcal{R}_m-\mathcal{R}_p)} \left\{ \beta_s (\mathcal{R}_p-\mathcal{R}_m)(\mathcal{R}_p-\mathcal{R}_{m.p})  \mathcal{K}_{p\to s}\right.\\
	&\left. +\beta_m \left[(\mathcal{R}_p-\mathcal{R}_{m.p})\mathcal{K}_{s\to m} \mathcal{K}_{p\to s} + (\mathcal{R}_p-\mathcal{R}_s)\mathcal{K}_{m.p\to m} \mathcal{K}_{p\to m.p}\right] \right\}.
	\end{split}
	\]
	
	From where, $P_R(E^*)= \mathcal{P}^*_m+ \mathcal{P}^*_p+ \mathcal{P}^*_{m.p}$, with $\mathcal{P}^*_m= \eta \frac{B_0 \theta} {\Delta^\eta_{p-s}  } 
	   \frac{\Delta^\eta_{p-m.p}  \frac{B_0\tau_s\beta_s}{d_m}   \frac{B_0\tau_p\beta_p}{d_s} + \Delta^\eta_{p-s}  \frac{B_0\tau_{m.p}\beta_{m.p}}{d_m}  \frac{B_0\tau_p\beta_p}{d_{m.p}} } { \Delta^\eta_{p-m} N^\eta }$, $\mathcal{P}^*_p= \frac{\Delta^\eta_{p-m.p}}{N^\eta}$, $\mathcal{P}^*_{m.p}= \eta \frac{B_0\tau_p\beta_p}{d_{m.p} N^\eta}$ and\\ 
	   $N^\eta=\Delta^\eta_{p-m.p} + \eta \frac{B_0\tau_p\beta_p}{d_{m.p}} +  \frac{B_0 \theta} {\Delta^\eta_{p-s}   } 
	 \left( \Delta^\eta_{p-m.p}  \frac{B_0\tau_p\beta_p}{d_s}+ \eta \frac{\Delta^\eta_{p-m.p}  \frac{B_0\tau_s\beta_s}{d_m}   \frac{B_0\tau_p\beta_p}{d_s} + \Delta^\eta_{p-s}  \frac{B_0\tau_{m.p}\beta_{m.p}}{d_m}  \frac{B_0\tau_p\beta_p}{d_{m.p}} } { \Delta^\eta_{p-m} } \right).$ 
	 
	 With the Taylor expansion it comes 
	 \[
	 \begin{split}
	 \mathcal{P}^*_m= & \eta \frac{B_0 \theta} {\mathcal{T}_p (1-\theta)- \mathcal{T}_s} 
	   \frac{ (\mathcal{T}_p -\mathcal{T}_{m.p})(1-\theta) \frac{B_0\tau_s\beta_s}{d_m}   \frac{B_0\tau_p\beta_p}{d_s} + \left(\mathcal{T}_p (1-\theta)- \mathcal{T}_s\right)  \frac{B_0\tau_{m.p}\beta_{m.p}}{d_m}  \frac{B_0\tau_p\beta_p}{d_{m.p}} } {  \left(\mathcal{T}_p (1-\theta)- \mathcal{T}_m \right) (\mathcal{T}_p -\mathcal{T}_{m.p})(1-\theta) \left( 1 + \frac{B_0 \theta} {\mathcal{T}_p (1-\theta)- \mathcal{T}_s}   \frac{B_0\tau_p\beta_p}{d_s} \right) }+ \mathcal{O}(\eta^2), \\
	   \mathcal{P}^*_p= & \frac{1}{\left( 1 + \frac{B_0 \theta} {\mathcal{T}_p (1-\theta)- \mathcal{T}_s}   \frac{B_0\tau_p\beta_p}{d_s} \right)} + \mathcal{O}(\eta),\\
	   \mathcal{P}^*_{m.p}=& \eta \frac{B_0\tau_p\beta_p}{d_{m.p} (\mathcal{T}_p -\mathcal{T}_{m.p})(1-\theta) \left( 1 + \frac{B_0 \theta} {\mathcal{T}_p (1-\theta)- \mathcal{T}_s}   \frac{B_0\tau_p\beta_p}{d_s} \right)}+ \mathcal{O}(\eta^2).
	 \end{split}
	 \]

	{\it Case: $\mathcal{T}_p< \mathcal{T}_{m.p}$.} Again, condition \eqref{condi-full-Equi} for the existence of the stationary state $E^*$ becomes 
	\begin{equation*}
	\mathcal{R}_{m.p}> \max \left( \mathcal{R}_s, \mathcal{R}_m \right)\quad \text{ and } \quad \mathcal{R}_{m.p} >1,
	\end{equation*}
	which, for sufficiently $\eta$, rewrites
	\begin{equation*}
	(1-\theta)\mathcal{T}_{m.p}> \max \left( \mathcal{T}_s, \mathcal{T}_m \right)\quad \text{ and } \quad (1-\theta)\mathcal{T}_{m.p} >1.
	\end{equation*}
	
	Again, by Taylor expansion, we find 
	\begin{equation}\label{esti-Estar-Rmp}
	\begin{split}
	   & B^*=  \frac{B_0 }{(1-\theta)\mathcal{T}_{m.p}} \left(1+ \eta \right) + \mathcal{O}(\eta^2),\\
	   &\text{and}\\
	   & v^0=  c_0 \left(0,  1\right) + \mathcal{O}(\eta^2).
	\end{split}
	\end{equation}
Since $u^0=  -(D^{-1}G -1/b^0\I)^{-1} D^{-1}L_pv^0,$ we then have 
\[
	u^0=  \frac{c_0 \theta B_0^2 \tau_p\beta_p }{d_m \left[\mathcal{T}_{m.p}(1-\theta)(1-\eta)-\mathcal{T}_m(1-\eta) \right]  } \left(0,1\right) + \mathcal{O}(\eta^2).
	\]
	From where, $P_R(E^*)= \mathcal{P}^*_m+ \mathcal{P}^*_p+ \mathcal{P}^*_{m.p}$, with
	\[
	\begin{split}
	\mathcal{P}^*_m=& \frac{\theta B_0^2 \tau_p\beta_p} {\theta B_0^2 \tau_p\beta_p +d_m \left[\mathcal{T}_{m.p}(1-\theta)-\mathcal{T}_m \right] }  + \mathcal{O}(\eta),\\
	\mathcal{P}^*_p= &  \mathcal{O}(\eta),\\
	 \mathcal{P}^*_{m.p}= & \frac{d_m \left[\mathcal{T}_{m.p}(1-\theta)-\mathcal{T}_m \right]} {\theta B_0^2 \tau_p\beta_p +d_m \left[\mathcal{T}_{m.p}(1-\theta)-\mathcal{T}_m \right] } + \mathcal{O}(\eta).
	\end{split}
	\]

\section{Proof of Theorem \ref{thm-stab-reults}}\label{proof of thm-stab-reults}
%\begin{proof}[Proof of Theorem \ref{thm-stab-reults}]
	The linearized system at a given stationary state $E^*=(B^*,u^*,v^*)$ writes 
	\begin{equation}\label{eq-GenModel-Linear}
	\left(\dot B,\dot u,\dot v\right)^T=J[E^*] \left(B,u,v\right)^T
	\end{equation}
	wherein $J[E^*]$ is defined by the Jacobian matrix associated to \eqref{eq-GenModel-Linear} and is given by 
	\begin{equation}\label{eq-JacobMatrix}
	\begin{split}
	& J[E^*]= \left(\begin{array}{ccc}
	-\left[d + \left<\beta,(u^*,v^*)\right>\right] & - B^* (\beta_s,\beta_m) & - B^* (\beta_p,\beta_{m.p})\\
	Gu^*+L_pv^* & B^*G-D- M[E^*] & -N[E^*] +B^*L_p\\
	(G_p-L_p)v^* & M[E^*] & B^*(G_p-L_p)-D_p +N[E^*]
	\end{array} \right),\\
	& M[E^*]= \left(\begin{array}{cc}
	H(N^*)(N^*_p+N^*_{m.p}) + \xi^*_s  & \xi^*_s  \\
	\xi^*_m & H(N^*)(N^*_p+N^*_{m.p}) + \xi^*_m
	\end{array} \right),\\
	& N[E^*]= \left(\begin{array}{cc}
	H(N^*)N^*_s+ \xi^*_s &H(N^*)N^*_s+ \xi^*_s \\
	H(N^*)N^*_m+ \xi^*_m & H(N^*)N^*_m + \xi^*_m
	\end{array} \right),
	\end{split}
	\end{equation}
	with $\xi^*_s=H'(N^*)N^*_s(N^*_p+N^*_{m.p}),$ and $\xi^*_m=H'(N^*)N^*_m(N^*_p+N^*_{m.p})$. Without loss of generality, and when necessary, we express parameters $\varepsilon_j$ and $\alpha$ as functions of the same quantity, let us say $\eta$, with $\eta\ll 1$.
	
	Recall that the stability modulus of a matrix $M$ is $s_0(M)=\{\max \text{Re}(z): z\in \sigma(M) \}$ and $M$ is said to be locally asymptotically stable (l.a.s.) if $s_0(M)<0$ \cite{LiWang1998}. 
	
	%\begin{equation}
	%\left\{
	%\begin{split}
	%\dot B(t)=& -B \left[d + \left<\beta,(u^*,v^*)^T\right>\right]- B^* \left<\beta,(u,v)^T\right> ,\\
	%\dot u=& \left[B^*G-D-H\left(\left<1,(u^*,v^*)^T\right> \right)  \left< 1,v^* \right>  \right] u+ BGu^*\\
	%&  -H\left(\left<1,(u^*,v^*)^T\right> \right) u^*\left< 1,v \right>  -H'\left(\left<1,(u^*,v^*)^T\right> \right) u^*\left< 1,v^* \right> \left<1,(u,v)^T\right>  + B^*L_pv+ BL_pv^*\\
	%\dot v=& \left[B^*(G_p-L_p)-D_p +H\left(\left<1,(u^*,v^*)^T\right> \right) u^*\left< 1,\cdot \right> \right]v+ B(G_p-L_p)v^*\\
	%& +H\left(\left<1,(u^*,v^*)^T\right> \right) \left< 1,v^* \right> u +H'\left(\left<1,(u^*,v^*)^T\right> \right) u^*\left< 1,v^* \right> \left<1,(u,v)^T\right> .
	%\end{split}
	%\right.
	%\end{equation}
	
	\paragraph{Stablity of $E^*_{s-m}$.} At point $E^*_{s-m}$, the Jacobian matrix $J[E^*_{s-m}]$ writes 
	\[
	J[E^*_{s-m}]= \left(\begin{array}{cc}
	Z[E^*_{s-m}] & X[E^*_{s-m}]\\
	0 &  Y[E^*_{s-m}]
	\end{array} \right),
	\]
	with 
	\[
	\begin{split}
	& Z[E^*_{s-m}]= \left(\begin{array}{cc}
	-\left[d + \left<(\beta_s,\beta_m),u^*\right>\right] & - B^* (\beta_s,\beta_m) \\
	Gu^* & B^*G-D 
	\end{array} \right),\\
	& X[E^*_{s-m}]= \left(\begin{array}{c}
	- B^* (\beta_p,\beta_{m.p})\\
	-N[E^*_{s-m}] +B^*L_p
	\end{array} \right),\\
	& Y[E^*_{s-m}]=  B^*(G_p-L_p)-D_p +N[E^*_{s-m}].
	\end{split}
	\]

	Then, $E^*_{s-m}$ is unstable if $s_0 \left( Y[E^*_{s-m}]\right)>0$ and a necessary condition for the stability of $E^*_{s-m}$ is that $s_0 \left( Y[E^*_{s-m}]\right)<0$. By setting $h^*_j= \frac{N^*_j}{a_H+b_HN^*}$, with $j=s,m$; note that 
	\[
	Y[E^*_{s-m}]= 
	\left[
	\begin{array}{cc}
	B^*\tau_p\beta_p(1-\varepsilon_p)(1-\theta) -d_p +\alpha h^*_s&  B^*\varepsilon_{m.p}(1-\theta)\tau_{m.p}\beta_{m.p} +\alpha h^*_s \\
	B^*\varepsilon_p(1-\theta)\tau_p\beta_p +\alpha h^*_m &  B^*\tau_{m.p}\beta_{m.p} (1-\varepsilon_{m.p})(1-\theta) -d_{m.p} +\alpha h^*_m
	\end{array}
	\right]
	\]
	and eigenvalues $z_1$, $z_2$ of $Y[E^*_{s-m}]$ are such that 
	\[
	\begin{split}
	z_1=& B^*\tau_p\beta_p(1-\varepsilon_p)(1-\theta) -d_p +\alpha h^*_s + \mathcal{O}(\eta^2)\\
	=& d_p \left( \frac{B^*}{B_0}\mathcal{R}_p +\frac{\alpha h^*_s}{d_p} -1\right)  + \mathcal{O}(\eta^2),\\
	z_2=& B^*\tau_{m.p}\beta_{m.p} (1-\varepsilon_{m.p})(1-\theta) -d_{m.p} +\alpha h^*_m + \mathcal{O}(\eta^2)\\
	=&d_{m.p} \left( \frac{B^*}{B_0}\mathcal{R}_{m.p} +\frac{\alpha h^*_m}{d_{m.p}} -1\right) +\mathcal{O}(\eta^2).
	\end{split}
	\]
	
	From where, for sufficiently small mutation and flux of HGT rates, we have 
	\[
	s_0(Y[E^*_{s-m}])<0 \Longleftrightarrow 
	\left\{
	\begin{split}
	& \frac{\mathcal{R}_p}{\max \left(\mathcal{R}_s,\mathcal{R}_m\right)} +\alpha \frac{ h^*_s}{d_p} <1,\\
	& \frac{\mathcal{R}_{m.p}}{\max \left(\mathcal{R}_s,\mathcal{R}_m\right)} + \alpha \frac{ h^*_m}{d_{m.p}} <1.
	\end{split}
	\right.
	\]
	
	Next, it remains to check the stability of the block matrix $Z[E^*_{s-m}]$: 
	\[
	Z[E^*_{s-m}]= \left(\begin{array}{ccc}
	-\left[d + \beta_sN^*_s+ \beta_m N^*_m\right] & - B^* \beta_s & -B^*\beta_m \\
	\tau_s\beta_s(1-\varepsilon_s)N^*_s +\varepsilon_m\tau_m\beta_mN^*_m & B^*\tau_s\beta_s(1-\varepsilon_s) -d_s & B^*\varepsilon_m\tau_m\beta_m\\
	\varepsilon_s\tau_s\beta_s N^*_s+ \tau_m\beta_m(1-\varepsilon_m)N^*_m & B^*\varepsilon_s\tau_s\beta_s & B^*\tau_m\beta_m(1-\varepsilon_m) -d_m
	\end{array} \right).
	\]
Again, recalling that $(N^*_s,N^*_m)= \frac{d\left(B_0r(D^{-1}G)-1\right)}{\left< (\beta_s,\beta_m),\phi_0\right>} \phi_0 $, the expansion of the matrix $Z[E^*_{s-m}]$ takes the form $	Z[E^*_{s-m}]= 	Z^0+ \eta Z^\eta$, wherein 
\[
	Z^0= \left(\begin{array}{ccc}
	-\left[d + \beta_s n_s^0 \right] & - b^0 \beta_s & -b^0\beta_m \\
	\tau_s\beta_sn^0_s  & b^0\tau_s\beta_s -d_s & 0\\
	 0 & 0 & b^0\tau_m\beta_m -d_m
	\end{array} \right).
	\]
with $n^0_s= \frac{c^0}{2} \left( \mathcal{T}_s- \mathcal{T}_m +|\mathcal{T}_s-\mathcal{T}_m | \right)$, $\frac{1}{b^0}= \frac{1}{2B_0} \left( \mathcal{T}_s+ \mathcal{T}_m +|\mathcal{T}_s-\mathcal{T}_m | \right)$, and where $c^0$ is a positive constant (which does not depends on $\eta$). We first assume that $\mathcal{T}_s > \mathcal{T}_m$. Then, eigenvalues $z^0_k$ ($k=1,2,3$) of $Z^0$ are such that $z^0_1= b^0\tau_m\beta_m -d_m$, and $z^0_2$, $z^0_3$ are solution of 
\begin{equation*}
z^2+ \left[d + \beta_s n_s^0+ d_s- b^0\tau_s\beta_s \right]z +b^0\tau_s\beta_s^2n^0_s+ (d + \beta_s n_s^0)( d_s- b^0\tau_s\beta_s)=0.
\end{equation*}
Since $d_s- b^0\tau_s\beta_s= d_s \left(1- \frac{2}{\mathcal{T}_s+ \mathcal{T}_m +|\mathcal{T}_s-\mathcal{T}_m |}   \mathcal{T}_s \right) \ge 0 $, coefficients of the previous polynomial are positive and so $\mathcal{R}_e(z^0_2)<0$, $\mathcal{R}_e(z^0_3)<0$. Moreover, $z^0_1= d_m \left(\frac{2}{\mathcal{T}_s+ \mathcal{T}_m +|\mathcal{T}_s-\mathcal{T}_m |}   \mathcal{T}_m -1 \right) \le 0$, from where $s_0(Z^0)<0$, which gives $s_0(Z[E^*_{s-m}])<0$ for $\eta$ sufficiently small \cite{Kato}. Consequently, for small $\eta$, $E^*_{s-m}$ is l.a.s if and only if  $s_0(Y[E^*_{s-m}])<0$.

\paragraph{Stability of $E^*$.} Let 
\[
\left\{
\begin{split}
&u^*=(x_s^*,x_m^*)+ \mathcal{O}(\eta),\\
&v^*=(x_p^*,x_{m.p}^*)+ \mathcal{O}(\eta),\\
&B^*=  b^*_0+ \mathcal{O}(\eta).
\end{split}
\right.
\]
Then $J[E^*]$ takes the form $J[E^*]= J^*_0 + \eta J^*_\eta$, with $J^*_0= $ 
\begin{equation}
\begin{split}
& \left(\begin{array}{ccccc}
-\left(d + \sum_j \beta_jx^*_j\right) & - b^*_0 \beta_s & - b^*_0\beta_m & - b^*_0 \beta_p & - b^*_0 \beta_{m.p}\\
\mu_s x^*_s +\theta \mu_p x^*_p& b^*_0\mu_s-d_s & 0 &b^*_0\theta \mu_p &0 \\
\mu_m x^*_m +\theta \mu_{m.p} x^*_{m.p} & 0 & b^*_0\mu_m -d_m & 0 & b^*_0\theta \mu_{m.p}\\
(1-\theta) \mu_p x^*_p & 0 &0 &b^*_0(1-\theta)\mu_p -d_p& 0\\
(1-\theta) \mu_{m.p} x^*_{m.p} & 0 &0 &0 & b^*_0(1-\theta)\mu_{m.p}-d_{m.p}
\end{array} \right).
\end{split}
\end{equation}
Recall that, for $\eta$ sufficiently small, the existence of $E^*$ is ensured by $$(1-\theta)\max \left(\mathcal{T}_p,\mathcal{T}_{m.p}\right) > \max \left(\mathcal{T}_s,\mathcal{T}_{m}\right).$$

If $\mathcal{T}_p> \mathcal{T}_{m.p}$, from \eqref{esti-Estar-Rp}, we have $b^*_0=\frac{B_0}{(1-\theta)\mathcal{T}_p}$, $x^*_m=x^*_{m.p}=0$, and $J^*_0$ rewrites 
$J^*_0= $ 
\begin{equation*}
\begin{split}
& \left(\begin{array}{ccccc}
-\left(d + \sum_j \beta_jx^*_j\right) & - b^*_0 \beta_s & - b^*_0\beta_m & - b^*_0 \beta_p & - b^*_0 \beta_{m.p}\\
\mu_s x^*_s +\theta \mu_p x^*_p& b^*_0\mu_s-d_s & 0 &b^*_0\theta \mu_p &0 \\
0 & 0 & b^*_0\mu_m -d_m & 0 & b^*_0\theta \mu_{m.p}\\
(1-\theta) \mu_p x^*_p & 0 &0 &0& 0\\
0 & 0 &0 &0 & b^*_0(1-\theta)\mu_{m.p}-d_{m.p}
\end{array} \right).
\end{split}
\end{equation*}
Denoting by $\sigma(\cdot)$ the spectrum of a given matrix, we have $\sigma(J^*_0)=\left\{b^*_0(1-\theta)\mu_{m.p}-d_{m.p}, b^*_0\mu_m-d_m \right\} \cup \sigma(Y^*_0)$, where 
\begin{equation*}
Y^*_0=\left(\begin{array}{ccc}
-\left(d + \sum_j \beta_jx^*_j\right) & - b^*_0 \beta_s & - b^*_0 \beta_p \\
\mu_s x^*_s +\theta \mu_p x^*_p& b^*_0\mu_s-d_s &b^*_0\theta \mu_p \\
(1-\theta) \mu_p x^*_p & 0 &0
\end{array} \right).
\end{equation*}
Since $b^*_0(1-\theta)\mu_{m.p}-d_{m.p}= d_{m.p} \left(\mathcal{T}_{m.p}/\mathcal{T}_p-1\right)<0 $ and $b^*_0\mu_{m}-d_m= d_{m} \left(\mathcal{T}_{m}/((1-\theta)\mathcal{T}_p)-1\right)<0$, then $E^*$ is l.a.s. iff $s_0\left(Y^*_0\right)<0$. The characteristic polynomial of $Y^*_0$ writes $\pi(z)=z^3 +\pi_2 z^2 + \pi_1 z +\pi_0$, with 
\[
\begin{split}
& \pi_2= d +  \beta_sx^*_s + \beta_px^*_p+ d_s-b^*_0\mu_s,\\
&\pi_1= (d +  \beta_sx^*_s + \beta_px^*_p)(d_s-b^*_0\mu_s) +b_0^*\beta_s (\mu_s x^*_s +\theta \mu_p x^*_p) +b_0^*\beta_p(1-\theta) \mu_p x^*_p,\\
&\pi_0= (1-\theta) \mu_p x^*_p \left((b_0^*)^2\beta_s\theta\mu_p + (d_s-b^*_0\mu_s)b_0^*\beta_p \right).
\end{split}
\]
By the Routh–Hurwitz stability criterion we deduce that $\pi$ is a stable polynomial ({\it i.e.} $s_0\left(Y^*_0\right)<0$), and then $E^*$ is l.a.s..

If $\mathcal{T}_p< \mathcal{T}_{m.p}$, again from \eqref{esti-Estar-Rmp}, we have $b^*_0=\frac{B_0}{(1-\theta)\mathcal{T}_{m.p}}$, $x^*_s=x^*_{p}=0$, and the characteristic polynomial of $J^*_0$ writes 
$|J^*_0 -z \mathbb{I}|= (b^*_0\mu_s-d_s-z ) (b^*_0(1-\theta)\mu_p -d_p-z) \times$ 
\begin{equation*}
\left|\begin{array}{ccccc}
-\left(d + \sum_j \beta_jx^*_j\right)-z & - b^*_0\beta_m &  - b^*_0 \beta_{m.p}\\
\mu_m x^*_m +\theta \mu_{m.p} x^*_{m.p}  & b^*_0\mu_m -d_m -z  & b^*_0\theta \mu_{m.p}\\
(1-\theta) \mu_{m.p} x^*_{m.p}  &0 & -z
\end{array} \right|.
\end{equation*}
Same arguments as previously lead to the local stability of $E^*$.

%\end{proof}

\section{The model with mutations depending on the abundance of the parental cells}
When the occurrence of new mutants depends on the abundance of the parental cells, the model writes 
\begin{equation}\label{eq-GenModel2}
\left\{
\begin{aligned}
\dot B(t)=& \Lambda-dB -B\sum_{j \in \mathcal{J}}\beta_jN_j ,\\
\dot N_s(t)=&\tau_s\beta_s BN_s  -(d_s+\varepsilon_s)N_s +\varepsilon_mN_m + \theta\tau_p\beta_p BN_p    - H(N) N_s[N_p +N_{m.p}] ,\\
\dot N_m(t)=& \tau_m\beta_m  BN_m  - (d_m+\varepsilon_m)N_m + \varepsilon_s N_s +\theta\tau_{m.p}\beta_{m.p} BN_{m.p}    - H(N) N_m[N_{m.p}+N_p],\\
\dot N_p(t)=& \tau_p\beta_p(1-\theta) BN_p  - (d_p+\varepsilon_p)N_p +\varepsilon_{m.p} N_{m.p}  +H(N) N_s[N_p +N_{m.p}] ,\\
\dot N_{m.p}(t)=& \tau_{m.p}\beta_{m.p}(1-\theta) BN_{m.p}  - (d_{m.p}+\varepsilon_{m.p}) N_{m.p}  + \varepsilon_p N_p + H(N) N_m[N_{m.p}+N_p],
\end{aligned}
\right.
\end{equation} 
wherein state variables and model parameters are the same as for Model \eqref{eq-GenModel}.

\end{appendix}

\end{document}